\theoremstyle{plain}
\newtheorem{lemma}{Lemma}
\newtheorem{theorem}{Theorem}
\newtheorem{corollary}{Corollary}
\newtheorem{prop}{Proposition}
\newtheorem{clm}{Claim}
\newtheorem*{theorem*}{Theorem}
\newtheorem*{prop*}{Proposition}
\newtheorem*{lemma*}{Lemma} 
\renewenvironment{proof}{\noindent {\textbf{Proof }}}{$\Box$ \medskip}
\newenvironment{proofs}{\noindent {\emph{Proof Sketch.}}}{$\Box$ \medskip}
\DeclareMathOperator{\Tr}{Tr}
\newcommand{\EqDef}{:=}
\newcommand{\Eq}[1]{Eq.~(\ref{#1})}
\newcommand{\cc}[1]{\mathcal{#1}}
\newcommand{\HH}[1]{\mbox{\rmfamily\textsc{H--#1}}}
\newcommand{\yes}{\mbox{\rmfamily\textsc{yes}}}
\newcommand{\no}{\mbox{\rmfamily\textsc{no}}}
\newcommand{\id}{\mbox{\rmfamily\textsc{id}}}
\newcommand{\unsat}{\mbox{\rmfamily\textsc{unsat}}}
\newcommand{\poly}{\operatorname{poly}}
\renewcommand{\exp}{\operatorname{exp}}
\renewcommand{\P}{\mathrm{P}}
\newcommand{\NP}{\mathrm{NP}}
\newcommand{\QMA}{\mathrm{QMA}}
\newcommand{\MA}{\mathrm{MA}}
\newcommand{\PP}{\mathrm{PP}}
\newcommand{\xbar}{\overline{x}}
\newcommand{\bx}{\bar{x}}
\newcommand{\sat}{\text{SAT}}
\newcommand{\ket}[2]{| #1 \rangle_{#2}}
\newcommand{\bra}[2]{\langle #1 |_{#2}}
\newcommand{\bk}[2]{\ket{#1}{}\bra{#1}{#2}}
\newcommand{\norm}[1]{\left\| #1 \right\|}
\newcommand\qip[2]{\langle #1 | #2 \rangle}
\newcommand{\SAT}[1]{\mathrm{#1SAT}}
\newcommand{\QSAT}[1]{\mathrm{#1QSAT}}
\newcommand{\HSAT}[1]{\HH{$\SAT{#1}$}}
\newcommand{\HQSAT}[1]{\HH{$\QSAT{#1}$}}
\newcommand{\WIDE}{\mathrm{WIDESAT}}
\definecolor{edcolor}{rgb}{0,0.8,0.3}
\begin{document}
\setlength{\abovedisplayskip}{0pt}
\setlength{\belowdisplayskip}{0pt}
\setlength{\abovedisplayshortskip}{0pt}
\setlength{\belowdisplayshortskip}{0pt} 
\setlength{\textfloatsep}{1pt}

\title{On the complexity of probabilistic trials for hidden satisfiability problems}
\author[1]{Itai Arad\thanks{email: arad.itai@fastmail.com}}
\author[2]{Adam Bouland\thanks{email: adam@csail.mit.edu}}
\author[2]{Daniel Grier\thanks{email: grierd@mit.edu}}
\author[1,3]{Miklos Santha\thanks{email: miklos.santha@gmail.com}}
\author[1]{Aarthi Sundaram\thanks{email: aarthims@gmail.com}}
\author[4]{Shengyu Zhang\thanks{email: syzhang@cse.cuhk.edu.hk}}
\affil[1]{Center for Quantum Technologies, National University of Singapore, Singapore}
\affil[2]{Massachusetts Institute of Technology, Cambridge, MA USA}
\affil[3]{CNRS, IRIF, Universit\'e Paris Diderot 75205 Paris, France}
\affil[4]{Department of Computer Science and Engineering, The Chinese University of Hong Kong, Shatin, N.T., Hong Kong}

\maketitle

\begin{abstract}
What is the minimum amount of information and time needed to solve 2SAT? When the instance is known, it can be solved in polynomial time, but is this also possible  without knowing the instance? Bei, Chen and Zhang (STOC '13) considered a model where the input is accessed by proposing possible assignments to a special oracle. This oracle, on encountering some constraint unsatisfied by the proposal, returns only the constraint index. It turns out that, in this model, even 1SAT cannot be solved in polynomial time unless $\P=\NP$. Hence, we consider a model in which the input is accessed by proposing probability distributions over assignments to the variables. The oracle then returns the index of the constraint that is most likely to be violated by this distribution. We show that the information obtained this way is sufficient to solve 1SAT in polynomial time, even when the clauses can be repeated. For 2SAT, as long as there are no repeated clauses, in polynomial time we can even learn an equivalent formula for the hidden instance and hence also solve it. Furthermore, we extend these results to the quantum regime. We show that in this setting 1QSAT can be solved in polynomial time up to constant precision, and 2QSAT can be learnt in polynomial time up to inverse polynomial precision.
\end{abstract}

\section{Introduction}
$\SAT{}$ has been a pivotal problem in theoretical computer science ever since the advent of the Cook-Levin Theorem \cite{Cook71, Lev73} proving its $\NP$-completeness. It has a wide array of applications in operations research, artificial intelligence and bioinformatics. Moreover, it continues to be studied under various specialized models such as as random-$\SAT{}$ and building efficient $\SAT{}$ solvers for real-life scenarios. In the complexity theoretic setting, we know that while $\SAT{3}$ is $\NP$-complete \cite{Cook71, Lev73}, $\SAT{2}$ can be solved in linear time \cite{Krom76,EIS76,APT79}. Given the fundamental nature of $\SAT{2}$, in this paper, we consider the following question: 
\begin{center}\emph{What is the minimum amount of information needed to solve $\SAT{2}$ in polynomial time?}\end{center}
More precisely, what happens if there is no direct access to the problem instance? Are there settings where one can \emph{solve} $\SAT{2}$ without ever \emph{learning} the instance under consideration? We can also pose the same question for the quantum setting where the quantum analogue of $\SAT{}$ ($\QSAT{}$) can be seen as a central problem in condensed matter physics. Complexity theoretically, we know that $\QSAT{2}$ can be solved in linear time \cite{ASSZ15,dBG15} while $\QSAT{3}$ is hard for $\QMA_1$ \cite{GN13}, where $\QMA_1$ is a quantum complexity class analogous to $\NP$.
We approach these questions through the ``trial and error'' model. 
In this model, one guesses a solution to an unknown constraint satisfaction problem and tests if it is valid. 
If so, then the problem is solved. Otherwise, the trial fails, and some information about what was wrong with the trial is revealed. This type of problem arises in a number of natural scenarios, in which one has incomplete or limited information about the problem they are trying to solve \cite{BCZ13}. For example, the CSP may be instantiated by a complex biological or physical process to which one does not have access. 

This approach to problem solving was first formalized by Bei, Chen and Zhang \cite{BCZ13}. They considered several types of CSPs and analyzed the computational complexity in the ``unknown input'' setting. Specifically, they consider an oracle model where one can propose solutions to the CSP, and if the solution is not satisfying, then the oracle reveals the identity of a constraint which was violated. For example, if the CSP is an instance of Boolean satisfiability ($\SAT{}$), then after an unsuccessful trial, one may learn that ``clause 7 was violated'', but not anything further. In particular, literals present in clause 7 will not be revealed - only the label of violated clause is known. Furthermore, if there are several violated constraints, then the oracle reveals only one of them, in a possibly adversarial manner. In this paper, we will refer to this as the ``arbitrary violated constraint'' oracle.

This model gives extremely limited access to the instance. In fact, Ivanyos et al. \cite{IKQSS14} showed that even if the underlying CSP is a $\SAT{1}$ instance, accessing it with the BCZ oracle, one cannot determine if it is satisfiable in polynomial time unless $\P=\NP$. This drastically increases the difficulty of deciding a trivial problem like $\SAT{1}$ (assuming $\P\neq \NP$). Interestingly, if there is access to a $\SAT{}$ solver, then $\SAT{1}$ (and even generic $\SAT{}$) in this setting can be solved with polynomially many trials \cite{BCZ13}. So in some sense, their model reveals a sufficient amount of \emph{information} to solve the $\SAT{1}$ instance. However, \emph{decoding} this information requires superpolynomial time (assuming $\P \neq \NP$). In short the information needed to solve the problem is present, but it is not accessible to poly-time algorithms.

In this paper, we ask if there are any meaningful modifications of their model which allow us to solve simple CSPs like $\SAT{1}$ and $\SAT{2}$ in polynomial time. A natural starting point is to randomize the ``arbitrary violated constraints'' model. One obvious way to do that is to consider allowing randomized queries to the 
oracle. This however does not significantly decrease the complexity of the problems. 
A second approach to randomize is to let the oracle return a violated clause at random.
Contrary to the previous approach, this model trivializes the problem, since
by repeating the same trial many times the oracle will reveal all violated clause indices with high probability. 
This in turn allows one to learn the entire instance, and therefore trivially, to solve 
$\SAT{1}$ and $\SAT{2}$.

Motivated by these unfruitful approaches 
we consider a model which does not allow one to completely learn the underlying instance,
but it still yields polynomial time algorithms for $\SAT{1}$ and $\SAT{2}$.
Specifically, in this model one can propose a probability distribution $D$ over assignments, and the oracle reveals the index of the clause which is most likely to be violated by this trial. If there are multiple clauses with the same probability of violation under $D$, then the oracle can break ties arbitrarily. In particular, product distributions over the variables suffice for our application, so one merely specifies the probability $p_i$ that each variable $x_i$ is set to 1 in the assignment, to $1/\poly$ precision. We show that in this model, there exist cases where one cannot learn the underlying $\SAT{1}$ or $\SAT{2}$ instance. However, despite this limitation, one can still solve 
in polynomial time  $\SAT{1}$ and a restricted version of $\SAT{2}$ where clauses are not repeated. 
In the course of the algorithm for the restricted version of $\SAT{2}$, we actually learn an equivalent formula with the same set of satisfying assignments. Furthermore, we are able extend this model to the quantum setting, 
and show that one can {solve}, 
in polynomial time, Quantum 1SAT ($\QSAT{1}$) up to constant precision.
We also show that in polynomial time we can
{learn} Quantum 2SAT ($\QSAT{2}$) up to inverse polynomial precision. This, however, seems insufficient
to solve the hidden instance in polynomial time due to some subtle precision issues, which we discuss in Section \ref{sec:2qsat}.

\vspace{-0.5em}
\subparagraph*{Relation to prior work.} As previously mentioned, Bei Chen and Zhang \cite{BCZ13} introduced the trial and error model. They considered several examples of CSPs and analyzed their complexity under the unknown input model with the ``arbitrary violated constraint'' oracle. With regards to $\SAT{}$, they showed an algorithm to solve hidden-$\SAT{}$ using polynomially many queries to the oracle (given access to a SAT oracle). Furthermore, they showed that one cannot efficiently learn generic SAT instances in this model, because it takes $\Omega(2^n)$ queries to the oracle to learn a clause involving all $n$ variables of the instance. 

Subsequently, Ivanyos \emph{et al.} \cite{IKQSS14} characterized the complexity of classical CSPs in several hidden input models. In particular, they consider the ``arbitrary violated constraint'' model described above, as well as models which reveal more information such as the variables involved in the violated clause or the relation of the violated clause. They show a generic ``transfer theorem'' which classifies the complexity of hidden versions of CSPs given properties of the base CSP. In particular, their transfer theorem implies that the hidden version of $\SAT{1}$ with arbitrary violated constraints cannot be solved in polynomial time unless $\P=\NP$. This indicates that the ``arbitrary violated constraint'' model is fairly restrictive.

In parallel, Bei, Chen and Zhang \cite{BCZ13b} considered a version of the trial and error model for linear programming. Suppose you have a linear program, and you are trying to determine whether or not it is feasible (By standard reductions this is as difficult as solving a generic LP). They consider a model in which one can propose a point, and the oracle will return the index of an arbitrary violated constraint (half-plane) in the linear program. They show that in this model, one requires exponentially many queries to the oracle to determine if an LP is feasible. However, they then consider a relaxation of this model, in which the oracle returns the index of the \emph{worst-violated} constraint, i.e. the half-plane which is furthest (in Euclidean distance) from the proposed point. Surprisingly, they show (using a variant of the ellipsoid algorithm) that one can still solve linear programs in this model in polynomial time. Our model can be seen as an analogue of the ``worst violated constraint'' model of Bei, Chen and Zhang \cite{BCZ13b} for the case of hidden $\SAT{}$ ($\HSAT{}$). 

\vspace{-0.5em}
\subparagraph*{Our Results.} Our results can be broken into several sections. First, we consider a relaxation of the ``arbitrary violated constraint'' model of Bei, Chen and Zhang \cite{BCZ13}, in which the oracle reveals which subset of clauses are violated by each assignment\footnote{This is equivalent to a model in which the oracle reveals a random violated clause - by repeating each query many times one can learn the set of violated clauses with high probability.}. We show that in this model, there exist simple algorithms to learn $\HSAT{1}$ or $\HSAT{2}$ instances, and hence solve them in polynomial time. In some sense these models are almost ``too easy'' as they allow you to easily learn the instance (See Section~\ref{sec:easy}).

We then explore the ``worst violated constraint'' model for the rest of the paper. We provide a toy example as to why this model is more powerful than the ``arbitrary violated constraint'' model of Bei, Chen and Zhang \cite{BCZ13}. They showed that it requires $\Omega(2^n)$ time to learn a $\SAT{}$ clause involving all $n$ variables. In Section~\ref{sec:wide} we show that it is possible to learn such a $\SAT{}$ clause in polynomial time in the ``worst violated constraint'' model. Among our main results is the analysis of the computational complexity of $\HSAT{1}$ and that of $\HSAT{2}$. 
\begin{theorem*}[Informal statement]
Given a hidden $\SAT{}$ formula $\Phi$ on $n$ variables and $n$ clauses, it is possible to find a satisfying assignment for $\Phi$ in polynomial time if $\Phi$ is a
\begin{inparaenum}[\upshape(\itshape a\upshape)]
\item $\SAT{1}$ formula or
\item $\SAT{2}$ formula with no repeated clauses.
\end{inparaenum}
\end{theorem*}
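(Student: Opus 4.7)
My plan is to prove each part by designing a polynomial-time algorithm that makes polynomially many queries to the ``worst violated constraint'' oracle.

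For part (a), the strategy is to identify for each variable $j \in [n]$ which literals ($x_j$ or $\bar{x}_j$) appear in $\Phi$, then construct an assignment accordingly. The core primitive is a ``literal detector'': set $p_j = 1$ and $p_k = 1/2$ for $k \ne j$. Under this distribution every $\bar{x}_j$ clause has violation probability $1$, every $x_j$ clause has probability $0$, and every other clause has probability $1/2$. Hence if $\bar{x}_j \in \Phi$ the oracle must return an index of some $\bar{x}_j$ clause, and if $\bar{x}_j \notin \Phi$ it returns some non-$j$ index. The symmetric query with $p_j = 0$ detects $x_j$.

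The difficulty, which I expect to be the main obstacle, is that we see only a clause index, not its literal, so we must disambiguate whether the returned index corresponds to a $\bar{x}_j$ clause or an adversarially chosen non-$j$ tied clause. A useful one-sided invariant is that if the same index $i$ is returned under both the $p_j = 1$ and $p_j = 0$ queries (others at $1/2$), then clause $i$ cannot mention variable $j$: otherwise its violation probability would jump between $0$ and $1$, incompatible with being the maximum in both queries (where non-$j$ clauses sit at $1/2$). To upgrade this to a two-sided test, I would use generic perturbations $p_k = 1/2 + 2^{-k}\epsilon$ that strictly separate the violation probabilities of all non-$j$ literal types, together with a constant number of comparison queries per variable, so that either the candidate literal is certified present (because the returned index is consistent with it across several perturbed queries) or is certified absent. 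After $O(n)$ queries this recovers $I_+$ and $I_-$; we output \emph{unsat} if $I_+ \cap I_- \ne \emptyset$, and otherwise verify the candidate assignment with a single final query.

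For part (b), my plan is to learn an equivalent $2$SAT formula and then run a classical linear-time $2$SAT solver such as Krom's algorithm. The no-repetition assumption is essential because it makes the map from clause indices to literal pairs injective. The core primitive is a ``pair detector'': for each candidate pair $(\ell_1, \ell_2)$ over distinct variables $j_1, j_2$, set $p_{j_1}, p_{j_2}$ so that $\Pr(\ell_1 = F) = \Pr(\ell_2 = F) = 1$ and set $p_k = 1/2$ for $k \ne j_1, j_2$. A clause whose literal pair equals $(\ell_1, \ell_2)$ then has violation probability $1$; a clause sharing exactly one literal with $(\ell_1, \ell_2)$ has violation probability at most $1/2$; every other $2$-clause has violation probability at most $1/4$. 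Hence the targeted clause, if present in $\Phi$, is the unique maximum and the oracle is forced to return its index. Iterating over the $O(n^2)$ candidate pairs and using small generic perturbations to disambiguate which returned index corresponds to which literal pair, we reconstruct an equivalent formula. The main obstacle here is controlling the $O(n)$ competing clauses that share one literal with the target, which I would defeat by asymmetric perturbations analogous to those in part (a), ensuring the target clause's probability $1$ is strictly separated from the next-highest level. Once the equivalent formula is in hand, the classical $2$SAT solver produces a satisfying assignment in polynomial time.
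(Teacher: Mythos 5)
Your part (a) has a genuine gap: the whole algorithm is premised on exactly recovering the literal sets $I_+$ and $I_-$, i.e.\ on \emph{learning} which literals occur in $\Phi$, via a claimed dichotomy that each candidate literal can be ``certified present or certified absent.'' The paper proves this is impossible when clauses may repeat (Proposition~\ref{clm:nolearn}): for $\Phi_1 = (x_1)\wedge(\bar{x}_1)\wedge(x_1)\wedge(\bar{x}_1)$ and $\Phi_2 = (x_1)\wedge(\bar{x}_1)\wedge(x_2)\wedge(x_2)$ the oracle can answer every query (including your perturbed ones) with the same index on both instances, yet $x_2$ occurs in one and not the other; e.g.\ under your detector for $x_2$ (set $p_2=0$, the rest near $1/2$) the oracle may return $C_4$ in both cases, since $C_4$ is a worst-violated $\bar{x}_1$ clause in $\Phi_1$ and a fully violated $x_2$ clause in $\Phi_2$. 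So no sequence of index comparisons yields the certification you need, and since your only route to answering ``unsatisfiable'' is $I_+\cap I_-\neq\emptyset$ computed from these unobtainable sets (with a single verification query and no specified fallback), the argument breaks exactly where the theorem is hard: repeated clauses. The paper's algorithm for $\HSAT{1}$ deliberately does \emph{not} learn the instance: it inductively maintains lists of partial assignments, queries each extension with the remaining variables at $1/2$, partitions extensions into equivalence classes by the returned clause index, and keeps one representative per class, arguing that a good partial assignment always survives because its worst violation is $1/2$ on a clause over untouched variables while any bad extension violates some clause with probability $1$.

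Part (b) is closer in spirit to the paper (detect each clause type with fractional queries, then run a classical $\SAT{2}$ solver), but the detector is not correct as stated. First, the target $(\ell_1\vee\ell_2)$ need not be the \emph{unique} maximum: an obscuring unit clause on $\ell_1$ or $\ell_2$ is also violated with probability $1$, so the oracle is not forced to return the target's index. Second, and more seriously, when the target is absent you have no criterion for recognizing this from a bare index: a single competing clause (say of type $(x_{k_1}\vee\bar{x}_{k_2})$) can be the worst violated under all of your perturbed queries, so ``the returned index is consistent across several queries'' falsely certifies presence. The paper closes exactly this hole in Theorem~\ref{thm:rep_free_2SAT}: two preliminary queries (remaining variables all $0$, then all $1$) establish that clauses of certain positive and negative types exist, and then under the $1/4$ and $3/4$ queries the sets of clause types that can be worst violated are disjoint when the target is absent, so with repetition-freeness the test ``same index returned in both queries iff the target is present'' is sound, while obscured clauses are tolerated because only an equivalent formula is needed. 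Your write-up needs these ingredients (the existence-establishing queries, the two-query disjointness argument, and the treatment of obscured clauses) before part (b) is a proof.
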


Our algorithm for $\HSAT{1}$, in Section~\ref{sec:1SAT}, works even when clauses are repeated multiple times in the instance, despite the fact that it's not possible to \emph{learn} the instance in this setting. This is in sharp contrast to the ``arbitrary violated constraint'' model, where even  $\HSAT{1}$ cannot be solved in polynomial time unless $\P=\NP$ \cite{IKQSS14}. The main difficulty in deriving our algorithm for $\HSAT{1}$ comes from dealing with repeated clauses, which allow the oracle to obscure information about the instance. Unlike the $\HSAT{1}$ case, the algorithm for $\HSAT{2}$ discussed in Section~\ref{sec:2SAT}, works by attempting to learn the instance; it either succeeds in learning an equivalent instance (in which case one can solve the problem using any $\SAT{2}$ algorithm), or it accidentally stumbles upon a satisfying assignment in the meantime and aborts. The problem of solving $\HSAT{2}$ with repeated clauses similar to $\HSAT{1}$ is left for future work.

Following this we generalize these results to the quantum case. In this case the goal is to determine if a set of $1$-qubit or a set of $2$-qubit projectors is mutually satisfiable or not. We consider an analogue of this model in which one can propose a probability distribution over quantum states (i.e. a density matrix), and the oracle returns the index of the clause which is most likely to be violated. Our results for hidden $\QSAT{}$ ($\HQSAT{}$) show that
\begin{theorem*}[Informal statement]
Given a $\HQSAT{}$ instance $H$ defined on $n$ qubits with $m$ projectors and $\epsilon > 0$, 
it is possible to
\begin{enumerate}[(a)]
\item solve $H$ to a precision $\epsilon$ in time $O(n^{\log(1/\epsilon)})$ if $H$ is a $\QSAT{1}$ instance and
\item learn each projector of $H$ up to precision $\epsilon$ in time $O(n^4 + n^2 \log(1/\epsilon))$, if $H$ is a $\QSAT{2}$ instance as long as the interaction graph of $H$ is not star-like.
\end{enumerate}
\end{theorem*}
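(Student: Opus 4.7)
The plan is to lift the classical $\HSAT{1}$ and $\HSAT{2}$ algorithms developed in the preceding sections to the quantum oracle model, where a proposed density matrix $\rho$ returns the index $i$ maximizing $\Tr(P_i\rho)$. Two new ingredients are forced by the quantum setting: each $1$- or $2$-qubit projector is parameterized by a continuous direction, so any procedure that pins down an actual projector must pay a cost proportional to the desired precision; and because $\Tr(P_i\rho)$ is a continuous function of $\rho$, probe states must be chosen carefully to avoid adversarial ties in which the oracle hides information by declaring an uninformative worst-violated index.

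For part (a), I would first locate the support qubit of every projector. Since each $P_i$ acts on a single qubit, sweeping one qubit from mixed toward pure while holding the rest fixed changes $\Tr(P_i\rho)$ only when $P_i$ is supported on that qubit; watching how the worst-violated index shifts determines a map from projectors to qubits using a polynomial number of probes. Once this partition is known, for each qubit $q$ I need to find a $1$-qubit state nearly orthogonal to every $P_i$ assigned to $q$ (the promise that $H$ is satisfiable forces these projectors to share a common kernel direction). I would then bisect on the Bloch sphere, each iteration halving the arc on which the candidate direction lies. The overhead per bisection step is a factor of $n$, because the oracle only reports the global worst-violated projector, so at every step one has to rule out interference from projectors on the other qubits; compounding over $\log(1/\epsilon)$ iterations yields the claimed $O(n^{\log(1/\epsilon)})$ bound.

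For part (b), I would first reconstruct the interaction graph $G$ by running qubit-isolation probes over pairs of qubits, which detects both endpoints of each $2$-qubit projector in time $O(n^4)$. The non-star-like hypothesis is crucial precisely here: on a star the central qubit is shared by every projector, so probes that perturb only the center cannot distinguish which leaf an unknown $P_i$ extends to---the quantum analog of the ``repeated clause'' obstruction that forced the classical $\HSAT{2}$ algorithm to forbid repetition. Once $G$ is known, each projector is learned by bisecting independently on the $4$-dimensional $2$-qubit Hilbert space attached to its edge, using product probes on the two endpoint qubits against a fixed context for the rest of the graph; this costs $O(\log(1/\epsilon))$ probes per edge and $O(n^2\log(1/\epsilon))$ overall.

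The main obstacle I expect is controlling the oracle's adversarial tie-breaking throughout. Natural probes such as the maximally mixed state assign identical values to many projectors at once, and even after symmetry-breaking a careful probe must keep the spread between the intended maximizer and the runners-up large enough to survive the $\epsilon$-resolution of the bisection. Engineering this spread uniformly over the unknown projector directions---while respecting the quasi-polynomial budget in (a) and the polynomial budget in (b)---is the technical heart of the argument, and it is also the reason part (a) is confined to only constant precision when one insists on polynomial running time.
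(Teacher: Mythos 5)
Your plan for part (a) is close in spirit to the paper's (a Bloch-sphere bisection whose per-level overhead multiplies over $\log(1/\epsilon)$ levels), but it rests on a first step the paper deliberately avoids and does not justify: learning the projector-to-qubit map. With an adversarial worst-violation oracle this is exactly the kind of information that can be hidden (the classical Proposition~\ref{clm:nolearn} is the cautionary example, and its quantum analogue with repeated single-qubit projectors works the same way), so an algorithm must \emph{solve} without presupposing it can \emph{learn} the supports. The paper instead lifts the list-based $\HSAT{1}$ algorithm (Theorem~\ref{thm:1SAT}): Lemma~\ref{lem:Q1SATa} produces, for any product basis $a$, a list of $O(mn)$ candidate product states of which at least one is good, and Theorem~\ref{thm:Q1SATb} recurses — each level bisects every qubit's current Bloch-sphere region, but since the oracle never tells you \emph{which} list element was good, you must branch over the whole list, giving width $O(mn)$ and depth $\log(1/\epsilon)$, hence the quasi-polynomial bound. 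Your proposal gestures at "interference from other qubits" as the source of the factor $n$ per level, but the concrete mechanism is missing: with the other qubits held in a fixed (e.g.\ maximally mixed) context, a projector on the target qubit is only returned once its violation exceeds the constant background of $1/2$, so you can only localize its direction to constant precision; going beyond that forces simultaneous refinement of all qubits and the branching over candidate lists, which is the actual content of the proof.

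For part (b) there is a more serious gap. You propose to reconstruct the interaction graph and then learn each edge's projector with $O(\log(1/\epsilon))$ probes "against a fixed context." With a fixed context the reference violation is a fixed constant, so comparisons against it are threshold queries at a fixed level: they resolve the unknown projector only up to that constant gap, or else require a net of size $\poly(1/\epsilon)$ rather than $\log(1/\epsilon)$ probes. The paper's $\log(1/\epsilon)$ scaling comes from an \emph{adaptive, tunable} reference: Test-I proposes $\bk{\psi}{ij}\otimes[(1-\epsilon)\bk{\alpha}{k\ell}+\epsilon\bk{\alpha^\bot}{k\ell}]\otimes(\mathbb{I}/2)^{\otimes(n-4)}$ on a \emph{disjoint} reference edge $(k,\ell)$, and Step~2 (Test-II) halves the mixing parameter and restricts the nets to geometrically shrinking balls, interleaving the refinement of the two disjoint projectors so each serves as the other's calibrated reference; Step~3 then uses an already-refined projector as the reference to learn all remaining edges. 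This is also why the star-like exclusion enters: not because the graph cannot be reconstructed, but because the projector on the dominating edge has no disjoint reference, so the error thresholds in Test-I are skewed by the shared qubit (cf.\ the discussion after the proof of Theorem~\ref{thm:Q2SATa}). Without the tunable-reference/interleaved-refinement mechanism, your stated running time $O(n^4+n^2\log(1/\epsilon))$ is not achieved.
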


By star-like, we mean the interaction graph contains an edge that is incident to all other edges in the graph. At this point it is worth comparing the notions of \emph{learning} and \emph{solving} hidden instances both in the classical and quantum settings. The classical case is more straightforward where learning an instance means learning all the literals present in each clause, whereas solving means finding a satisfying assignment. For example, our algorithm for $\HSAT{2}$ without repetitions learns the instance, while our algorithm for $\HSAT{1}$ solves the instance without learning it. For hidden versions of $\SAT{1}$ and $\SAT{2}$, learning the instance in polynomial time automatically triggers solving it in polynomial time as well. 

However, in the quantum setting this simple relation between learning and solving breaks down. 
The continuous nature of $\QSAT{}$ means we can only learn a projector or find a satisfying assignment up to a specified precision $\epsilon$. The latter is accomplished with our $\HQSAT{1}$ algorithm in Section~\ref{sec:1qsat}.
However in the case of hidden $\QSAT{2}$ learning the instance up to precision $\epsilon$ does not imply that one can solve the instance up to precision $\operatorname*{poly}(n,\epsilon)$ in polynomial time. This can be attributed to current algorithms for $\QSAT{2}$ being very sensitive to precision errors. This issue of divergence between the notions of learning and solving $\HQSAT{2}$ instances is further discussed in Section~\ref{sec:2qsat}.

\section{Notations and Preliminaries}
\vspace{-0.5em}
\subparagraph*{Boolean Satisfiability.} The Boolean satisfiability problem, generally referred to as $\SAT{}$, is a constraint satisfaction problem defined on $n$ variables $\mathbf{x} = \{x_1, \ldots, x_n\}$ where we are given a formula represented as a conjunction of $m$ clauses and each clause is a disjunction of \emph{literals} (variables, $x_j$, or negated variables, $\xbar_j$). The problem is solved if we can find an assignment to the variables (i.e. $\forall \; i, \; x_i \in \{0, 1\}$) that sets the value of every clause to $1$. In particular, if each clause involves at most $k$ literals, then this problem is classified as $\SAT{k}$. It is well known that while $\SAT{2}$ can be solved in linear time~\cite{Krom76,EIS76,APT79}, $\SAT{k}$ for $k \geq 3$ is $\NP$-complete~\cite{Cook71,Lev73}. A useful notion is that of \emph{clause types} which is defined as the unordered set of literals present in the clause. Specifically, the clause type for $C_j = (x_a \vee \xbar_b \vee x_c)$ is denoted by $T(C_j) = \{x_a, \xbar_b, x_c\}$. So, all possible clause types for 
$\SAT{2}$ would be $\left\{ \{x_a, x_b\}, \{x_a, \xbar_b\}, \{\xbar_a, x_b\}, \{\xbar_a, \xbar_b\} \; | \; a, b \in [n] \text{ and } a \neq b \right\}$,
where $[n]$ denotes the set $\{1, \ldots, n\}$. From this
definition, it is clear that $\SAT{2}$ has $O(n^2)$ clause types and
similarly, $\SAT{k}$ would have $ \binom{2n}{k} = O(n^k)$ clause
types. . Given a $\SAT{}$ formula $\phi$, we say that the $\SAT{}$
formula $\phi'$ is \emph{equivalent} to $\phi$ if for all
assignments $\mathbf{x} \in \{0, 1\}^n$, $\mathbf{x}$ satisfies
$\phi$ if and only if it satisfies $\phi'$. For any formula $\phi$, $\SAT{}(\phi) \EqDef \{\mathbf{x} \in \{0, 1\}^n \;\; | \;\; \phi(\mathbf{x}) = 1 \}.$
%

\vspace{-0.5em} 
\noindent 
\subparagraph*{Hidden SAT.} While considering the
unknown input version of $\SAT{}$ (resp. $\SAT{k}$), the boolean
formula is considered as hidden and accessible only via an oracle
that accepts an assignment and reveals some form of violation
information. In our case, this is the ``worst violated oracle'' which
accepts a \emph{probabilistic} assignment and reveals a clause that
has the \emph{highest probability} of being violated with ties being
broken arbitrarily. A probabilistic assignment for a set of $n$
variables is a function $\mathbf{a}: [n] \rightarrow [0, 1]$ such
that $Pr[x_i = 1] = \mathbf{a}(i)$ and $Pr[x_i = 0] = Pr[\xbar_i =
1] = 1 - \mathbf{a}(i)$. For the sake of concise notation, these are
usually written as $x_i = \mathbf{a}(i)$ and $\xbar_i = 1 -
\mathbf{a}(i)$. This naturally translates to the notion of the
probability of a clause $C_j$ being violated which is defined as 
$Pr[C_j = 0] \EqDef \prod_{\ell \in T(C_j)} Pr[\ell = 0] = \prod_{\ell \in T(C_j)} (1 - \ell)$
%
which allows the oracle to calculate the probability for each clause
being violated. Here we are using $\ell$ to refer both to the
identity of a literal as well as to the probability that literal
$\ell$ is set to true. Now, the problem $\HSAT{}$ (resp. $\HSAT{k}$)
consists of finding a satisfying assignment for a hidden $\SAT{}$
(resp. $\SAT{k}$) formula by proposing probabilistic assignments to
the ``worst violated oracle''. One way we do this is also by
\emph{learning} an \emph{equivalent formula} to the hidden instance
and solve it to find a satisfying assignment. By learning we mean
the process of using the information from a series of violations to
determine what a clause in the hidden instance could be.

Note that it's possible for an instance to contain clauses which will never be returned by the oracle. For instance, given clauses $C_i$ and $C_j$, if $T(C_i) \subset T(C_j)$, then clause $C_i$ will always be at least as violated as $C_j$. Hence the oracle might never return clause $C_j$. For this reason we will say that $C_i$ \emph{obscures} $C_j$ if $T(C_i) \subset T(C_j)$. An obscured clause might never be returned by the oracle.


The complexity of the algorithms in the following sections is in terms of the total running time where one query to the oracle takes unit time.


\section{An easier model: an oracle which reveals all violated constraints}
\label{sec:easy}
We begin by considering a simple generalization of the oracle model of Bei Chen and Zhang \cite{BCZ13}. 
We will call this the ``all violated constraints'' model.
In particular, suppose that you have a $\SAT{}$ instance which is hidden from you.
Instead, you have have access to an oracle which, given an assignment to the variables, identifies which clauses are violated by that assignment.
In the model of Bei Chen and Zhang \cite{BCZ13}, the oracle only returns one violated clause, and the oracle's response may be adversarial.
Here, in contrast, the oracle reveals the identities of \emph{all} violated clauses.
This is a more natural model than the ``arbitrary violated constraint'' model, because in many real-life applications, you would expect to see a random violated constraint\footnote{Note that a model which returns a random violated constraint is just as powerful to this one, since by repeating the experiment one could quickly learn the set of all violated constraints with high probability.}, or all constraints, not an adversarially chosen one.
Therefore in many cases this is a more natural model than the one considered by Bei Chen and Zhang.

We begin by showing that this model is much more powerful than the
``arbitrary violated constraint'' model of BCZ.  In particular, this
model allows one to learn a hidden $\SAT{k}$ instance in $O(n^k)$
time. Therefore, by simply learning the underlying instance, one can
solve $\HSAT{1}$ or $\HSAT{2}$ in polynomial time.

\begin{theorem} In the ``all violated constraints'' model, there is an algorithm which either learns an arbitrary $\SAT{k}$ instance on $n$ variables and $m$ clauses, or else finds a satisfying assignment to the instance, in time $O(m n^k)$, where the big-O notation hides a constant which depends on $k$.
\end{theorem}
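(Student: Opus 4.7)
My plan is to enumerate all clause types of size at most $k$ and, for each, identify which clauses in the hidden instance have that type. Since $k$ is treated as constant, there are $\sum_{k'=0}^{k}\binom{2n}{k'} = O(n^k)$ such types, so the query budget is within reach. Along the way, if any query causes the oracle to return the empty set, that query's assignment is already a satisfying assignment and I stop immediately; otherwise the enumeration completes and I learn the full instance, which accounts for both halves of the ``or'' in the statement.

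The key step will be, for each fixed clause type $T$, computing the set $\{i : T(C_i) \subseteq T\}$ of indices whose clause is a subtype of $T$. I would do this with a pair of complementary queries. In both, the variables in $V(T)$ (those appearing in $T$) are set to the unique values that falsify every literal of $T$; in the first query I extend with $0$ on every other variable, and in the second with $1$. Under the first assignment a clause is returned iff its type sits inside $T \cup \{x_j : j \notin V(T)\}$; under the second, iff its type sits inside $T \cup \{\bar{x}_j : j \notin V(T)\}$. Because the ``contaminating'' literals from the two queries are disjoint (positive versus negative copies of variables outside $V(T)$), the intersection of the two returned index sets is exactly $\{i : T(C_i) \subseteq T\}$.

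With these inclusion sets in hand for every $T$, I process the clause types in order of increasing size. For each $T$ the set $\{i : T(C_i) = T\}$ equals the inclusion set for $T$ minus the union, over proper subtypes $T' \subsetneq T$, of the sets $\{i : T(C_i) = T'\}$ already computed; empty clauses (always violated, hence returned by every query) are filtered out at the start. This recovers $T(C_i)$ for every clause $C_i$, i.e.\ learns the instance. The total cost is $2 \cdot O(n^k)$ oracle queries, each producing a response of size at most $m$, for total running time $O(mn^k)$. I expect the main obstacle to be exactly the contamination by variables outside $V(T)$ hinted at above: a single ``falsify $T$ on $V(T)$ and default the rest to $0$'' query returns a jumble of clauses whose types can freely include positive literals of outside variables, and it is not obvious how to isolate the subtypes of $T$; pairing it with the opposite default is what cleanly eliminates this contamination.
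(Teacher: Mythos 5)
Your proposal is correct and follows essentially the same strategy as the paper: two complementary queries that falsify the target type's literals and default the remaining variables to all-0 and all-1, intersecting the violated sets to obtain the clauses whose type is contained in $T$, then removing clauses of proper subtypes, aborting if any query satisfies the instance. The only difference is cosmetic bookkeeping (you peel off exact-type sets bottom-up over all subtypes, while the paper subtracts the inclusion sets of the size-$(k-1)$ subsets), and both yield the $O(mn^k)$ bound.
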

\begin{proof}
Let $x_1, x_2, \ldots x_n$ be the variables of your instance, and $C_1 \ldots C_m$ be the clauses. In the following algorithm we will assume that all the assignments tried, fail to satisfy the instance; if they happen to satisfy it, then we have found a satisfying assignment and the algorithm aborts.

The type of a clause is the subset of literals contained in the clause. For instance, a clause could be of type $(x_1\vee x_2 \vee \xbar_3)$ or of type $(\xbar_1 \vee x_3)$. In an instance of $\SAT{k}$, there could be multiple clauses of the same type. We will show that for each clause type $T$ involving $k$ literals, we can learn which clauses of the instance are of type $T$ in time $O(m)$. This implies the claim.

We will now show how to learn which clauses are of type $(x_1 \vee x_2 \vee ... \vee x_k)$ in our instance. An analogous proof holds for other clause types. First propose the assignment $x_1=x_2=\ldots x_k=0$ and $x_{k+1}=x_{k+2}=\ldots=x_n=0$. This returns some subset of violated clauses $S$. Next propose the assignment $x_1=x_2=\ldots x_k=0$ and $x_{k+1}=x_{k+2}=\ldots=x_n=1$. This returns a subset $S'$ of violated clauses. Now take the intersection of $S\cap S'$ (which can be done in $O(m)$ time). 

We now claim that $S\cap S'$ contains all clauses $C_j$ for which all the literals in $C_j$ are the set $\{x_1,x_2, \ldots x_k\}$. To see this, first note that any clauses that are in $S\cap S'$ are clearly violated by both of the proposed assignments. Now consider any clause $C_j$ which involves variables outside of $x_1\ldots x_k$. For example say clause $C_j$ contains the literal $x_{k+1}$. Then $C_j$ will be satisfied by one of the two proposed assignments; hence it will not be in $S\cap S'$.

We therefore have that $S\cap S'$ contains all clauses on the literals $\{x_1 \ldots x_k\}$. 
Some of these clauses are of type $(x_1 \vee x_2 \vee ... \vee x_k)$, but others may involve subsets of these literals.
Now to learn which of these are of type $(x_1 \vee x_2 \vee ... \vee x_k)$, for each subset $L$ of the literals $\{x_1 \ldots x_k\}$ which is of size $k-1$, perform the same experiment to learn which clauses $S_L$ involve literals in $L$ only. Then we have that $\cup_L S_L$ are all clauses on the literals $\{x_1 \ldots x_k\}$ which involve $k-1$ of those literals or fewer. (There are $k$ such sets $L$, each of which takes $O(m)$ time). Hence $\left(S\cap S'\right) \setminus \left(\cup_L S_L\right)$ is the set of all clauses of type $(x_1 \vee x_2 \vee ... \vee x_k)$ as desired.

Therefore for each clause type, it takes $O(m)$ time to learn which clauses are of that type (where we have suppressed a constant depending on $k$). As there are $O(n^k)$ possible clause types in a $\SAT{k}$ instance, this implies one can learn $\SAT{k}$ for any fixed $k$ in time $O(mn^k)$ as desired.
\end{proof}
\vspace*{-0.5em}
\begin{corollary} In the ``all violated constraints'' model, $\SAT{2}$ can be solved in polynomial time. 
\end{corollary}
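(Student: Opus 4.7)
The plan is to invoke the preceding theorem directly with $k=2$. That theorem guarantees an algorithm which, in time $O(mn^2)$, either stumbles upon a satisfying assignment of the hidden formula (in which case we are already done and can simply output it) or else learns, for every one of the $O(n^2)$ possible $\SAT{2}$ clause types, exactly which clauses of the hidden instance are of that type. In the first case there is nothing left to do; in the second case we have in hand an explicit representation of the hidden $\SAT{2}$ formula, or at least one equivalent to it.

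Once an explicit equivalent formula $\phi'$ is known, I would simply feed it to any standard $\SAT{2}$ solver, for instance the linear-time implication-graph / strongly-connected-components algorithm of Aspvall, Plass and Tarjan~\cite{APT79} cited earlier. That algorithm runs in time linear in the size of $\phi'$, which is at most $O(n^2)$ since there are only $O(n^2)$ distinct clause types on $n$ variables. Combining this with the $O(mn^2)$ cost of the learning step gives an overall polynomial running time in $n$ and $m$.

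The only point that warrants a quick sanity check, rather than a genuine obstacle, is that knowing the \emph{set} of clause types that appear in the hidden instance is sufficient: the set of satisfying assignments of a CNF formula depends only on which clause types occur, not on how many times each clause is repeated, so the learned formula has exactly the same satisfying assignments as the hidden one. With this observation the corollary follows immediately from the theorem.
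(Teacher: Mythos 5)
Your argument is correct and is exactly the reasoning the paper intends: the corollary follows immediately from the preceding theorem with $k=2$, since one either finds a satisfying assignment during learning or obtains an explicit (equivalent) $\SAT{2}$ formula to hand to a standard polynomial-time $\SAT{2}$ solver. Your added remark that clause multiplicities are irrelevant to satisfiability is a fine sanity check but not needed, as the theorem already recovers which clause indices have each type.
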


In some sense this model is too easy. We therefore turn our attention to a more restrictive model: the ``worst violated constraint'' model. Here we allow one to query probability distributions over assignments, and the oracle will return a clause which is the most likely to be violated by that distribution. The oracle may break ties arbitrarily. This can be seen as a CSP-version of the linear programming model of Bei Chen and Zhang \cite{BCZ13b}. 

\section{Comparison to the Bei, Chen and Zhang Model: WIDESAT}
\label{sec:wide}
Recall that in \cite{BCZ13}, a solution to a hidden formula can be found in $O(nm)$ time where $n$ is the number of variables and $m$ is the number of clauses.  However, this does not mean that an equivalent instance of the hidden formula can be found in polynomial time.  In fact, they show that any randomized algorithm necessarily requires exponentially many queries in order to generate a formula equivalent to the hidden one.  Let a \emph{$\WIDE$ clause} be any clause that contains all $n$ distinct variables.  It turns out that $\WIDE$ clauses are exactly the types of clauses that are difficult to learn in the model of Bei, Chen, and Zhang.\footnote{In fact, this is not hard to see. Suppose the hidden formula has exactly one $\WIDE$ clause.  If the oracle always returns 'YES' to a proposed assignment, then each query only eliminates 1 out of exponentially many $\WIDE$ clause types.} In contrast, we show that similar $\WIDE$ instances can be learned in polynomial time in our model.

\begin{prop}
\label{prop:wide}
Given a hidden $\WIDE$ instance on $n$ variables and $m$ distinct clauses where $m \le n$, we can learn an equivalent instance in $O(\binom{n}{m-1}  2^{m} + n)$ time.
\end{prop}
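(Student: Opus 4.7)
The plan is to exploit a special feature of WIDE clauses: each WIDE clause $C$ on $n$ variables has a unique deterministic assignment in $\{0,1\}^n$ that violates it (the one falsifying all $n$ literals), and every other WIDE clause differs from $C$ in at least one literal and is therefore satisfied by that same assignment. Consequently, proposing the deterministic assignment corresponding to a candidate sign vector $\hat{s}$ either elicits from the worst-violated oracle the index of the hidden clause realizing $\hat{s}$, or reports the assignment as satisfying when no clause has sign vector $\hat{s}$. Learning the instance reduces to identifying the $m$ realized sign vectors among $2^n$ possibilities. To beat naive $2^n$ enumeration I use a distinguishing-set fact: any $m$ distinct vectors in $\{0,1\}^n$ are separated by some subset $T\subseteq[n]$ of size at most $m-1$, proved by a greedy splitting argument (start with $T=\emptyset$ giving one class, repeatedly add a coordinate that splits a current class; after $m-1$ steps all $m$ vectors lie in distinct classes).

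The algorithm iterates over all $\binom{n}{m-1}$ candidate subsets $T\subseteq[n]$ of size $m-1$. For each $T$, Phase~1 issues $2^{m-1}$ biased queries, one per sign pattern $\sigma\in\{0,1\}^T$: set $p_i=\epsilon$ when $\sigma_i=1$, $p_i=1-\epsilon$ when $\sigma_i=0$ for $i\in T$, and $p_j=1/2$ for $j\notin T$, with $\epsilon=1/\poly(n)$. Under this query the violation probability of a clause depends only on its $T$-restriction: a perfect match beats any partial match by a factor of $(1-\epsilon)/\epsilon=\poly(n)$, so the oracle returns the index of a perfectly matching clause whenever one exists. If the $2^{m-1}$ responses yield $m$ distinct indices, we tentatively accept $T$ as distinguishing and have a mapping from each index to its putative $T$-restriction. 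Phase~2 then extends each of the $m$ clauses to its full sign vector using $O(m(n-m+1))$ queries: for each clause (identified by its $T$-restriction $\sigma^*$) and each $i\notin T$, one query biased to isolate the target on $T$ with $p_i=0$ and $p_j=1/2$ elsewhere reveals whether $x_i$ appears positively (target is violated, oracle returns its index) or negatively (target is satisfied, oracle returns some other index or declares satisfaction). A final verification stage proposes the deterministic assignment violating each learned clause and checks the oracle returns the matching index; if any check fails, $T$ is discarded and the algorithm advances to the next candidate.

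The dominant costs are $O(\binom{n}{m-1}\,2^{m-1})$ across all Phase~1 iterations plus $O(m(n-m+1))$ for one successful Phase~2 (with the $O(m)$-per-$T$ verification absorbed into Phase~1), giving the target $O(\binom{n}{m-1}\,2^{m}+n)$ after absorbing dominated terms: $mn\leq n\cdot 2^m\leq \binom{n}{m-1}\cdot 2^m$ for $m\geq 2$, while the $+n$ buffer covers $m=1$. The main obstacle is the oracle's adversarial tie-breaking. When $T$ is not truly distinguishing, two clauses share a $T$-restriction, so the Phase~1 query at that restriction is tied and the adversary may always return one of them, suppressing the other from the response set -- producing fewer than $m$ distinct indices, which we correctly reject. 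In the remaining case where the adversary happens to return both across different $\sigma$'s, the resulting index-to-restriction mapping is internally inconsistent and Phase~2 will deduce sign bits for a clause that does not actually exist; the verification step catches this via the uniqueness property of the deterministic-assignment test. Precision is handled by picking $\epsilon=1/\poly(n)$ small enough to guarantee strict separation of violation probabilities across different $T$-match counts, so the only possible ties occur among clauses with identical $T$-restrictions.
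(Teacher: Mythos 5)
Your overall strategy coincides with the paper's: the same distinguishing-set lemma (your greedy splitting is exactly the paper's induction), the same enumeration over all $\binom{n}{m-1}$ subsets $T$ with $2^{m-1}$ queries per subset, and the same per-clause completion with the free variables held at $1/2$. The step that actually breaks is your $\epsilon$-biased Phase~1. With $p_i\in\{\epsilon,1-\epsilon\}$ on $T$, \emph{every} query gives every clause strictly positive violation probability, so the oracle returns an index even at patterns realized by no clause (it returns a best partial match). Hence for the \emph{correct} distinguishing $T$ (and $m\ge 3$, where $2^{m-1}>m$ forces repeats), some clause index necessarily reappears at non-realized patterns, and your ``mapping from each index to its putative $T$-restriction'' is not well defined: from indices alone you cannot tell which of the several patterns at which index $e$ was returned is its true restriction. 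Your fallback --- run Phase~2 and discard $T$ if verification fails --- then throws away the correct $T$ whenever the wrong pattern is chosen, and the distinguishing subset can be unique (e.g.\ $C_1=x_1\vee x_2\vee x_3$, $C_2=\bar x_1\vee x_2\vee x_3$, $C_3=x_1\vee \bar x_2\vee x_3$: only $T=\{1,2\}$ distinguishes, and the query aimed at the unrealized pattern $\{\bar x_1,\bar x_2\}$ returns $C_2$ or $C_3$ a second time), so the algorithm can exhaust all subsets without learning the instance. A secondary issue: non-distinguishing $T$'s can also produce $m$ distinct indices (the suppressed clause surfaces as a best partial match at another pattern), so Phase~2 may be triggered for many subsets, and that extra $\binom{n}{m-1}\cdot O(mn)$ cost is not covered by the claimed $O(\binom{n}{m-1}2^m+n)$ bound when $mn>2^m$ (constant $m$, large $n$).

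Both problems vanish if you drop the bias and use the paper's queries: set the $T$-variables to the deterministic $0/1$ values falsifying the pattern and everything else to $1/2$. Then any clause not matching the pattern on $T$ has violation probability exactly $0$, so a query returns an index if and only if its pattern is realized (otherwise the oracle reports the trial satisfied), each of the $m$ indices appears exactly once --- at its own restriction --- and a non-distinguishing $T$ yields fewer than $m$ indices and is rejected with no Phase~2 work at all. Your Phase~2 probe ($p_i=0$ on the probed variable, the target's pattern on $T$, $1/2$ elsewhere) is then just a sign-flipped version of the paper's single-WIDE-clause subroutine and is correct as stated.
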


Before we prove the proposition in full, let us gain some intuition as to why this task should be easier in our model.  Suppose first that there was exactly 1 $\WIDE$ clause containing variables $x_1, x_2, \ldots, x_n$.  Propose the solution $(1, .5, .5, \ldots, .5)$.  If the oracle returns that the formula is satisfied, then $x_1$ must be the literal present in the clause, otherwise if it's not satisfied, then $\overline{x}_1$ must be the literal present in the clause.  Continuing for each separate variable completes the argument.

We will need the following lemma for the full generalization.
\begin{lemma}
\label{lem:wide}
Suppose we have a set of $m$ distinct binary strings of length $n$.  Then there exists a subset of $m-1$ indices such that each string restricted to those indices is unique.\footnote{To see that $m-1$ indices are necessary, consider any subset of size $m$ of $\{e_i \mid 1 \le i \le n\}$ where $e_i$ denotes the binary string with a 1 in position $i$ and 0 everywhere else.  It is clear that any $m-2$ indices are insufficient to distinguish all strings.}
\end{lemma}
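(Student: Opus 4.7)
My plan is to prove Lemma \ref{lem:wide} by induction on $m$, building up a distinguishing set of indices by iteratively splitting the collection of strings.

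\textbf{Base case.} For $m=1$ the empty set of indices works, since any single string is trivially ``unique'' among a one-element set. For $m=2$, the two strings are distinct, so they disagree in at least one coordinate $i$; the single-index set $\{i\}$ suffices.

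\textbf{Inductive step.} Assume the statement holds for every collection of fewer than $m$ distinct binary strings of length $n$. Given a collection $S$ of $m \ge 2$ distinct binary strings, I first observe that some coordinate $i \in [n]$ must be non-constant on $S$ (otherwise all $m$ strings would agree on every coordinate and hence be equal, contradicting distinctness). Split $S$ into the two subcollections $S_0$ and $S_1$ according to the value at coordinate $i$; by the choice of $i$ both are non-empty, with $|S_0| = k$ and $|S_1| = m-k$ for some $1 \le k \le m-1$. Apply the inductive hypothesis to $S_0$ to obtain a set $I_0$ of at most $k-1$ indices distinguishing the strings in $S_0$, and similarly a set $I_1$ of at most $m-k-1$ indices distinguishing $S_1$. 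I claim the index set $I := \{i\} \cup I_0 \cup I_1$ distinguishes all of $S$: two strings in the same part are already separated by $I_0$ or $I_1$, while two strings in different parts disagree at coordinate $i$. The size satisfies
\[
|I| \le 1 + (k-1) + (m-k-1) = m-1,
\]
and since $m-1 \le n-1 < n$ by the hypothesis $m \le n$ of Proposition \ref{prop:wide}, I can freely pad $I$ with arbitrary unused coordinates to reach exactly $m-1$ indices without destroying the distinguishing property.

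\textbf{Main obstacle.} The argument is essentially clean; the only subtle point is verifying that the recursion bound $f(m) \le 1 + f(k) + f(m-k)$ with $1 \le k \le m-1$ yields the tight bound $m-1$ for every split, which it does precisely because $1 + (k-1) + (m-k-1) = m-1$ is independent of $k$. There is no need to choose a balanced split; any non-constant coordinate works. The tightness noted in the paper's footnote (consider the $m$ standard basis strings $e_1, \ldots, e_m$) confirms that this bound cannot be improved in general.
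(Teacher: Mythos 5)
Your proof is correct and follows essentially the same inductive argument as the paper: split the set on a coordinate where two strings differ (equivalently, a non-constant coordinate), recurse on both halves, and add the splitting index, giving $1+(k-1)+(m-k-1)=m-1$. The only addition is your explicit remark about padding to exactly $m-1$ indices, which the paper leaves implicit.
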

\begin{proof}
We proceed by induction. Consider any two strings in the set. Since all strings in the set are distinct, they must differ in at least one position (without loss of generality, in the first bit). Divide the set into two groups such that all strings in the first group start with a 0 and all strings in the second start with a 1. Notice that any two strings in differing groups are distinguished by the first index, but any two strings in the same group must still all be distinct when restricted to the last $n-1$ bits. If the first group is of size $k$ and second is of size $m-k$, then by induction we can distinguish the strings within the group with at most $(k-1) + (m-k-1) = m - 2$ indices. Adding the index we used for the first comparison, we arrive at the conclusion.
\end{proof}

\noindent{\bf Proof of Proposition~\ref{prop:wide}}
Suppose that we have $m > 1$ distinct clauses.   The strategy is as follows:  for each subset of $m-1$ variables, query the oracle with all possible 0-1 assignments to those variables, setting each variable not in the set to $1/2$.  Using Lemma~\ref{lem:wide}, we see that we will eventually query the subset of the variables which causes the oracle to return all $m$ clauses.  Therefore, we learn the presence of $m-1$ variables exactly as they appear in the hidden formula.  Once we have this information, it is a simple task to recreate each clause $C_k$.  Set the $m-1$ variables such that none satisfy clause $C_k$.  Notice that all other clauses will be satisfied, so we have reduced this to our problem of $\WIDE$ with one clause, which we know how to learn in $O(n)$ time. $\Box$

\section{Hidden 1SAT}
\label{sec:1SAT}
In this section, we will consider the problem of a hidden $\SAT{1}$ instance $\Phi$, possibly with repetitions. 
Our goal will be to determine whether or not $\Phi$ is satisfiable. A natural approach one might take to solve this problem would be to learn the identity of each clause in the instance $\Phi$. Unfortunately, in the case that the $\SAT{1}$ instance has repetitions, this is not possible.
\begin{prop} 
\label{clm:nolearn}
There is no algorithm which, given an instance $\Phi$ which is unsatisfiable, learns all the literals present in $\Phi$ (even granted arbitrary numbers of queries to the oracle).
\end{prop}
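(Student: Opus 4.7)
The plan is to exhibit two distinct unsatisfiable $\SAT{1}$ instances $\Phi$ and $\Phi'$ on the same variables and with the same number of indexed clauses, such that some valid adversarial tie-breaking strategy makes the worst-violated oracle respond identically to every query under both instances. Since their literal multisets differ, no algorithm can correctly output the literals of both, and so learning must fail on at least one of them.

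Concretely, I take
\[
\Phi = (C_1, C_2, C_3) = (x_1,\ \xbar_1,\ x_1), \qquad
\Phi' = (C_1', C_2', C_3') = (x_1,\ \xbar_1,\ \xbar_1),
\]
both unsatisfiable because of their first two clauses. Only $x_1$ appears in either instance, so only the queried probability $p = Pr[x_1 = 1]$ matters. The violation probabilities are $(1-p,\ p,\ 1-p)$ under $\Phi$ and $(1-p,\ p,\ p)$ under $\Phi'$. First I would tabulate the argmax sets: for $p < 1/2$ the maximum violation is $1-p$, achieved by $\{C_1, C_3\}$ in $\Phi$ but only $\{C_1'\}$ in $\Phi'$; for $p > 1/2$ the maximum is $p$, achieved by $\{C_2\}$ in $\Phi$ but $\{C_2', C_3'\}$ in $\Phi'$; and at $p = 1/2$ all three indices tie in both instances.

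Next, I would fix the adversary in \emph{both} instances by the single rule ``return index $1$ when $p \le 1/2$, and index $2$ when $p > 1/2$.'' A direct case check against the argmax sets above shows that the returned index always lies in the tied set, so this is a legitimate tie-breaking strategy in each instance. Consequently, the oracle's response is the same function of $p$ whether the hidden instance is $\Phi$ or $\Phi'$.

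Any learning algorithm---deterministic or randomized---therefore generates identically distributed transcripts on the two instances, so its output distribution is the same on both. But the literal multisets differ ($\{x_1, x_1, \xbar_1\}$ versus $\{x_1, \xbar_1, \xbar_1\}$), so any claimed output is wrong on at least one of them. The extension to arbitrary $n$ is immediate: leave $x_2, \dots, x_n$ absent from the clauses, so their queried probabilities do not affect any violation probability. The only delicate point to verify with care is that the tie-breaking rule respects the argmax constraint at the boundary $p = 1/2$, which is automatic because all three clauses tie there in both instances.
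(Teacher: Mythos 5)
Your adversary-indistinguishability argument is structurally the same as the paper's, and your tie-breaking bookkeeping is correct, but the pair of instances you chose does not refute the statement being proved. Both of your formulas contain exactly the same literals, $x_1$ and $\xbar_1$; they differ only in which (repeated) clause index carries which literal. So an algorithm asked to output ``all the literals present in $\Phi$'' can answer $\{x_1,\xbar_1\}$ on both and be correct on both, and your indistinguishability argument yields no contradiction for that task: what you rule out is only learning the clause-indexed instance (equivalently, the literal multiset). The intended content here—made explicit in the sentence immediately following the paper's proof—is different and stronger: the obstacle is not clause repetition but the impossibility of determining which literals occur in $\Phi$ at all, e.g.\ whether a fresh variable appears. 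Your example locates the difficulty precisely in repetition, which is exactly what the paper says the difficulty does \emph{not} lie in.

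The fix is to keep your argument verbatim but choose a pair whose literal sets genuinely differ. The paper takes $\Phi_1$ with clauses $(x_1,\xbar_1,x_1,\xbar_1)$ and $\Phi_2$ with clauses $(x_1,\xbar_1,x_2,x_2)$, both unsatisfiable, and a single adversary rule consistent with both: return $C_1$ when $x_1$ is the most violated of $x_1,\xbar_1,x_2$, return $C_2$ when $\xbar_1$ is, and when $x_2$ is, return $C_3$ or $C_4$ according to whether $x_1$ or $\xbar_1$ is more violated. A case check of the kind you already carried out shows each answer lies in the argmax set of both instances for every probabilistic assignment, so the transcripts coincide while $x_2$ occurs in $\Phi_2$ but not in $\Phi_1$; hence no algorithm can output the literals present in the hidden instance. (Under the weaker reading ``learn the literals of each indexed clause,'' your proof would suffice and is essentially the paper's argument with a smaller example, but the surrounding discussion shows the claim intended is about literal occurrence.)
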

\begin{proof}
Consider the following two $\HSAT{1}$ instances:
\begin{align*}
\Phi_1: C_1 = x_1, C_2=\overline{x}_1, C_3=x_1, C_4=\overline{x}_1 \\
\Phi_2: C_1=x_1, C_2 = \overline{x}_1, C_3=x_2, C_4=x_2
\end{align*}
Both of these instances are unsatisfiable. However, note that for any oracle query, it is possible for the oracle to give the same answer (i.e. clause index) for each query. 
To see this, if $x_1$ is more violated than $\overline{x}_1$ or $x_2$, return clause $C_1$. 
If $\overline{x}_1$ is more violated than $x_1$ or $x_2$, then return clause $C_2$. 
If $x_2$ is more violated than $x_1$ or $\overline{x}_1$, then return clause $C_3$ if $x_1$ is more violated than $\overline{x}_1$, otherwise return clause $C_4$. One can easily check these oracle answers are consistent with either instance. Hence these instances are indistinguishable to adversarial oracle answers, so no algorithm can distinguish $\Phi_1$ and $\Phi_2$.
\end{proof}

Here the difficulty in learning an unsatisfiable instance does not lie in the repetition of clauses, but rather in determining for which $i $ do both $x_i$ and $\xbar_i$ appear in $\Phi$. This shows that no algorithm can learn the hidden 1SAT instance \footnote{Note, however, it is still possible that there exists an algorithm to learn the $\SAT{1}$ instance when the instance is promised to be satisfiable.}. Hence if there is an algorithm to solve $\SAT{1}$ in this hidden setting, then it must solve the instance despite the fact that it cannot deduce the underlying instance. Surprisingly, this turns out to be possible.
\begin{theorem} 
\label{thm:1SAT}
Given a hidden $\SAT{1}$ instance $\Phi$ on $n$ variables and $m$ clauses, it is possible to determine if $\Phi$ is satisfiable in time $O(mn^2)$.
\end{theorem}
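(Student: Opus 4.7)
The plan is to determine, for each variable $i \in [n]$, whether both literals $x_i$ and $\overline{x}_i$ appear as clauses of $\Phi$. Since a $\SAT{1}$ instance is unsatisfiable if and only if such a conflict exists for some $i$, this two-bit-per-variable information is all I need to decide $\HSAT{1}$; moreover, when no conflict exists the same information uniquely determines a satisfying assignment (free variables can be set arbitrarily).

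For each variable $i$ I would begin with an \emph{isolating query} $Q_i^{+}$ setting $p_i = 0$ and $p_j = 1/2$ for all $j \neq i$. Under $Q_i^{+}$ every $x_i$ clause has violation probability exactly $1$, every $\overline{x}_i$ clause has violation probability $0$, and every clause on any other variable has violation probability exactly $1/2$; hence $x_i$ clauses are strictly the most violated whenever they exist, forcing the oracle to return one. The symmetric query $Q_i^{-}$ with $p_i = 1$ does the same for $\overline{x}_i$. The challenge is that when $x_i$ (respectively $\overline{x}_i$) does not appear, the oracle may return any clause on any variable $j \neq i$ with violation $1/2$, and the returned index alone cannot distinguish this case from the case where the relevant literal is present.

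To disambiguate I would use $O(n)$ \emph{verification queries} per variable and per sign. The key structural feature of $\SAT{1}$ is that every clause depends on exactly one variable, so perturbing $p_j$ affects only clauses on variable $j$. Concretely, for each $j \neq i$, I keep $p_i$ at its extremal value, set $p_j = 1/2 \pm \delta_j$ for small, pairwise distinct $\delta_j > 0$, and leave $p_l = 1/2$ for $l \notin \{i,j\}$. The distinct $\delta_j$ break ties among clauses on different variables, and by tracking how the oracle's response shifts across these perturbations I can rule out, variable by variable, the possibility that the isolating response was on any particular $j \neq i$: roughly, if it were on $j$, one of the two signed perturbations would strictly reduce its violation probability below that of competing $\overline{x}_j$ or $x_j$ clauses, forcing the oracle to switch. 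Once every $j \neq i$ has been ruled out, the isolating response must be on variable $i$, and the relevant literal appears in $\Phi$.

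The main obstacle is robustness against adversarial tie-breaking: the oracle may be cooperative in some queries and obstructive in others, so the verification must extract definitive information regardless of the oracle's choices. The technical heart of the argument is a case analysis showing that no pattern of oracle answers across the $O(n)$ verification queries for a given pair $(i,\text{sign})$ is simultaneously consistent with both ``the literal is present'' and ``the literal is absent''; this uses the locality of $\SAT{1}$ clauses together with the strict orderings enforced by distinct $\delta_j$'s. With $O(n)$ queries per variable per sign and $O(m)$ bookkeeping per query to track which clause indices have been observed in which queries, the total running time is $O(m n^2)$, as claimed.
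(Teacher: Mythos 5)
Your strategy is to \emph{learn}, for every variable $i$, which of the literals $x_i,\overline{x}_i$ occur in $\Phi$, and then decide satisfiability from these bits. This is exactly what Proposition~\ref{clm:nolearn} rules out once clauses may repeat (and the theorem, as the paper stresses, must handle repetitions---that is its main difficulty). Concretely, take $\Phi_1: C_1=x_1, C_2=\overline{x}_1, C_3=x_1, C_4=\overline{x}_1$ and $\Phi_2: C_1=x_1, C_2=\overline{x}_1, C_3=x_2, C_4=x_2$, and run your procedure for the pair $(i,\mathrm{sign})=(2,+)$. Under the isolating query $p_2=0$, all else $1/2$, the oracle may answer $C_3$ in both instances (in $\Phi_1$ every clause has violation $1/2$; in $\Phi_2$ the $x_2$ clauses are forced). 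Under your verification query $p_2=0$, $p_1=1/2+\delta$, the oracle in $\Phi_1$ is forced to a clause of type $\overline{x}_1$, i.e.\ $C_2$ or $C_4$, while in $\Phi_2$ it may freely answer $C_4$ (violation $1$); the common answer $C_4$ is consistent with both. Under $p_1=1/2-\delta$ the common answer $C_3$ works, and perturbing any $p_j$ with $j\neq 1,2$ changes nothing. So every pattern of answers you can elicit is simultaneously consistent with ``$x_2$ is present'' ($\Phi_2$) and ``$x_2$ is absent'' ($\Phi_1$): the dichotomy you call the technical heart of the argument is false. Repeated clauses of equal violation let the adversary rotate which index plays which role, so ruling out ``the response was on variable $j$'' one $j$ at a time cannot be made to work; your scheme is plausible only under a no-repetition promise, which the theorem does not assume.

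Note also that in this counterexample both instances are unsatisfiable, so the final verdict would happen to agree, but your correctness argument gives no way to reach a verdict without the per-literal bits, and those bits are not determined by any sequence of oracle answers. The paper's proof avoids learning altogether: it fixes a variable order and inductively maintains a list $L_i$ of at most $m$ partial assignments to $x_1,\ldots,x_i$, extending each by both values of $x_{i+1}$, querying each extension with the remaining variables set to $1/2$, grouping extensions by the clause index the oracle returns, and keeping one representative per index. A counting/locality argument shows the class associated with a clause on a ``future'' variable contains only good (extendable) partial assignments, so some good assignment survives every round; finally all candidates in $L_n$ are tried. If you want to salvage your approach, you would need either to restrict to repetition-free instances or to replace literal-learning with some instance-independent invariant of this kind.
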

\begin{proof}
Consider an ordering of the variables $x_1...x_n$. The algorithm
will work by inductively constructing a series of lists
$L_1,L_2,\ldots L_{n}$. Each list $L_i$ will contain a list of
partial assignments to the variables $x_1\ldots x_i$. Each list will
be of size at most $m$, with the exception of $L_n$ which will be of
size at most $2m$. Let us call a partial assignment $p$ to
$x_1\ldots x_i$ \emph{good} if there exists an assignment $p'$ to
the variables $x_{i+1}\ldots x_n$ such that the assignment $p\cup
p'$ satisfies $\Phi$. Correspondingly, call $p$ \emph{bad} if it
cannot be extended to a satisfying assignment of $\Phi$. (Note in
the case of 1SAT, every partial assignment is either good or bad.)
Our algorithm will guarantee that, if $\Phi$ is satisfiable, then at
least one assignment in each list is ``good''. Therefore, by
constructing the list $L_{n}$, then trying all assignments in
$L_{n}$, we will be guaranteed to find a satisfying assignment if
one exists.

We now describe how to construct the lists $\{L_i\}_{i\in[n-1]}$ by induction. 
The base case of $L_1$ is trivial - just add both $x_1=0$ and $x_1=1$ to the list. 
We now show how to construct $L_{i+1}$ given $L_i$. 
First, let $\tilde{L}_{i+1}$ be all possible extensions of the assignments in $L_i$ to the variable $x_{i+1}$. 
Clearly if one of the assignments in $L_i$ was good, then one of the assignments in $\tilde{L}_{i+1}$ is good.
However, when $i+1<n$, the size of $\tilde{L}_{i+1}$ could become too large - it is of size $2|L_i|$ which could at some point become larger than $m$. So we need to reduce the size of $\tilde{L}_{i+1}$ so that it contains at most $m$ partial assignments. To decide which partial assignments to keep, we will perform the following oracle queries: for each partial assignment $y\in\tilde{L}_{i+1}$, propose the following query $q_y$ to the oracle: set $x_1...x_{i+1}$ to 0 or 1 according to $y$, and set all other variables to value $1/2$. 
The oracle will return the identity of a clause $C_j$ which is worst violated by this fractional assignment. 
Now partition the elements of $\tilde{L}_{i+1}$ according to which clause $C_j$ was returned by the query.
This divides the elements of $\tilde{L}_{i+1}$ into at most $m$ equivalence classes.
To construct $L_{i+1}$, simply pick one element from each equivalence class of $\tilde{L}_{i+1}$.

Clearly $L_{i+1}$ has size at most $m$ by construction.
To complete the proof, we need to show that at least one element of $L_{i+1}$ is good.
First, by the induction hypothesis, at least one element of $L_{i}$ is good. 
This implies at least one element  $y^*\in\tilde{L}_{i+1}$ is good as well.
Consider what happens when we perform the query $q_{y^*}$.
Since $y^*$ is good, $q_{y^*}$ must satisfy all clauses involving the variables $x_1\ldots x_{i+1}$.
If there are no clauses involving the remaining variables $x_{i+2}\ldots x_n$, then $q_{y^*}$ satisfies the instance, so the oracle will tell us this and we can terminate the algorithm.
Otherwise, there is a clause involving some variable in $\{x_{i+2}\ldots x_n\}$. 
When we query $q_{y^*}$, the worst violated clause will be some clause $C_k$ involving a variable in $\{x_{i+2}\ldots x_n\}$, which will be violated with probability $1/2$. 
So the equivalence class corresponding to $C_k$ will contain a good assignment.
Furthermore, since $C_k$ involves one of the variables in $\{x_{i+2}\ldots x_n\}$, it will never be returned as the worst violated clause for query $q_{y'}$ for any bad assignment $y'\in\tilde{L}_{i+1}$, because any bad assignment will violate a clause involving $\{x_1\ldots x_{i+1}\}$ by 1, while $C_k$ will be violated only with probability 1/2.
Therefore the equivalence class corresponding to $C_k$ will contain only good assignments.
So by picking one assignment from each equivalence class, we will ensure $L_{i+1}$ contains at least one good assignment, as claimed.

The time to construct each list is $O(mn)$, and the algorithm constructs $n$ lists. Hence the algorithm runs in time $O(mn^2)$.
\end{proof}

\section{Hidden 2SAT without repetitions}
\label{sec:2SAT}
In this section, we consider a hidden $\SAT{2}$ formula $\Phi$ which is promised to contain no two clauses that are the same.  Although Proposition~\ref{clm:nolearn} shows that we cannot always hope to learn $\Phi$ directly, it does not rule out the possibility of learning some $\Phi'$ such that $\sat(\Phi') = \sat(\Phi)$.  In fact, this is exactly the approach we take.  

\begin{theorem} 
\label{thm:2SAT_learn}
Suppose $\Phi$ is a repetition-free $\HSAT{2}$ instance on $n$ variables.  Then, we can generate $\Phi'$ such that $\sat(\Phi') = \sat(\Phi)$ in $O(\poly(n))$.
\end{theorem}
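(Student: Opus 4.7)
My plan is to reconstruct an equivalent formula $\Phi'$ by testing, for every candidate clause type, whether $\Phi$ contains a clause of that type. A 2SAT instance on $n$ variables has at most $2n + 4\binom{n}{2} = O(n^2)$ distinct candidate types (unit or 2-clause), and since $\Phi$ is repetition-free each type is either absent from $\Phi$ or present exactly once. The central primitive is an \emph{isolation query} for a candidate clause $C$: a product distribution under which $C$, if present, is strictly the worst-violated clause in $\Phi$, so the oracle is forced to return $C$'s index. A constant number of such queries per candidate reveals whether $C \in \Phi$, and collecting the positive answers yields $\Phi'$ in $\poly(n)$ time.

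I would proceed in two phases. Phase~1 learns the unit clauses: for each literal $\ell$ on $x_i$, set $p_i$ so that $\ell$ is falsified with probability $1$, and set $p_j = 1/2$ for $j \neq i$. Then $(\ell)$ has violation $1$ while every other clause has violation at most $1/2$, so $(\ell) \in \Phi$ iff $(\ell)$ is uniquely the worst-violated. A pair of small perturbations of $p_i$ disambiguates ``returned index sits at violation $1$'' from ``returned index sits at violation $1/2$'' by observing whether the returned index stays fixed under the perturbation. After $O(n)$ queries we have a complete list of the unit clauses of $\Phi$, tagged by their oracle indices. Phase~2 learns the 2-clauses: for each pair $\{a,b\}$ and each of the four 2-clause types $T$ on $\{a,b\}$ not already rendered redundant by a learned unit clause on $a$ or $b$, issue the isolation query that falsifies both literals of $T$ and sets $p_j = 1/2$ otherwise. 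By the pruning no unit clause attains violation $1$ here, every other 2-clause has violation at most $1/2$, and $T$ has violation $1$ iff $T \in \Phi$. A small perturbation of $p_a$ and of $p_b$, together with the Phase~1 index list, certify the decision: $T \in \Phi$ iff the same non-unit index is returned under all perturbations. Any clause of $\Phi$ omitted from $\Phi'$ is implied by the learned unit clauses, so $\sat(\Phi') = \sat(\Phi)$.

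The main obstacle is the oracle's adversarial tie-breaking, which forces each isolation query to make the candidate \emph{strictly} worst-violated. The first lever for this is processing unit clauses before 2-clauses and pruning redundant 2-clause candidates so that no unit clause in $\Phi$ ties with a 2-clause candidate at violation $1$; the second lever is the perturbation sub-queries, which let us infer the violation level of a returned index from how its identity reacts to small parameter changes. A secondary subtlety, flagged in the paper's narrative, is that along the way a query may happen to correspond to a deterministic satisfying assignment of $\Phi$, in which case the algorithm aborts early with that assignment in hand rather than finishing $\Phi'$. Counting $O(1)$ queries per candidate and $O(n^2)$ candidates gives the claimed polynomial running time.
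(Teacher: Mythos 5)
Your central primitive is not sound. The oracle only ever hands you an \emph{index}, and your certificate for ``the candidate $C$ is present'' is that this index is stable under small perturbations of the probabilities of $C$'s variables. But when $C$ is absent, the worst-violated clause can be one whose violation is unaffected by (or remains strictly maximal under) those perturbations, so the index is exactly as stable as in the present case. Concretely, for Phase~1 take $\Phi=\{(x_1\vee x_2)\}$ and candidate unit $(x_1)$: under $p_1=0$, rest $1/2$, the unique worst-violated clause is $(x_1\vee x_2)$ at $1/2$, and it stays uniquely worst for $p_1=\epsilon$, so your rule declares $(x_1)\in\Phi$ and the output $\Phi'$ wrongly kills the satisfying assignment $x_1=0,x_2=1$. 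For Phase~2 take $\Phi=\{(\bar{x}_a\vee x_c),\,(x_c\vee x_d)\}$ and candidate $(x_a\vee x_b)$: under $p_a=p_b=0$, rest $1/2$, the unique maximum is $(x_c\vee x_d)$ at $1/4$ (everything touching $a$ or $b$ is at $0$), its index is not a Phase-1 unit index, and it is returned unchanged under perturbations of $p_a$ and $p_b$, so you wrongly add $(x_a\vee x_b)$ and again change $\sat(\Phi)$. This is precisely the difficulty the paper's test is engineered around: two preliminary deterministic queries (all-$0$ and all-$1$ outside $\{i,j\}$) first certify that clauses of specific types \emph{exist}, and then the pair of fractional queries at $1/4$ and $3/4$ is chosen so that, when the candidate is absent, any clause that can be returned on the $1/4$ query has violation at most $1/4$ on the $3/4$ query, strictly below the guaranteed $9/16$, forcing the returned index to change. ``Same index on both queries'' is sound because of those existence certificates; mere stability under an infinitesimal perturbation carries no such guarantee.

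There is a second gap: the theorem demands that you \emph{generate} $\Phi'$ with $\sat(\Phi')=\sat(\Phi)$, yet you abort ``with that assignment in hand rather than finishing $\Phi'$'' whenever a query turns out to satisfy the instance. That proves the solving statement (Theorem~\ref{thm:rep_free_2SAT}), not this one. For example, with $\Phi=\{(\bar{x}_a\vee\bar{x}_b)\}$ your query for the candidate $(x_a\vee x_b)$ already has total violation $0$, the oracle reports satisfaction, and you stop without anything equivalent to $\Phi$. The bulk of the paper's proof of Theorem~\ref{thm:2SAT_learn} is exactly this branch: assuming WLOG the discovered satisfying assignment is all-ones, it identifies the forced variables by re-running the learning procedure with $x_i=0$, then learns the $(x_i\vee\bar{x}_j)$ clauses via satisfiability tests of such restrictions, and finally detects the $(x_i\vee x_j)$ clauses by comparing returned indices against those already learned. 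Without a sound replacement for both the presence test and this post-satisfying-assignment stage, the proposal does not establish the theorem.
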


Before proving Theorem~\ref{thm:2SAT_learn}, we show that a satisfying assignment can be found when 
$\Phi$ is a satisfiable repetition-free $\SAT{2}$ instance. 

\begin{theorem}
\label{thm:rep_free_2SAT}
Suppose $\Phi$ is a hidden repetition-free $\SAT{2}$ instance on $n$ variables.  Then it is possible to generate a satisfying assignment in time $O(n^2)$.
\end{theorem}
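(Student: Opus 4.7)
My plan is to iteratively construct a satisfying assignment variable-by-variable, committing one variable per iteration using $O(n)$ oracle queries, for a total of $O(n^2)$ queries. This adapts the spirit of Theorem~\ref{thm:1SAT}'s proof but avoids maintaining a large list of candidate partial assignments, by exploiting the satisfiability promise together with the no-repetition assumption.

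I would maintain a partial assignment $a$, initially empty, with the invariant that $a$ extends to some satisfying assignment of $\Phi$; call such an $a$ \emph{consistent}. At each iteration I would query the oracle with the distribution that fixes $a$ and sets every unassigned variable to $1/2$; if the oracle reports success, the instance is satisfied by any extension of $a$. Otherwise the oracle returns a worst-violated clause $C$, whose violation probability under this distribution lies in $\{1/4, 1/2\}$ (consistency of $a$ rules out probability $1$): the value is $1/2$ exactly when $C$ has one literal already falsified by $a$ and the other literal on an unassigned variable (a unit clause in the reduced instance), and $1/4$ when both of $C$'s literals are on unassigned variables.

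Next, I would identify an unassigned variable appearing in $C$ using $O(n)$ probes: for each unassigned $x_j$, query with $x_j = 0$ (respectively $x_j = 1$) and all other unassigned variables kept at $1/2$, and observe the returned clause. If $x_j$ is in $C$, then for exactly one value $v \in \{0,1\}$ the literal of $C$ on $x_j$ becomes true, so $C$ is satisfied and no longer returned; for the opposite value the literal is falsified, so $C$ remains the worst-violated clause (now with a strictly higher violation probability). Since clauses have distinct indices under the no-repetition assumption, this change in the oracle's answer unambiguously pinpoints $x_j$ and the sign of its literal in $C$. I then commit $x_j$ to the value that satisfies $C$.

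The main obstacle is showing that each commit preserves consistency of $a$. In the unit-clause case (probability $1/2$), the committed value is forced: every satisfying extension of $a$ must satisfy $C$ via its unique unassigned literal, so the commit is safe. In the probability-$1/4$ case, $C$ is not immediately forcing, but the key 2SAT-specific fact is that if both candidate literals of $C$ were inconsistent with every satisfying extension of $a$, then $a$ itself would be inconsistent, contradicting the invariant; hence at least one literal of $C$ leads to a consistent extension. To select the correct one, after committing I would issue one further oracle query: if the reduced instance is now inconsistent, the oracle exhibits a clause with violation probability $1$ (detected by checking whether the returned clause involves only committed variables, which can be tested with a single perturbed auxiliary query), and I flip the commit. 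Because at most one flip is required per iteration, each iteration uses $O(n)$ queries, and since there are at most $n$ commits in total the overall running time is $O(n^2)$. $\Box$
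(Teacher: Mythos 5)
Your plan has two genuine gaps, both at the points you yourself flag as delicate. First, the variable-identification probes are unsound. The oracle returns the \emph{globally} worst-violated clause, with adversarial tie-breaking, and fixing a probe variable $x_j$ changes the violation probabilities of clauses other than $C$. If $x_j\notin C$, setting $x_j$ to one value can raise some other clause to violation $1/2$ or $1$ (above $C$'s $1/4$ or $1/2$), so $C$ is displaced for that value but returned for the other --- exactly the pattern you interpret as ``$x_j\in C$'' (a false positive). Conversely, if $x_j\in C$ and you probe the falsifying value, a clause with one literal already falsified by $a$ and its other literal on $x_j$ reaches violation $1$ and displaces $C$ (a false negative). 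Note also that repetition-freeness buys you nothing here: clause \emph{indices} are always distinct; the assumption concerns literal sets. Second, the consistency-restoration step fails. In the $1/4$ case a wrong commit need not falsify any clause outright: with $(\bar{x}_j \vee x_k)$ and $(\bar{x}_j \vee \bar{x}_k)$ both present, committing $x_j=1$ makes $a$ inconsistent while every clause still has violation at most $1/2$, so your ``is some clause violated with probability $1$'' check reports nothing and the bad commit survives. The contradiction only surfaces after later forced ($1/2$-violation) commits, at which point flipping the most recent variable does not restore the invariant --- this is precisely where a classical 2SAT solver needs unit propagation and backtracking to the branch point, and one auxiliary query cannot substitute for it.

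For comparison, the paper avoids both issues by not constructing the assignment online at all: for each of the $O(n^2)$ clause types it runs two $0/1$ queries plus a matched pair of fractional queries ($1/4$ versus $3/4$ on all variables outside the candidate pair), engineered so that the candidate clause, if present, is violated with probability $1$ under both and hence returned with the \emph{same} index twice, while if it is absent the worst-violated clause provably changes type (hence index) between the two queries. This learns every unobscured clause (or stumbles on a satisfying assignment), and a standard 2SAT algorithm is then run on the learned equivalent formula. If you want to salvage your online approach, you would need a sound way to test membership of a variable in the returned clause despite global violation shifts, and a mechanism equivalent to propagation/backtracking to recover from a wrong branch --- neither of which the current write-up supplies.
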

\begin{proof}
The idea is to attempt to learn each clause present in the formula.  Suppose we wish to determine if the clause $(x_i \vee x_j)$ is present in $\Phi$ (an analogous procedure works to determine if a $\SAT{1}$ clause $x_i$ is in $\Phi$).  We can assume that the clause is unobscured because the presence of an obscured clause does not affect the set of satisfying assignments.  Run the following procedure:
\begin{enumerate}
\item First query the oracle with the assignment $x_i =0$, $x_j =0$, $x_k =0$ for $k \neq i, j$.  If this is a satisfying assignment, then we are done.  Otherwise, we know that there must exist a clause of type:
\begin{inparaenum}[\upshape(\itshape a\upshape)]
\item $(x_i \vee x_j)$;
\item $(x_i \vee x_k)$ for $k \neq i, j$;
\item $(x_j \vee x_k)$ for $k \neq i, j$; or
\item $(x_{k_1} \vee x_{k_2})$ for $k_1, k_2 \neq i, j$.
\end{inparaenum}
\item Now query the oracle with the assignment $x_i =0$, $x_j =0$, $x_k =1$ for $k \neq i, j$.  As before, if this is satisfying, we are done.  Otherwise, we know that there must exist a clause of type:
\begin{inparaenum}[\upshape(\itshape a\upshape)]
\item $(x_i \vee x_j)$;
\item $(x_i \vee \bx_k)$ for $k \neq i, j$;
\item $(x_j \vee \bx_k)$ for $k \neq i, j$; or
\item $(\bx_{k_1} \vee \bx_{k_2})$ for $k_1, k_2 \neq i, j$.
\end{inparaenum}
\item We can now construct an explicit test for the presence of the clause $(x_i \vee x_j)$.  We will propose two fractional assignments to the oracle.  If $(x_i \vee x_j)$ is present, then the clause returned each time will be the same.  If it is not present, then the returned clause will be different.  Formally, query the oracle with the assignment $x_i =0$, $x_j =0$, $x_k =\frac{1}{4}$ for $k \neq i, j$ and then with the assignment  $x_i =0$, $x_j =0$, $x_k =\frac{3}{4}$ for $k \neq i, j$.  Table~\ref{table:davis_put} shows the accompanying violations.
\begin{table}[ht!]
\centering
\scalebox{.9}{
\begin{tabular}{c | c | c | c | c | c | c | c | c}
& $(x_i \vee x_j)$ & $(x_i \vee x_k)$  & $(x_j \vee x_k)$ & $(x_{k_1} \vee x_{k_2})$ & $(x_{k_1} \vee \bx_{k_2})$ & $(x_i \vee \bx_k)$ & $(x_j \vee \bx_k)$ &  $(\bx_{k_1} \vee \bx_{k_2})$ \\ \hline
1/4  & 1 & 3/4 & 3/4 & 9/16 & 3/16 & 1/4 & 1/4 & 1/16 \\
3/4  & 1 & 1/4 & 1/4 & 1/16 & 3/16 & 3/4 & 3/4 &  9/16
\end{tabular}}
\caption{Violation of the clauses based on the fractional assignments of 1/4 and 3/4.}
\label{table:davis_put}
\end{table}
It is clear that if $(x_i \vee x_j)$ is present in the formula, then it is returned on both assignments.  If it is not present, then from the table we can also see that one of the clauses known to exist from our first query must be returned on the 1/4 fractional assignment.  However, one of the clauses known to exist from our second query must be returned on the 3/4 fractional assignment.  Thus, the clause returned by the oracle changes when $(x_i \vee x_j)$ is not present.
\end{enumerate}
Notice that the above procedure also works to detect all $\SAT{1}$ and $\SAT{2}$ clause types.  Therefore, if we complete the above procedure with all $O(n^2)$ clause types without finding a satisfying assignment, then we have identified all  unobscured clauses in the formula. It is clear that the conjunction of these clauses forms a formula $\Phi'$ such that $\sat(\Phi') = \sat(\Phi)$.  Therefore, we can use any 2SAT algorithm which runs in time $O(n^2)$ on $\Phi'$ to find some satisfying assignment of $\Phi$.
\end{proof}

\noindent{\bf Proof of Theorem~\ref{thm:2SAT_learn}}
First run the procedure in the proof of Theorem~\ref{thm:rep_free_2SAT}.  Notice that it either learns all unobscured clauses or produces a satisfying assignment.  Therefore, let us assume that we have some satisfying assignment, and without loss of generality, let us assume that it is the all ones assignment.  This implies that each clause contains at least one positive literal.

Let us now give a procedure to find all those variables which must
necessarily be set to 1; the $\SAT{1}$ clauses corresponding to
these variables will be added to $\Phi'$.  To determine whether or
not $x_i$ must be set to 1 in $\Phi$, set all queries in
Theorem~\ref{thm:rep_free_2SAT} so that $x_i = 0$.  If the formula
$\Phi$ is unsatisfiable, then we know that $x_i$ must be set to 1. 
If, however, all such variable assignments are still satisfiable,
then all remaining clauses that are unobscured by the current
assignment must be clauses on 2 variables.  Furthermore, since
all-ones is a satisfying assignment, these clauses must either be of
the form $(x_i \vee x_j)$ or $(x_i \vee \bx_j)$.  Clearly, the above
procedure ends after a polynomial number of steps.  Notice that this
corresponds to reaching a ``branch point'' in a typical $\SAT{2}$
algorithm.

To simplify the exposition, suppose that after the above procedure
the remaining variables are still on $x_1, \ldots, x_n$ and that we
are querying the formula $\Phi$.  Let us first attempt to learn the
clauses of the form $(x_i \vee \bx_j)$.  Once again, set $x_i=0$ and
$x_j = 1$ and use Theorem~\ref{thm:rep_free_2SAT} to determine if
the induced formula is satisfiable.  If it is satisfiable, then
clearly $\Phi$ did not contain $(x_i \vee \bx_j)$.  However, if it
isn't satisfiable, then indeed $(x_i \vee \bx_j)$ must be present in
$\Phi$ because the assignment $x_i=0$, $x_k =1$ for all $k \neq i$
satisfies all other $\SAT{2}$ clauses.

We now only left to learn the clause of the form $(x_i \vee x_j)$. 
Suppose we propose the assignment $x_i = 0, x_j = 0,$ and $x_k = 1$
for all $k \neq i, j$.  If the formula is not satisfied then the
returned clause type must be one of
\begin{inparaenum}[\upshape(\itshape a\upshape)]
\item $(x_i \vee x_j)$;
\item $(x_i \vee \bx_k)$; or
\item $(x_j \vee \bx_k)$;
\end{inparaenum}
as all such clauses are maximally violated (with probability
1).  Now propose the assignment $x_i = 0, x_j = 0,$ and $x_k = .5$
for all $k \neq i, j$.  Clearly, if $(x_i \vee x_j)$ is not present,
the oracle returns a clause that contains both a positive and
negative literal.  Having already learned clauses of this type, we
know that $(x_i \vee x_j)$ must not be present in $\Phi$.  Likewise,
if $(x_i \vee x_j)$ is present, the oracle returns a clause index
which we have not previously learned, from which we can conclude
that $(x_i \vee x_j)$ is indeed present. $\Box$\\

While the above procedure may seem elementary, it acts as a stepping stone to tackle the harder problem of learning an unknown input instance of quantum $\SAT{2}$, which is introduced and discussed in the subsequent sections. 

\section{Quantum SAT Preliminaries}
\vspace*{-0.5em}
\subparagraph*{Notations.} A quantum system of $n$ qubits is
described using a Hilbert space $\cc{H} = \cc{H}_1 \otimes \cc{H}_2
\otimes \ldots \otimes \cc{H}_n$ where each $\cc{H}_i$ is a
two-dimensional Hilbert space of the $i^{th}$ qubit. Vectors in
$\cc{H}$ are called \emph{pure states} and they describe a state of
the system. By adding a subscript $i$ to
the vector $\ket{\alpha}{}$ we indicate that $\ket{\alpha}{i}$ is
defined in the local Hilbert space $\cc{H}_i$ of the $i^{th}$ qubit.
Similarly, $\ket{\psi}{ij}$ denotes a $2$-qubit state $\ket{\psi}{}$
in $\cc{H}_i \otimes \cc{H}_j$. In any local qubit space $\cc{H}_i$,
we pick an orthonormal basis $\ket{0}{}, \ket{1}{}$ so that every 
$1$-qubit state $\ket{\alpha}{}$ can be expanded as $\ket{\alpha}{}=
\alpha_0 \ket{0}{} + \alpha_1 \ket{1}{}$.  We define its orthogonal
state by $\ket{\alpha^\bot}{}\EqDef \alpha_1 \ket{0}{} - \alpha_0
\ket{1}{}$;\footnote{There are, of course, continuously many
orthogonal states for every $\ket{\alpha}{}$, so here we simply
choose one in a canonical way.} clearly, $\qip{\alpha}{\alpha^\bot}
= 0$. A standard geometrical representation of the state space of a
single qubit is the Bloch sphere which is illustrated, for
completeness, in Figure~\ref{fig:Bloch}.

\begin{center}
\begin{figure}[h!]
	\begin{minipage}[b]{0.5\textwidth}
	The Bloch sphere is a geometrical representation of the state space of a single qubit and as the name suggests is a sphere~\cite{NC00}. The antipodal points of the Bloch sphere corresponds to orthogonal states and in general, the north and south poles are usually indexed as $\ket{0}{}$ and $\ket{1}{}$ respectively as shown in Figure~\ref{fig:Bloch}. The points on the surface of the sphere correspond to pure states and the interior of the sphere corresponds to mixed states - probabilistic mixtures of pure states. The center of the sphere is the completely mixed state $(\mathbb{I}/2)$ which can be interpreted as an equal parts mixture of any state and its orthogonal state. The interested reader is referred to~\cite{NC00} for more details on the exact correspondence between quantum states and points on the Bloch Sphere.
 	\end{minipage}
 	\hfill
 	\begin{minipage}[b]{0.4\textwidth}
  	\centering
  	\def\svgwidth{150pt}
  	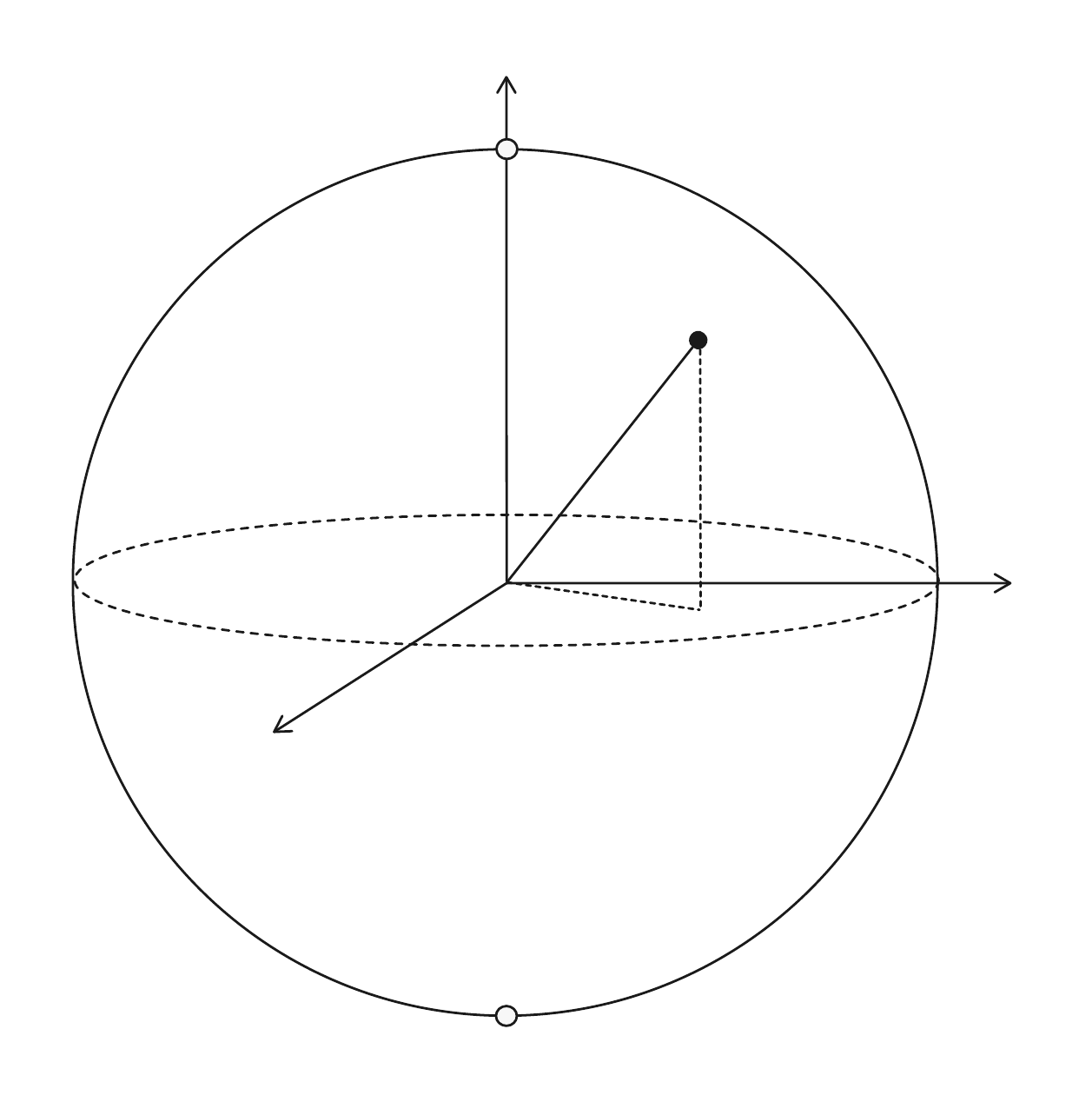
  	\caption{The Bloch sphere with $\ket{\psi}{}$}
  	\label{fig:Bloch}
	\end{minipage}
\end{figure}
\end{center}  

\vspace*{-1.5em}
A more general way to describe a quantum state is by its
\emph{density matrix}.  Density matrices can be
viewed as statistical ensembles of pure states that are described by
vectors. A density matrix representation a single pure state
$\ket{\psi}{}$ is given by the matrix $\rho = \bk{\psi}{}$. General
density matrices are given as a convex sum of density matrices of
the pure states with the coefficient summing up to $1$: $\sigma =
\sum_i p_i \bk{\psi_i}{}$ where $\forall i, p_i \geq 0$ and $\sum_i
p_i = 1$. Alternatively, they are defined as semi-definite operators
whose trace is equal to $1$. For instance, the density matrix
$\frac{1}{2}\mathbb{I}$ can be written as $\frac{1}{2}\mathbb{I} =
\frac{1}{2}\bk{0}{} + \frac{1}{2}\bk{1}{}$. The state of a quantum
system can always be fully specified by a density matrix.

Observables in quantum mechanics are associated with Hermitian
operators. The eigenvalues of such an operator correspond to the
possible outcomes of a measurement.  Given such a Hermitian
operator $A$ and a pure state $\ket{\psi}{}$, the expression
$\bra{\psi}{}A\ket{\psi}{}$ is the \emph{expectation value} of $A$.
It is the result we get if we measure $A$ over many copies of the
same state $\ket{\psi}{}$ and average the result. One can use the
Chernoff bound to deduce that, with high probability, if we measure
$A$ over $\poly(n)$ copies of a state $\ket{\psi}{}$, we obtain an
approximation to $\bra{\psi}{}A\ket{\psi}{}$ with an additive error
of $1/\poly(n)$. 

The expectation value of $A$ with respect to a state which is
described by a density matrix $\rho$ is given as $\Tr(\rho A)$. Note
that if $\rho$ is given by $\rho=\sum_i p_i
\ket{\psi_i}{}\bra{\psi_i}{}$ with $\sum_i p_i=1$, then $\Tr(\rho A)
= \sum_i p_i\bra{\psi_i}{}A\ket{\psi_i}{}$, which justifies the
interpretation of $\rho$ as a statistical ensemble of pure states.
Like in the pure state case, using $\poly(n)$ identical copies of
$\rho$, one can estimate the expectation value $\Tr(\rho A)$ up to
an additive error of $1/\poly(n)$.
 
\vspace*{-0.5em} 
\subparagraph*{Local Hamiltonians and Quantum
SAT.} While classically $\SAT{}$ is given as a CSP, quantum
$\SAT{k}$ ($\QSAT{k}$) is defined as a special case of the $k$-local
Hamiltonian problem. A $k$-local Hamiltonian on $n$ qubits is a
Hermitian operator $H = \sum_{e=1}^m h_e$, where each $h_e$ is a
\emph{local} Hermitian operator acting non-trivially on at most $k$
qubits. Formally, it is written as $h_e =
\hat{h}_e\otimes\mathbb{I}_{rest}$, where $\hat{h}_e$ is defined on
the Hilbert space of $k$ qubits, and $\mathbb{I}_{rest}$ is the
identity operator on the Hilbert space of the rest of the qubits.
When it is clear from the context, we often use $h_e$
instead of $\hat{h}_e$, even while referring to its action on the
local Hilbert space.

In physics, $k$-local Hamiltonians model the local interactions
between particles in a many-body system and are the central tool for
describing the physics of such systems. The \emph{energy} of the
system for every state $\ket{\psi}{}$ is defined by $E_\psi(H)
\EqDef \bra{\psi}{}H\ket{\psi}{} = \sum_e
\bra{\psi}{}h_e\ket{\psi}{}$. The lowest possible energy of the
system is called the \emph{ground energy} and is denoted by
$E_0(H)$. It is easy to verify that $E_0(H)$ is
the lowest eigenvalue of $H$. The corresponding
eigenspace is called the \emph{ground space} of the system, and
its eigenvectors are called \emph{ground states}. A
central task in condensed matter physics is to understand the
properties of the ground space, as it determines the low-temperature
physics of the system.

There is a deep connection between the problem of approximating the
ground energy of a local Hamiltonian and the classical problem of
finding an assignment with minimal violations in a local CSP. In
both cases, one tries to minimize a global function that is given in
terms of local constraints. This connection is evident if we
consider the special case when the local Hermitian operators $h_e$
are given as local \emph{projectors} $\Pi_e$. Then for any state
$\ket{\psi}{}$, the local energy $\bra{\psi}{}\Pi_e\ket{\psi}{}$ is
a number between 0 and 1 that can be viewed as a measure to how much
the state is `violating' the quantum clause $\Pi_e$. When the local
energy is $0$, the state is inside the null space of the projector
$\Pi_e$ and is said to satisfy the constraint. The total energy of
the system, $E_\psi=\bra{\psi}{}H\ket{\psi}{} = \sum_e
\bra{\psi}{}\Pi_e\ket{\psi}{}$ then corresponds to the total
violation of the state $\ket{\psi}{}$. When the ground energy of the
system is $0$, necessarily the ground space is the non-vanishing
intersection of all the null spaces of the local projectors, and we
say that the system is satisfiable. From a physical point of view,
such a system is called \emph{frustration-free}, since any ground
state of the global system also minimizes the energy of every local
term $\Pi_e$.

The quantum $\QSAT{k}$ problem is analogous to the classical
$\SAT{k}$ problem. Whereas in the $\SAT{k}$ case we are asked to
decide whether a $k$-local CSP is satisfiable or not, in the
$\QSAT{k}$ problem we are asked to determine whether the ground
energy of a $k$-local Hamiltonian made of projectors is 0 or not.
Unlike the truth values of $\SAT{}$ clauses, however, the ground
energy of a $k$-local Hamiltonian is a continuous function that is
sensitive to any infinitesimal change in the form of the local
projectors. To make the $\QSAT{k}$ problem more physically relevant,
we define it using a promise: Given a $k$-local Hamiltonian of
projectors over $n$ qubits and a value $b > \frac{1}{n^\alpha}$ for
some constant $\alpha$, decide if the ground energy of $H$ is $0$
(the $\yes$ case) or the ground energy of $H$ is at least $b$ (the
$\no$ case). Bravyi~\cite{Bra06} showed that $\QSAT{k}$ for $k \geq
4$ is $\QMA_1$-complete while Gosset and Nagaj~\cite{GN13} showed
that $\QSAT{3}$ is also $\QMA_1$-complete. The class $\QMA_1$ stands
for `Quantum Merlin Arthur' with one-way error, and is the quantum
generalization of the classical $\MA_1$ class with one-way error.
The differences are that the witness can be a quantum state over
$\poly(n)$ qubits, and the verifier can be an efficient quantum
machine. In Ref.~\cite{Bra06} it was known that $\QSAT{2}$ has an
$O(n^4)$ classical algorithm, and is therefore in $\P$. More
recently linear time algorithms for the same problem have been
constructed~\cite{ASSZ15,dBG15}. 

As the Hamiltonian in a $\QSAT{2}$ instance is a sum of $2$-qubit
projectors, every local projector is defined on a $4$-dimensional
Hilbert space and is of rank $1, 2$ or $3$. The non-zero subspace 
of each projector (the subspace on which it projects) is commonly
referred to as the \emph{forbidden space} of that projector and the
orthogonal subspace is its \emph{solution space}. Finally, we say
that $H$ has \emph{no repetitions} if there does not exist any pair
of different projectors $\Pi_{e},\Pi_{e'}$ which act non-trivially
on the same set of qubits. In the case of repetition free
$\QSAT{2}$, each projector can also be indexed by the qubit pairs it
acts on and the instance can be written as $H = \sum_{(u,v) \in S}
\Pi_{uv}$, where $S\subseteq [n]\times[n]$ and each $\Pi_{uv}$ is
non-zero. For any projector $\Pi$ and a state $\ket{\psi}{}$, we say
that $\ket{\psi}{}$ \emph{satisfies} $\Pi$ up to $\epsilon$ if
${E_\psi (\Pi)}\EqDef \bra{\psi}{}\Pi\ket{\psi}{} \leq \epsilon^2$.
The energy $E_\psi (\Pi)$ is the \emph{violation energy} of
$\ket{\psi}{}$ with respect to the projector $\Pi$. Notice that when
the state of the system is described by a density matrix $\rho$, its
violation energy with respect to $\Pi$ is given by
$E_\rho(\Pi)\EqDef \Tr(\rho\Pi)$
 
Finally, a $\QSAT{2}$ Hamiltonian $H$ is said to have a \emph{Star-like}
configuration if there exists a pair of qubits $u,v$ with
$\Pi_{u,v}\neq 0$ such that \emph{all} projectors involve either $u$
or $v$.
 

\vspace{-0.5em}
\subparagraph*{Hidden QSAT.} The hidden version
of $\QSAT{}$ is defined analogously to the classical case. Our task
is to decide whether a $k$-local Hamiltonian $H=\sum_e \Pi_e$ that
is made of $m$ $k$-local projectors over $n$ qubits is
frustration-free with $E_0=0$ ($\yes$ instance) or $E_0 > m \cdot
2\epsilon^2$ ($\no$ instance). Here, $\epsilon>0$ is some threshold
parameter that can be assumed to be inverse polynomially small in
$n$. Moreover, as in $\HSAT{}$, here we do not
know the Hamiltonian itself; instead we can only send \emph{quantum}
states to a ``worst violated oracle'', which will return the
index $e$ of the projector $\Pi_e$ with the highest violation
energy. Since we want to generalize the notion of a probabilistic
assignment that is used in $\HSAT{}$, we allow
ourselves to send the oracle qubits that hold a general quantum
state $\rho$, which can only be described by a density matrix. 
Recall from the previous section that this can be regarded as an
ensemble of pure quantum states. Then the oracle will return the
the index $e$ for which $\Tr(\Pi_e\rho)$ is maximized. If the total energy of the proposed state is $\leq m \cdot \epsilon^2$ then the oracle will indicate that a satisfying assignment has been found.

\section{Hidden Quantum 1SAT} \label{sec:1qsat} 
The algorithm used to solve $\HSAT{1}$ can be extended to solve the
$\HQSAT{1}$ problem as well. A $1$-local projector defined on
$\mathbb{C}^2$ is satisfiable if it is of rank at most $1$ and can
be viewed as setting the direction of the qubit on the Bloch sphere.
Unlike the classical case, where we may view the $\SAT{1}$ clauses as
either the $\bk{0}{}$ or $\bk{1}{}$ projectors, here the projectors can
point in any direction in the Bloch sphere. To handle the continuous 
nature of the Bloch Sphere, we consider discretizing it by using an 
$\epsilon$-net that covers the whole sphere. This allows us to 
generalize the lists of $0-1$ strings used in $\HSAT{1}$ into lists 
of $n$-qubit product states where each qubit is assigned an element 
of the $\epsilon$-net.

Given a 1-local projector $\bk{\psi}{}$, its zero space is spanned by
$\ket{\psi^{\bot}}{}$. We can divide the Bloch sphere into two hemispheres,
one hemisphere containing states $\ket{\phi}{}$ having $| \qip{\psi}{\phi}| \leq \frac{1}{2}$ 
and the other with states having $ | \qip{\psi}{\phi}| > \frac{1}{2}$. An $n$-qubit state 
$a = \ket{a_1}{} \ket{a_2}{} \ldots \ket{a_n}{}$ is called \emph{good} if for each qubit $i$, 
where $\ket{\psi_i}{}$ is its forbidden state, $|\qip{\psi_i}{a_i}| \leq \frac{1}{2}$ and 
\emph{bad} if $\forall i, |\qip{\psi_i}{a_i}| > \frac{1}{2}$. 
For the $n$-qubit state $a = \ket{a_1}{} \ket{a_2}{} \ldots \ket{a_n}{}$, let
$a' \EqDef \ket{a_1^\bot}{} \ket{a_2^\bot}{} \ldots \ket{a_n^\bot}{}$.

Now, we can sketch the $\HQSAT{1}$ algorithm. Adapting the 
process described in Theorem~\ref{thm:1SAT} for an arbitrary $n$-qubit 
state $a$ gives a list of $n$-qubit states, $L_{a / a'}$, 
where at least one state is \emph{good}. This is formally stated in 
Lemma~\ref{lem:Q1SATa}. 

\begin{lemma}
\label{lem:Q1SATa} 
  Let $a=\ket{a_1}{} \otimes \ldots \otimes \ket{a_n}{}$ where $\ket{a_i}{}, \ket{a_i^\bot}{}$ is a basis for
  qubits $i$, for $i = 1, \ldots, n$. Then one can produce a list,
  $L_{a / a'} \subset \bigotimes_{i=1}^n \{\ket{a_i}{},
  \ket{a_i^\bot}{}\}$ of at most $2mn$ states such that, if the
  instance is satisfiable, there is at least one \emph{good}
  $n$-qubit state in the list. The time taken to produce this list
  is $O(n^2m)$.
\end{lemma}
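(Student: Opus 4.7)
The plan is to adapt the inductive list construction of Theorem~\ref{thm:1SAT} to the quantum setting. I will build a sequence of lists $L_1, L_2, \ldots, L_n$, where each $L_i$ is a subset of $\bigotimes_{j=1}^i \{\ket{a_j}{}, \ket{a_j^\bot}{}\}$, and $L_{a/a'}$ is taken to be $L_n$. The invariant I aim to maintain by induction on $i$ is: whenever the hidden instance is satisfiable and the basis admits at least one good $n$-qubit assignment, $L_i$ contains a partial product state $y$ whose components satisfy $|\qip{\psi_j}{y_j}| \leq 1/2$ for all $j \leq i$.

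The base case is $L_1 = \{\ket{a_1}{}, \ket{a_1^\bot}{}\}$: any good $n$-qubit state restricted to qubit 1 is good on that qubit, so at least one element of $L_1$ is good. For the inductive step I first form $\tilde{L}_{i+1}$, of size at most $4m$, by extending every $y \in L_i$ with both $\ket{a_{i+1}}{}$ and $\ket{a_{i+1}^\bot}{}$. For each $y' \in \tilde{L}_{i+1}$ I query the oracle with the density matrix $\rho_{y'} \EqDef \bk{y'}{} \otimes (\mathbb{I}/2)^{\otimes (n-i-1)}$. Under $\rho_{y'}$, a projector $\bk{\psi_j}{}$ on a past qubit $j \leq i+1$ contributes violation $|\qip{\psi_j}{y'_j}|^2$, while a projector on a future qubit $j > i+1$ contributes violation exactly $\tfrac{1}{2}$. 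I then partition $\tilde{L}_{i+1}$ into equivalence classes indexed by the pair (returned clause, extension chosen at qubit $i+1$); there are at most $2m$ such classes, and choosing one representative from each yields a candidate $L_{i+1}$ of size at most $2m$.

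For invariant preservation, note that extending a good $y^* \in L_i$ by the good choice $\ket{a_{i+1}^?}{}$ at qubit $i+1$ produces a state $y^{**}$ whose past violations are all $\leq 1/4$, strictly below the $1/2$ future violation, so the oracle returns some future clause $c^*$. Every element of the class $(c^*, \ket{a_{i+1}^?}{})$ is thus good on qubit $i+1$, and any such element's past violations are bounded by $1/2$ (else $c^*$ could not be the worst violated clause). The main obstacle is sharpening this to the strict $\leq 1/4$ threshold required by the definition of \emph{good}: I plan to handle this by restricting each class so that the representative inherits its first $i$ qubits from a good parent in $L_i$, which by the inductive hypothesis already has overlaps $\leq 1/2$ with the forbidden states, hence violations $\leq 1/4$ on all past qubits. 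Tracking parent identity may multiply the class count by an additional factor, which is still absorbed into the $2mn$ total-list budget across the $n$ levels.

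For the running time, each inductive step issues $O(m)$ queries, each requiring $O(n)$ work to prepare and submit the density matrix specification, giving $O(mn)$ per level and $O(n^2 m)$ over all $n$ levels. The final list $L_n$ has size at most $2mn$ (accounting for the parent-tracking refinement), matching the stated bound.
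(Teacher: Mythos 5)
Your core construction (extend each partial product state in both basis directions, query with the extension tensored with $(\mathbb{I}/2)^{\otimes(n-i-1)}$, group by the returned clause, keep representatives) is the same as the paper's adaptation of Theorem~\ref{thm:1SAT}, but your inductive step has a genuine gap: the purity of the ``good'' class rests on the oracle returning a \emph{future} clause $c^*$ (violation exactly $1/2$ from the maximally mixed qubits), which requires a projector on some qubit in $\{i+2,\ldots,n\}$. At the final level $i+1=n$ no future clause exists, and the same can happen at intermediate levels when all projectors act on already-assigned qubits. In that case the worst violated clause for a good extension is a past clause, and a bad string can return that very same clause (violated even more strongly), landing in the same equivalence class; picking one representative per class may then discard every good state. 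The paper's proof is organized around exactly this failure mode: it refuses to compress at the last step (doubling the list instead) and reruns the entire construction $n$ times, once with each qubit placed in the last position, taking the union of the resulting lists. For a nonempty instance at least one ordering has a projector on its last qubit, so in that run every intermediate level has a future clause and compression is safe. This is precisely where the $2mn$ list size and the $O(mn^2)$ running time come from; your proposal omits it, and your final list of size $2m$ carries no guarantee at the last level.

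Your patch for the $\leq 1/4$ threshold is also not viable as stated: ``restricting each class so that the representative inherits its first $i$ qubits from a good parent'' presupposes the algorithm can identify good parents, which it cannot --- that is the whole reason for the equivalence-class device --- and refining classes by parent identity destroys the compression (every element of $\tilde{L}_{i+1}$ has a distinct parent, so list sizes would grow by a factor of $2$ per level rather than being ``absorbed into the $2mn$ budget''). The paper does not resolve this threshold issue inside Lemma~\ref{lem:Q1SATa} either; it only argues with the weaker per-qubit bound (all violations at most $1/2$ for the retained state) and defers the sharpening to Theorem~\ref{thm:Q1SATb}, where overlaps from several different bases are intersected on the Bloch sphere.
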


\begin{proof}
Construct the list inductively as in Theorem~\ref{thm:1SAT} where at stage $k$, $L_{k, a / a^{\bot}}$ contains at most $m$ strings at least one of which is good for qubits $1, \ldots, k$ if the instance is satisfiable, by constructing trials as follows. At stage $k$, replace $\{0, 1\}$ with $\{a_k, a_k^\bot\}$ for qubit $k$ and the value of $\frac{1}{2}$ with $\frac{\mathbb{I}}{2}$, the completely mixed state, for qubits $k+1, \ldots, n$ so that $L_{k, a / a^{\bot}}$ contains states from the set 
\begin{align*}
\bigotimes_{i=1}^k \{\ket{a_i}{}, \ket{a_i^\bot}{}\} \otimes \left(\frac{\mathbb{I}}{2}\right)^{\otimes (n-k)}.
\end{align*}
This almost finishes the process except for one caveat when there exists no projector on qubits $k+2, \ldots, n$ while constructing $L_{k+1, a / a^{\bot}}$ from $L_{k, a / a^{\bot}}$. This situation also occurs at the last step while constructing $L_n$. In both cases, while proposing a good state, all violations are $\leq \frac{1}{2}$ and any clause $\id$ returned by the oracle involves a qubit in $1, \ldots, k+1$. This same clause could also be violated with probability $> \frac{1}{2}$ when a bad string is proposed which will incorrectly be put in the same equivalence class as the good one. Then, picking just one representative from $C_j$ is insufficient and the size of the lists cannot be compressed. To fix this, we add the following checks:
\begin{enumerate}
\item If $k+1 = n$, just double the number of strings on the list, assuming that there is a clause involving $n$, i.e. the last qubit that is assigned values.
\item Repeat the algorithm $n$ times by placing a different qubit at the last position each time. Let $L_{a / a^{\bot}}$ be the union of all the lists found in this manner and is of size $2m \cdot n$. For a non-empty instance, at least one of the trials in the list will be such that there is a clause on the qubit in the last position and will hence contain a good string.
\end{enumerate}
Hence, $L_{a / a^{\bot}}$ with $2mn$ strings contains at least one good string and by repeating the classical process $n$ times, we get an $O(mn^2)$ time procedure for this.
\end{proof}

\noindent However, this only gives us an assignment that violates 
each projector by $\leq \frac{1}{4}$ while we require assignments 
that violate each projector by $\leq \epsilon^2$. The key observation involves 
constructing two lists $L_{a / a'}$ and $L_{b / b'}$ 
where $b \neq a, a'$ and picking a state from each
list. Consider the case when both states are good. Let the states
on qubit $i$ from each list be $\ket{a_i}{}$ and $\ket{b_i}{}$
respectively. Each state defines a hemisphere $R_{i, a_i}$ and
$R_{i, b_i}$ containing all the states that are bad with respect to 
the forbidden state for qubit $i$, $\ket{\psi_i}{}$. Then, 
$\ket{\psi_i}{}$, should be contained in 
$R_{i,a_ib_i} \EqDef R_{i, a_i} \cap R_{i, b_i}$. The optimal choice for
$b_i$, given $a_i$, would be one where $|R_{i, a_ib_i}| \leq
\frac{|R_{i, a_i}|}{2}$. Then, similar to performing a binary search
on the Bloch Sphere, repeating this process $\log_2
\left(\frac{1}{\epsilon}\right)$ times, will give a region
consisting of good approximations to the forbidden state (See
Figures~\ref{fig:Q1SAT} $(a)$ and $(b)$ for illustrations).

\begin{theorem}
\label{thm:Q1SATb} 
  Let $\epsilon > 0$. Given a $\HQSAT{1}$ on $n$ qubits containing $m$ projectors, there exists an an $O((2mn)^{2 \log \frac{1}{\epsilon}} \cdot mn^2)$ time algorithm, with the property that \vspace{-0.5em}
  \begin{enumerate}[(a)]
  \item for a frustration free instance, it outputs an assignment where for each projector, the
  forbidden state is violated with probability $\leq \epsilon^2$ and
  \item for a $\no$ instance, the algorithm outputs $\unsat$.
  \end{enumerate}  
\end{theorem}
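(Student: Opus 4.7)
The plan is to use Lemma~\ref{lem:Q1SATa} as a subroutine inside a binary-search-style refinement on the Bloch sphere. First I would run the lemma in the standard basis $\{\ket{0}{}, \ket{1}{}\}$ to get a list $L^{(0)}$ of at most $2mn$ product states, at least one of which (call it $a$) is \emph{good}: for every qubit $i$, $|\qip{\psi_i}{a_i}|\leq \tfrac{1}{2}$, so the unknown forbidden state $\ket{\psi_i}{}$ lies in the hemisphere $R_{i,a_i}$. This already localizes each $\ket{\psi_i}{}$ to a hemisphere, but only guarantees violation $\leq 1/4$ per projector, which falls short of the target $\epsilon^2$.

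To sharpen the bound I would iterate. For every candidate $a$ in the current list, treated optimistically as if it were good, I pick a second basis $\{\ket{b_i^{(a)}}{},\ket{(b_i^{(a)})^\bot}{}\}$ rotated on the Bloch sphere relative to $\{\ket{a_i}{},\ket{a_i^\bot}{}\}$ so that the hemisphere $R_{i,b_i^{(a)}}$ bisects $R_{i,a_i}$ in area. A fresh invocation of Lemma~\ref{lem:Q1SATa} in this rotated basis yields a list $L^{(a)}$ of at most $2mn$ further candidates, of which at least one (say $b$) is good conditional on $a$ having been good. The intersection $R_{i,a_i}\cap R_{i,b_i}$ must contain $\ket{\psi_i}{}$ and has at most half the area of $R_{i,a_i}$ on every qubit.

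Iterating this branching for $t=2\log_2(1/\epsilon)$ levels shrinks the admissible region for each $\ket{\psi_i}{}$ to a spherical disc of area $O(\epsilon^2)$, i.e.\ angular radius $O(\epsilon)$, whose antipode furnishes a leaf assignment with $|\qip{\psi_i}{a_i}|\leq O(\epsilon)$ and hence violation energy $O(\epsilon^2)$ on every projector. In the \yes{} case some root-to-leaf chain of good states survives, and the corresponding leaf satisfies all projectors up to $\epsilon^2$, which the oracle confirms; in the \no{} case no assignment has total energy below the promise gap $m\cdot 2\epsilon^2$, so every leaf fails and the algorithm outputs $\unsat$. The tree has $(2mn)^t$ leaves, and each of its $O((2mn)^t)$ internal nodes spends $O(mn^2)$ time invoking Lemma~\ref{lem:Q1SATa}, delivering the claimed $O((2mn)^{2\log(1/\epsilon)}\cdot mn^2)$ running time.

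The main obstacle I expect is making the geometric halving argument rigorous. One has to exhibit, for every qubit and every candidate $a$, a basis $\ket{b_i^{(a)}}{}$ that truly bisects $R_{i,a_i}$ in the right measure (surface area on the Bloch sphere, translating to linear angular shrinkage in both latitude and longitude) and then show inductively that after $t$ levels the surviving intersection has not only small area but also diameter $O(\epsilon)$, since it is the diameter that controls the overlap $|\qip{\psi_i}{a_i}|$. A secondary subtlety is that the tree contains many ``phantom'' branches coming from candidates that were never actually good; one must argue that such a phantom assignment cannot accidentally pass the final oracle test in the \no{} case, which is precisely where the promise gap $b > m\cdot 2\epsilon^2$ enters.
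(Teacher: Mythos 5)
Your proposal matches the paper's proof essentially step for step: repeated invocations of Lemma~\ref{lem:Q1SATa} in bases chosen to bisect the current candidate region for each qubit, a recursion tree of width $2mn$ and depth $O(\log(1/\epsilon))$, leaf states chosen antipodal to the surviving region, the oracle confirming the \yes{} case and $\unsat$ output otherwise, with the same cost accounting $O((2mn)^{O(\log(1/\epsilon))}\cdot mn^2)$. Your choice of depth $2\log_2(1/\epsilon)$ and the explicit area-versus-diameter caveat are, if anything, slightly more careful on the geometric shrinkage than the paper's own write-up, but the argument is the same.
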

\begin{proof}
  Initially, with no information, for each qubit $i$, $R_i =$ Bloch
  sphere. Now the algorithm executes the following steps:
\begin{itemize}
\item \vspace*{-0.5em} Start by picking an arbitrary state, 
      say $\bar{a} = \ket{0}{}^{\otimes n}$, and construct 
      $L_{\ket{0}{}^{\otimes n} / \ket{1}{}^{\otimes n}}$ as per the procedure
      in Lemma~\ref{lem:Q1SATa}. For each $a \in L_{\ket{0}{}^{\otimes n} / \ket{1}{}^{\otimes n}}$:
      \vspace*{-0.5em}
\begin{itemize}
\item  $a$ defines the region $R_{i, a_i}$ in 
          this branch of the iteration. 
\item For $i = 1,\ldots,n$ pick a basis
          $\{\ket{b_i}{}, \ket{b_i^\bot}{}\}$ such that their equator
          bisects $R_{i, a_i}$. 
\item Set $\bar{b} = b_1 \ldots b_n$, construct 
          $L_{\bar{b} / \bar{b}'}$ and for each $b \in
          L_{\bar{b} / \bar{b}'}$: 
\begin{itemize}
\item The tuples $(a, b)$ define the region 
              $R_{i, a_ib_i}$ in this branch. 
\item Repeat the process to find $\bar{c}$ 
              to bisect each $R_{i, a_ib_i}$; 
\item Find a new region $R_{i, a_ib_ic_i}$ 
              for each $c \in L_{\bar{c} /
              \bar{c}'}$. 
\item Continue the recursion up to 
              $\log_2 \left(\frac{1}{\epsilon}\right)$ depth and let
              the last list be $L_{z / z^\bot}$. 
\item Propose $\ket{\phi^\bot}{} = \bigotimes_{i=0}^n 
              \ket{\varphi_i^\bot}{}$ where $\forall i,
              \ket{\varphi_i}{} \in R_{i,a_ib_i\ldots z_i}$ to the
              oracle. Output $\ket{\phi^\bot}{}$ if the oracle
              returns $\yes$, otherwise continue. 
\end{itemize}
\end{itemize}
\item Output $\unsat$ if none of the trials 
      satisfy the instance. 
\end{itemize}
This algorithm essentially creates a recursion tree with each new
string created where the width of the recursion at each point is
$2mn$ and the depth is $\log_2 \left(\frac{1}{\epsilon}\right)$.
This leads to $(2mn)^{\log_2 \frac{1}{\epsilon}}$ trials to be
proposed at the end and the number of lists created is also
$(2mn)^{\log_2 \frac{1}{\epsilon}}$, each at a cost of $O(mn^2)$.
Hence, the total running time of the algorithm is $O((2mn)^{\log
\frac{1}{\epsilon}} \cdot mn^2)$.

\begin{center}
\begin{figure}[h!]
 	\begin{minipage}[b]{0.49\textwidth}
  	\centering
  	\def\svgwidth{110pt}
  	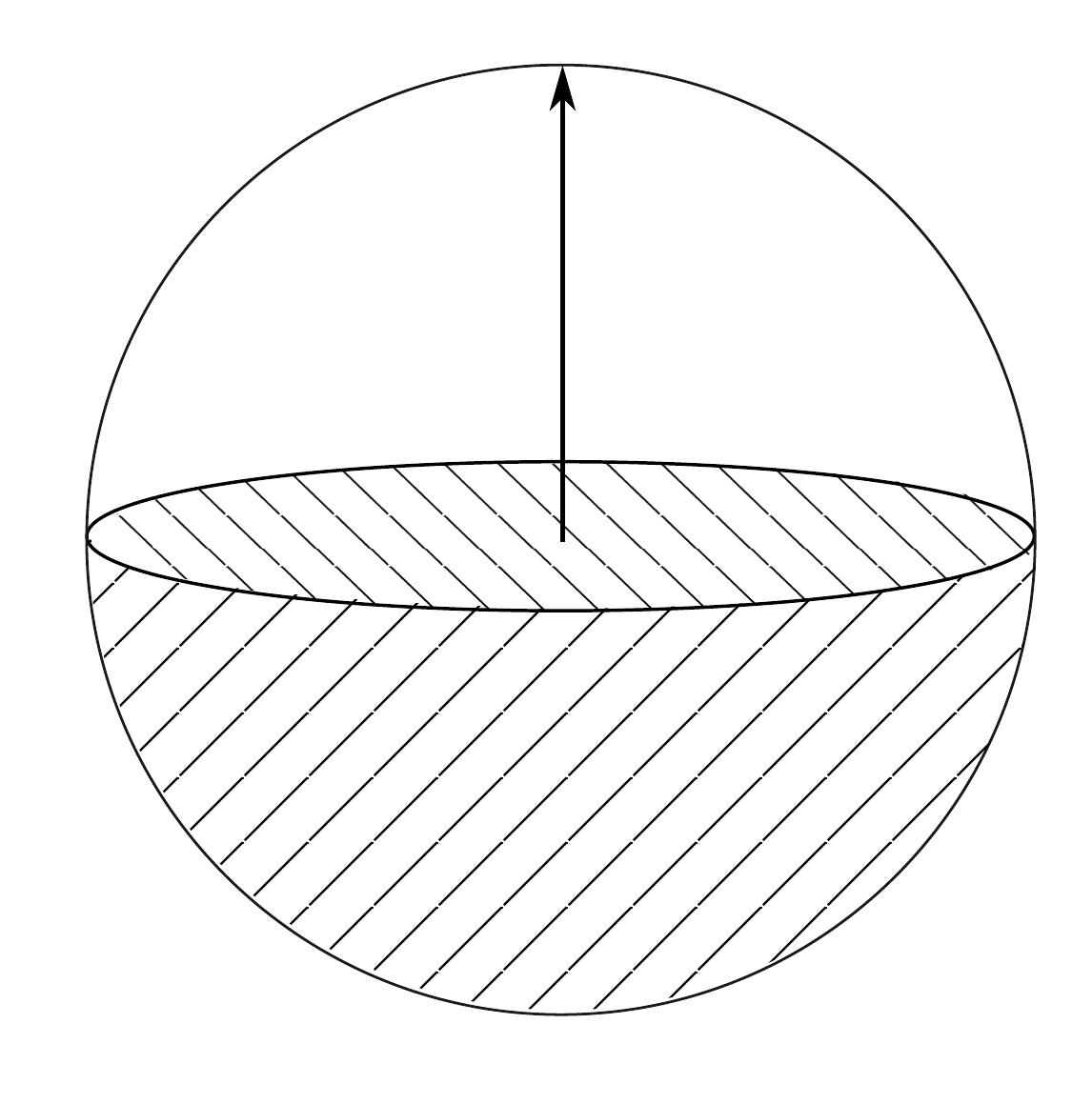
  	\caption*{$(a)$}
	\end{minipage}
	\begin{minipage}[b]{0.49\textwidth}
  	\centering
  	\def\svgwidth{160pt}
  	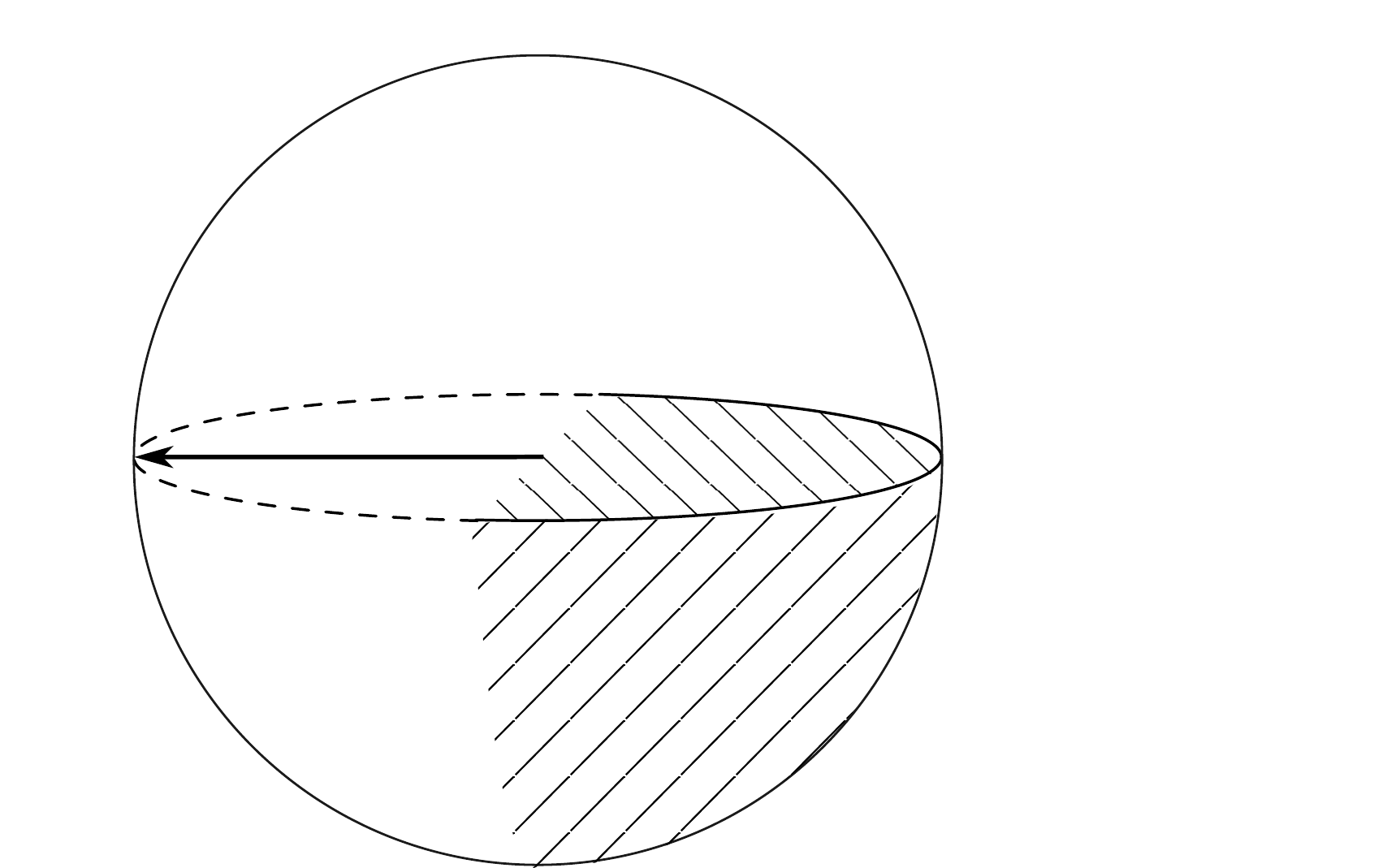
  	\caption*{$(b)$}
	\end{minipage}

	\vspace*{1em}	
	
	\begin{minipage}[b]{0.49\textwidth}
  	\centering
  	\def\svgwidth{160pt}
  	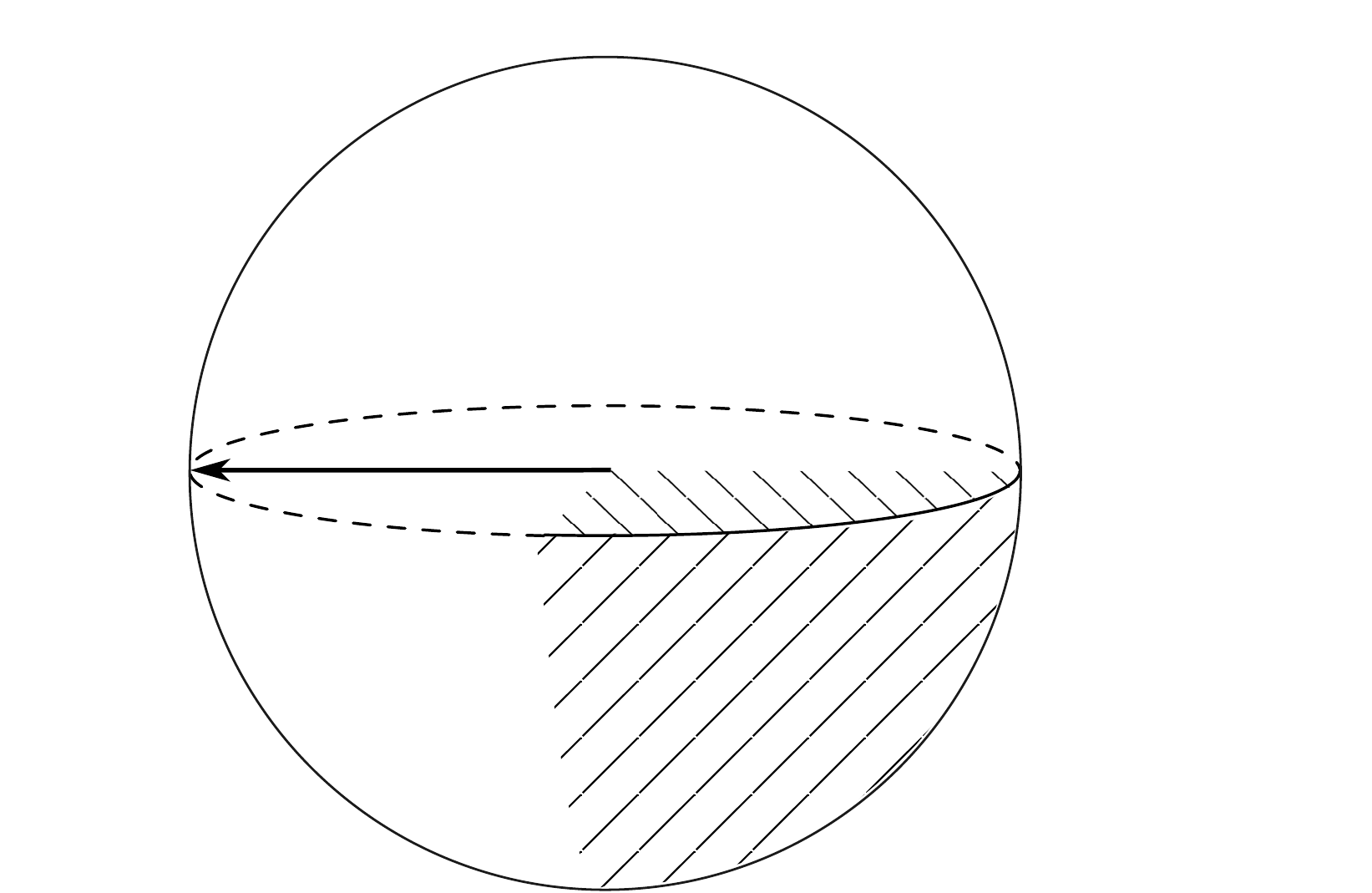
  	\caption*{$(c)$}
	\end{minipage}	
	\begin{minipage}[b]{0.49\textwidth}
  	\centering
  	\def\svgwidth{120pt}
  	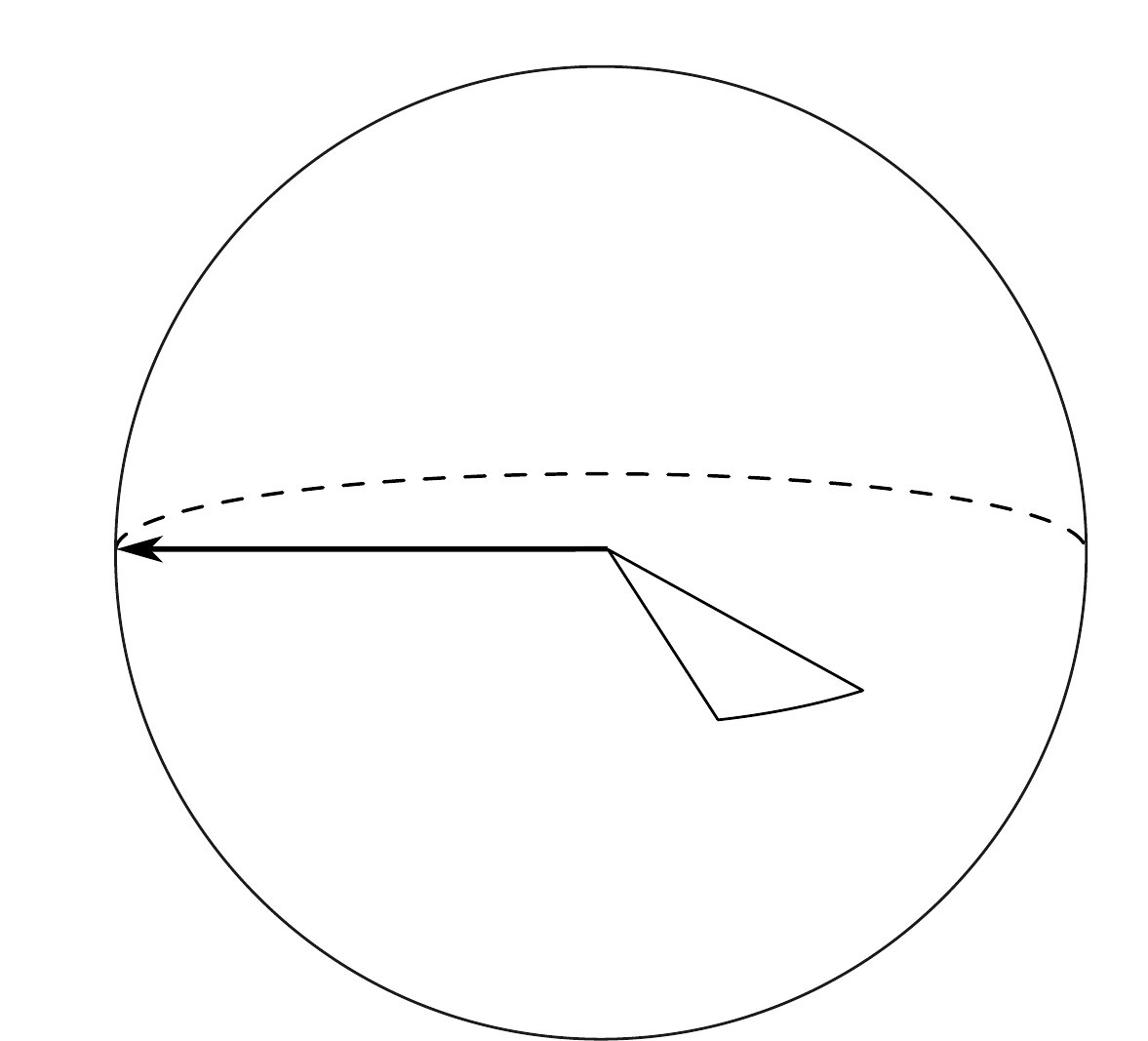
  	\caption*{$(d)$}
	\end{minipage}
\caption{\label{fig:Q1SAT} Shown here are four stages of the algorithm for a qubit $i$, with forbidden state $\ket{\psi_i}{}$, starting with picking string $a, b$ and $c$ followed by the last string $z$. $(a)$ After picking $\ket{a_i}{}$ for qubit $i$, the hemisphere orthogonal to it is $R_{i, a_i}$; $(b)$ On choosing $\ket{b_i}{}$, the interesting region is the quadrant $R_{i, a_ib_i}$; $(c)$ After $\ket{c_i}{}$ is determined, $\ket{\psi_i}{}$ should be in the hatched $1/8^{th}$ region of the sphere; $(d)$ Continuing the process of picking strings for $\log_2 \frac{1}{\epsilon}$ steps and picking the final state $\ket{z_i}{}$ shows that $\ket{\psi_i}{}$ should be present in $R_{i, a_ib_i\ldots z_i}$.}
\end{figure}
\end{center}

Figure~\ref{fig:Q1SAT} shows exactly how the algorithm given in 
Theorem~\ref{thm:Q1SATb} proceeds. We consider a qubit $i$ and 
the strings $a, b, c, \ldots, z$ that are picked in one branch 
of the recursion tree of the algorithm. To argue the correctness 
of this algorithm, we analyze region $R_{i, a_ib_i\ldots z_i}$ 
obtained at the leaf of the recursion tree. Let the forbidden state 
for qubit $i$ be $\ket{\psi_i}{}$. At the beginning, let 
$\forall \; i, \; |R_i| = 1$ (the complete Bloch sphere) and the only 
guarantee for each list is that there is at least one good string in it. 
Tracing the path in the recursion tree to the leaf, let us assume 
that each step of the recursion picks a good string i.e. $a,b,\ldots,z$ 
are all good strings. For $a$ and $\forall i,$ the forbidden state 
$\ket{\psi_i}{}$ is in the opposite hemisphere to $\ket{a_i}{}$ which 
reduces the size of the region to $|R_{i, a_i}| = 1/2$ as shown in 
Figure~\ref{fig:Q1SAT}\textcolor{Salmon}{$(a)$}. Taking $(a,b)$ at the next iteration, the
region for each qubit is the over lap of two hemispheres $R_{i, a_i}
\cap R_{i, b_i}$ and by construction, since $b_i$ bisects $R_{i,
a_i}$, the overlaps of the hemispheres also bisect $R_{i, a_i}$
setting $|R_{i, a_ib_i}| = 1/4$ (Figure~\ref{fig:Q1SAT}\textcolor{Salmon}{$(b)$}). 
As this pattern continues, each
step of the iteration halves the region for qubit $i$ and we are
left with regions of size at most $\epsilon$ at the end of the
branch as shown in Figures~\ref{fig:Q1SAT}\textcolor{Salmon}{$(c)$} 
and~\ref{fig:Q1SAT}\textcolor{Salmon}{$(d)$}. 
If the instance is satisfiable, the state proposed will satisfy each projector up to
$\epsilon$ resulting in the oracle to return $\yes$. Of course, when
one of the strings chosen is bad, the proposal $\ket{\phi^\bot_j}{}$
for some qubit $j$ will end up having a large inner product with the
forbidden state $\ket{\psi_j}{}$ and will result in the oracle
returning the $\id$ of the projector involving $j$. This concludes
the proof.
\end{proof}

\section{Hidden Quantum 2SAT} \label{sec:2qsat} 
This section deals with a $\QSAT{2}$ instance that is hidden and can
only be accessed by a worst-violation oracle. 
We show how learn the underlying local
Hamiltonian to precision $\epsilon$ by finding $2$-local projectors
$\Pi'_{e}$ such that $\norm{ \Pi_{e}-\Pi'_{e}}\le \epsilon$
 for every projector  $\Pi_e$. This
yields an approximate local Hamiltonian $H'=\sum_e \Pi'_e$ whose
ground energy is at most $m\epsilon$ away from the ground
energy of the original Hamiltonian $H=\sum_e \Pi_e$. If $\epsilon$ is set such that $m\epsilon$ is much smaller than the promise gap of the initial Hamiltonian $H$ (which merely requires $\epsilon<1/\operatorname*{poly}$), then the Hamiltonian $H'$ will have a promise gap as well. This is stated in Theorem~\ref{thm:Q2SATa}. 

\begin{theorem}
\label{thm:Q2SATa} 
  Given a $\HQSAT{2}$ problem $H = \sum_{(u, v)} \Pi_{uv}$ on $n$
  qubits, and precision $\epsilon$. If the interaction graph for
  $H$ is not Star-like, then there is an $O(n^4 + n^2 \log
  \left(\frac{1}{\epsilon} \right))$ algorithm that can find an
  approximation $H' = \sum_{(u, v)} \Pi'_{uv}$ where 
  $\forall \; (u, v),  \norm{ \Pi'_{uv}-\Pi_{uv}} 
      \le \epsilon \vspace{-0.5em}$
%
\end{theorem}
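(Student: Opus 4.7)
The plan is to split the task into three phases: discovering the interaction graph of the hidden Hamiltonian, determining the rank of each non-trivial projector, and then refining each projector to precision $\epsilon$ via a bisection procedure adapted from the $\HQSAT{1}$ algorithm of Theorem~\ref{thm:Q1SATb}.

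In the first phase, I would discover which pairs $(u,v)$ satisfy $\Pi_{uv}\neq 0$. For each candidate pair $(u,v)$ I would propose a fixed-size family of probe density matrices of the form $\rho_{uv}\otimes(\mathbb{I}/2)^{\otimes(n-2)}$, where $\rho_{uv}$ ranges over a basis of candidate forbidden states on $\mathbb{C}^2\otimes\mathbb{C}^2$. Any projector $\Pi_{wz}$ disjoint from $(u,v)$ sees only the marginal $\mathbb{I}/4$ on its qubits and hence has energy at most $3/4$; if $\rho_{uv}$ is chosen so that $\Tr(\rho_{uv}\Pi_{uv})=1$, this violation outranks all disjoint projectors. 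The remaining subtlety is to rule out the case where some edge $(u,w)$ or $(v,w)$ is more violated, which is handled by choosing $\rho_{uv}$ with maximally mixed one-qubit marginals (possible over a constant number of rotated bases) so that projectors sharing a single qubit with $(u,v)$ have strictly bounded energy. Iterating over all $\binom{n}{2}$ pairs yields both the graph and the rank of each $\Pi_{uv}$ in $O(n^4)$ total time.

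In the second phase, for each of the $O(n^2)$ edges discovered in the graph, I would learn $\Pi_{uv}$ to precision $\epsilon$ by a quantum bisection. Since $\mathrm{rank}(\Pi_{uv})\in\{1,2,3\}$, the forbidden subspace of $\Pi_{uv}$ is parametrized by a compact manifold of constant dimension. As in the proof of Theorem~\ref{thm:Q1SATb}, I would maintain a candidate region for this subspace and use each oracle query to halve its diameter. Here the non-star-like assumption plays the crucial role: it guarantees the existence of an edge $(u',v')$ disjoint from $(u,v)$, and I would place the qubits $u',v'$ in a state that fully violates $\Pi_{u'v'}$ as an ``anchor'' that keeps the oracle's comparison threshold pinned at energy $1$ independently of the probe at $(u,v)$. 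After $O(\log(1/\epsilon))$ queries per edge, the algorithm outputs $\Pi'_{uv}$ with $\norm{\Pi'_{uv}-\Pi_{uv}}\leq \epsilon$.

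The main obstacle, and exactly what the star-like exclusion avoids, is isolating the signal from $\Pi_{uv}$ during the bisection: without a disjoint anchor, every query that strongly tests the forbidden space of $\Pi_{uv}$ also places significant weight on the single qubits $u$ and $v$, so neighboring projectors $\Pi_{uw}$ and $\Pi_{vw'}$ could mask the response of $\Pi_{uv}$ indefinitely. With a disjoint anchor edge in hand, the oracle's tie-breaking and dominance behavior cleanly reflects changes in $\Tr(\rho_{uv}\Pi_{uv})$ relative to the anchor's energy, which lets bisection converge. Combining the $O(n^4)$ graph-discovery cost with the $O(n^2\log(1/\epsilon))$ bisection cost gives the claimed total runtime of $O(n^4+n^2\log(1/\epsilon))$.
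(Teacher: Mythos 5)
Your high-level skeleton (exploit the non-star-like assumption to get a disjoint ``anchor'' edge, get a constant-accuracy estimate, then refine with $O(\log(1/\epsilon))$ queries per edge) matches the paper's, but two of your key mechanisms would fail as stated. First, in Phase 1 you never explain how to tell that the index the oracle returns actually names a projector on the pair $(u,v)$: the oracle's answer is an opaque clause index, and making $\Pi_{uv}$ the worst-violated clause is not something you can certify from a single response. The paper's Test-I solves exactly this by sweeping the state on a \emph{disjoint} pair $(k,\ell)$ over a $\nu$-net mixed as $(1-\epsilon)\bk{\alpha}{}+\epsilon\bk{\alpha^\bot}{}$, so that the $(k,\ell)$ violation sweeps over a wide range while all other clauses sit near $1/4$; only a projector on $(i,j)$ close to the proposed $\bk{\psi}{}$ can be returned \emph{invariantly} across all these trials, and that invariance is the certificate. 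Your substitute fix --- probes on $(u,v)$ with maximally mixed one-qubit marginals --- is internally inconsistent: such states are maximally entangled, and if the forbidden state of $\Pi_{uv}$ is (close to) a product state its overlap with any maximally entangled probe is at most $1/2$, so you cannot simultaneously have $\Tr(\rho_{uv}\Pi_{uv})\approx 1$ and bounded energy on edges sharing a qubit; a disjoint rank-3 projector (energy $3/4$ under the mixed marginals) could then dominate.

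Second, in Phase 2 your anchor is placed in a state that ``fully violates'' $\Pi_{u'v'}$, pinning its energy at $1$. That is backwards: with the anchor at energy $1$ it is always (weakly) the worst-violated clause, and the adversarial tie-breaking oracle can return the anchor's index forever, yielding no information about $\Pi_{uv}$ --- and in any case you cannot prepare a fully violating state without already knowing $\Pi_{u'v'}$ exactly. The paper instead keeps the anchor's violation strictly below the candidate's (at most $1-\epsilon'$), co-refines \emph{both} disjoint projectors in alternation (Test-II halves the net spacings and shrinks $\epsilon'$ geometrically inside balls of shrinking radius), and only afterwards (Step 3) uses one well-approximated projector as the fixed reference to locate and refine all remaining edges. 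Your bisection step also leaves unexplained how a single oracle answer halves a region of the four-dimensional space of two-qubit forbidden subspaces, and your claim that Phase 1 also yields the rank of each projector is asserted without a mechanism (the paper gets ranks by continuing Test-I to collect several net states spanning the forbidden subspace and refining a basis). These gaps would need the paper's invariance test and the strictly-sub-unit, co-refined anchor to be repaired.
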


The algorithm proceeds by: 
\begin{enumerate}[(1)]
\item Identifying two pairs of qubits $(i, j) \ne (k, \ell)$ on 
  which two projectors are defined, $\Pi_{ij}$ and $\Pi_{k\ell}$,
  and finding a constant approximation for these projectors;
\item Improving the constant approximation of the two projectors 
   recursively so that the
  approximation improves by a factor of $2$ in each iteration and
\item Using the $\epsilon$-approximation of a projector to 
  identify the rest of the independent projectors and approximating
  them to $\epsilon$-precision.
\end{enumerate}

Using the $H'$ output by the above algorithm in a procedure
which could find a good approximation to the
ground energy of $H'$ would completely solve $\HQSAT{2}$. 
At this time, though, existing $\QSAT{2}$ algorithms~\cite{ASSZ15, Bra06, dBG15} are not 
robust to such errors and seem to require $\frac{1}{\exp(n)}$ precision. 
Our algorithm for $\HQSAT{2}$ does allow one to learn the projectors to exponential precision, since the dependence on $\epsilon$ in Theorem~\ref{thm:Q2SATa} is merely logarithmic. However, in this parameter regime our algorithm is somewhat unrealistic, as this would require the oracle to be able to distinguish between values that are exponentially close together\footnote{This seems to give the oracle too much power - because having the ability to distinguish exponentially close quantum states would allow one to solve $\PP$-hard problems \cite{AbramsLloyd}. 
In contrast all of the problems considered are in $\NP$ due to the presence of classical, $\poly(n)$ size witnesses.}. 
If our oracle were constrained to be implementable in polynomial time by an experimenter, acting on polynomially many copies of the proposed state $\rho$, then one could only learn the instance up to error $\epsilon = \frac{1}{\poly(n)}$.
A natural open question is to determine whether one can still solve $\QSAT{2}$ when one only knows the individual clauses to inverse polynomial precision; we believe this is a fundamental question about the nature of $\QSAT{2}$, which is left for future work.

To prove Theorem~\ref{thm:Q2SATa}, we require the notion of converting energy violations to distances between states and vica-versa as presented below.

\vspace{-0.5em}
\subparagraph*{Energies and Distances.} Along with the energy of a state with respect to a local term, another useful measure of violation is the distance between quantum states which is helpful in understanding how precise the solution is. We will often switch between distances of $2$-qubit states and their violation energy. It is therefore important to relate these two measures.

Assume we have a projector $\Pi_\psi=\bk{\psi}{}$ and a state $\rho_\alpha\EqDef \bk{\alpha}{}$. Recall that the violation energy is given by $\Tr(\Pi_\psi\rho_\alpha) = |\qip{\alpha}{\psi}|^2$. The \emph{Frobenius distance} is defined as 
\begin{align}
\norm{\alpha-\psi} \EqDef \sqrt{\Tr[ (\bk{\alpha}{} - \bk{\psi}{}) (\bk{\alpha}{} - \bk{\psi}{})^\dagger]}.
\end{align}
It now follows that $\norm{\alpha-\psi}^2 = 2 - 2|\qip{\alpha}{\psi}|^2 = 2-2\Tr(\Pi_\psi\rho_\alpha)$. Therefore, the violation energy is related to the Frobenius distance as 
\begin{align}
\label{eq:energy-distance}
 \Tr(\Pi_\psi\rho_\alpha) = 1-\frac{1}{2}\norm{\alpha-\psi}^2 \,.
\end{align}

For the sake of clarity, we currently assume that all the projectors in $H'$ are of rank $1$ and discuss the entire algorithm for this case. To generalize to cases where the rank of projectors is $>1$ requires only a slight modification that does not affect the running time significantly and will be sketched after describing the algorithm in full. Now, we prove a series of lemmas that will enable us to prove Theorem~\ref{thm:Q2SATa}. 

Before delving into the details of the implementation for Step $1$, we construct a procedure \textbf{Test-I}, which checks if there is a projector between particles $i,j$ at a constant distance from $\bk{\psi}{}$. The test returns $\yes$ if and only if there is a projector at $i,j$ which is close to $\bk{\psi}{}$ along with the $\id$ of the projector and $\no$ otherwise. The test is defined for two fixed constants $0<\nu<1$ and $0<\epsilon<1$, and a $\nu$-net $\cc{T}_\nu$ in the space of rank-$1$ projectors on two qubits.

\noindent \textbf{Test-I$\mathbf{(i,j,\bk{\psi}{})}$}
\begin{itemize}
  \item For all possible pairs $(k, \ell)$ which are different from $(i, j)$:
  \begin{itemize}
  	\item[-] For all $\bk{\alpha}{}\in \cc{T}_\nu$, construct the trial state $\rho^{k\ell}_\alpha$, propose it to the oracle and receive a projector $\id$ as violation where
    	\begin{align}
      		\rho^{k\ell}_\alpha \EqDef \bk{\psi}{ij}\otimes \big[ (1-\epsilon) \bk{\alpha}{k\ell} +
        \epsilon\bk{\alpha^\bot}{k\ell}\big] \otimes \left(\frac{\mathbb{I}}{2}\right)^{\otimes (n-4)}
	    \end{align}
  \end{itemize}
  \item If for all trials the oracle answers the \emph{same} $\id$, output
    $\yes$ together with the $\id$ of the projector, otherwise output $\no$.
\end{itemize}

The correctness of \textbf{Test-I} is proved in the following claim.
\begin{clm}
\label{clm:Test-I}
  Assume that $\epsilon+\frac{1}{2}\nu^2 < 1/4$, and further assume
  that there is at least one projector in the system that is defined
  on qubits $k\ne \ell$, which are \emph{different} from $i,j$. Then,
  if Test-I outputs $\yes$, there exists a projector
  $\bk{\psi'}{}$ on the $(i,j)$ qubits, whose $\id$ is the one that 
  was output and
  \begin{align}
    \norm{\bk{\psi'}{}-\bk{\psi}{}} \le \sqrt{2\epsilon+\nu^2} \,.
  \end{align}
  Conversely, if there exists a projector $\bk{\psi'}{}$ on $(i,j)$
  such that   
  \begin{align}
    \norm{\bk{\psi'}{}-\bk{\psi}{}} < \sqrt{2\epsilon} \,,
  \end{align}
  then the test will report it.
\end{clm}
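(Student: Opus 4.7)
The plan is to use the energy-distance correspondence~(\ref{eq:energy-distance}), $\Tr(\Pi_\psi\rho_\alpha)=1-\frac{1}{2}\norm{\alpha-\psi}^2$, and to track the violation energies of every projector against each trial $\rho^{k\ell}_\alpha$ proposed by Test-I. I will handle the two directions of the claim separately.

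For the converse direction, assume there exists a projector $\bk{\psi'}{ij}$ with $\norm{\psi'-\psi}<\sqrt{2\epsilon}$. Then in every trial its violation energy is $|\qip{\psi'}{\psi}|^2>1-\epsilon$. Any projector on the active pair $(k,\ell)$ has energy $(1-\epsilon)|\qip{\xi}{\alpha}|^2+\epsilon|\qip{\xi}{\alpha^\bot}|^2\le\max(1-\epsilon,\epsilon)=1-\epsilon$ for $\epsilon<1/2$, while any projector on another pair has at least one qubit traced out to $\mathbb{I}/2$, yielding energy at most $1/2$. Since $\epsilon+\nu^2/2<1/4$ (so $1-\epsilon>3/4$), the $(i,j)$-projector strictly dominates in every trial, so the oracle consistently returns its $\id$ and Test-I outputs $\yes$ with that projector's identifier.

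For the forward direction, suppose Test-I outputs $\yes$ with returned $\id=X$ corresponding to some projector $\Pi_X=\bk{\xi_X}{qr}$. Let $\Pi_0=\bk{\xi_0}{cd}$ be the assumed non-$(i,j)$ projector and pick $\alpha_0\in\cc{T}_\nu$ with $\norm{\alpha_0-\xi_0}\le\nu$. On the trial $\rho^{cd}_{\alpha_0}$, the energy on $\Pi_0$ is at least $(1-\epsilon)(1-\nu^2/2)\ge 1-\epsilon-\nu^2/2>3/4$, so $\Pi_X$ inherits at least this much energy in that trial. Once $(q,r)=(i,j)$ is established, the bound $|\qip{\xi_X}{\psi}|^2\ge 1-\epsilon-\nu^2/2$ translates via~(\ref{eq:energy-distance}) directly to $\norm{\xi_X-\psi}\le\sqrt{2\epsilon+\nu^2}$, as claimed.

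The main obstacle is ruling out $(q,r)\neq(i,j)$, which I approach by computing the reduced state of $\rho^{cd}_{\alpha_0}$ on the pair $(q,r)$ case by case. If $(q,r)$ is disjoint from $(i,j)\cup(c,d)$, the energy on $\Pi_X$ is exactly $1/4$; if $(q,r)$ shares exactly one qubit with $(i,j)\cup(c,d)$, the energy is at most $1/2$; both fall below the $>3/4$ lower bound, giving an immediate contradiction. The delicate edge cases are $(q,r)=(c,d)$ (so $\Pi_X=\Pi_0$) and the ``cross'' case where $(q,r)$ shares one qubit with each of $(i,j)$ and $(c,d)$, because there the reduced state can be a tensor product of near-pure one-qubit states with energy approaching $1$. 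To handle both, I would construct an adversarial trial $\rho^{qr}_\alpha$ with $\alpha\in\cc{T}_\nu$ within $\nu$ of $\xi_X^\bot$: this collapses $\Pi_X$'s energy to at most $\epsilon+\nu^2/2<1/4$, after which one argues that some other projector (for instance $\Pi_0$ on $(c,d)$ in a trial where $(c,d)$ is well-separated from $(q,r)$, so that its energy stays at least $1/4$) strictly beats $\Pi_X$, forcing the oracle to output a different $\id$ and contradicting Test-I's consistency.
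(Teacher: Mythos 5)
Your core argument matches the paper's: you use the same trials, the same energy windows ($[1-\epsilon-\tfrac{1}{2}\nu^2,\,1-\epsilon]$ for $\alpha$ near the disjoint projector, $[\epsilon,\,\epsilon+\tfrac{1}{2}\nu^2]$ near its orthogonal state, $1/4$ for projectors on fully mixed qubits), and the same conversion through \Eq{eq:energy-distance} to get $\sqrt{2\epsilon+\nu^2}$ and $\sqrt{2\epsilon}$. You are in fact more explicit than the paper about the projectors that straddle the regions (the paper simply asserts that the consistent $\id$ must sit on $(i,j)$ and, in the converse, compares only against the mixed-region and $(k,\ell)$ violations). However, your patches for exactly those delicate cases do not go through. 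In the converse, the statement that ``any projector on another pair has at least one qubit traced out to $\mathbb{I}/2$'' is false for a pair sharing one qubit with $(i,j)$ and one with the active pair: neither qubit is mixed there, and if $\bk{\psi}{}$ and $\bk{\alpha}{}$ are product states such a cross projector can reach violation energy close to (even equal to) $1$, which is not dominated by the $>1-\epsilon$ energy of the $(i,j)$ projector, so the strict-domination argument does not close.

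In the forward direction, your repair of the cross case relies on a trial $\rho^{qr}_\alpha$ with the $(1-\epsilon,\epsilon)$-mixture placed on $(q,r)$; but when $(q,r)$ intersects $(i,j)$ this trial is not one Test-I ever performs — every trial fixes $\bk{\psi}{}$ on $(i,j)$ and puts the mixture on a pair disjoint from $(i,j)$ (the $(\mathbb{I}/2)^{\otimes(n-4)}$ factor forces this). Worse, with the trials that are available one cannot in general drive a cross projector below $1/4$: if $\Pi_X$ on $(i,r)$ projects onto a maximally entangled state and $\Tr_j\bk{\psi}{ij}=\mathbb{I}/2$, its energy is $1/4$ in every trial, independent of $\alpha$. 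Finally, the closing step ``some other projector strictly beats $\Pi_X$'' needs a projector whose energy is guaranteed to exceed $\epsilon+\tfrac{1}{2}\nu^2$ in the adversarial trial; the natural witness is a projector supported entirely on the completely mixed region (energy exactly $1/4$), but the hypothesis of the claim only guarantees one projector disjoint from $(i,j)$, and in the case $(q,r)=(c,d)$ your suggested witness $\Pi_0$ is $\Pi_X$ itself. To be fair, the paper's own proof silently elides these same straddling configurations (and implicitly assumes mixed-region projectors exist), so your proposal reproduces the published argument's substance; the gap is that the extra case analysis you add is the part that would need a genuinely new idea, and the fixes you sketch would fail as stated.
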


\begin{proof}
  Assume we have a projector $\bk{\phi'}{}$ on qubits $k,\ell$, which are different than $i,j$. Let $\bk{\alpha}{}$ be the closest member of $\cc{T}_\nu$ to $\bk{\phi'}{}$, and let us calculate the violation energy for this particular assignment. Note that  $\Tr(\bk{\alpha^\bot}{}\cdot\bk{\phi'}{})= 1-\Tr(\bk{\alpha}{}\cdot\bk{\phi'}{})$ and so the violation due to this assignment is 
  \begin{align*}
    &(1-\epsilon)\Tr(\bk{\alpha}{}\cdot\bk{\phi'}{})
      + \epsilon \big[1-\Tr(\bk{\alpha}{}\cdot\bk{\phi'}{})\big] \\
       &= (1-2\epsilon)\Tr(\bk{\alpha}{}\cdot\bk{\phi'}{}) +
       \epsilon
  \end{align*}
  By assumption, $\norm{\bk{\alpha}{}-\bk{\phi'}{}}\le \nu$, and therefore by \Eq{eq:energy-distance},
  \begin{align*}
    1-\frac{1}{2}\nu^2 \le \Tr(\bk{\alpha}{}\cdot\bk{\phi'}{}) 
      \le 1\,,
  \end{align*}
  which implies that
  \begin{align*}
    1-\epsilon-\frac{1}{2}\nu^2\le \text{max $(k,\ell)$ violation energy} 
      \le 1-\epsilon \,.
  \end{align*}
  
  Similarly, by looking at the state from the $\nu$-net that is closest to $\bk{\phi'^\bot}{}$, we deduce that
  \begin{align*}
    \epsilon\le \text{min $(k,\ell)$ violation energy} 
      \le \epsilon+\frac{1}{2}\nu^2 \,.
  \end{align*}
  
  If we got the same answer for all tests then it cannot be due to one of the completely mixed states since there the 
  violation is always $1/4$, and we know that for the maximal $\alpha$ violation is at least 
  $1-\epsilon-\frac{1}{2}\nu^2> 1/4$. It cannot also be due to the $(k,\ell)$ projector since we know that the minimal
  violation energy there is at most $\epsilon+\frac{1}{2}\nu^2<1/4$, i.e., less violated than the mixed projectors.
  Therefore, it must be the projector at $(i,j)$.
  
  Moreover, the violation of $(i,j)$ must be at least as big as the maximal $(k,\ell)$ violation:
	\begin{align*}
	\Tr(\bk{\psi}{}\cdot\bk{\psi'}{}) \geq 1-\epsilon-\frac{1}{2}\nu^2 \,.
  \end{align*}
	
  Therefore,
  \begin{align*}
	\norm{\bk{\psi}{}-\bk{\psi'}{}} = \sqrt{2-2\Tr(\bk{\alpha}{}\cdot\bk{\psi'}{})} \le \sqrt{2\epsilon + \nu^2} \,.
  \end{align*}
  
  For the other direction, note that if
  $\norm{\bk{\psi}{}-\bk{\psi'}{}}<\sqrt{2\epsilon}$ then by \Eq{eq:energy-distance}, its violation must satisfy
  $\Tr(\bk{\psi}{}\cdot\bk{\psi'}{})> 1-\epsilon$, which is bigger than both the violations of the completely 
  mixed state and the maximal violation of $(k,\ell)$.
\end{proof}

We call a $2$-qubit state $\ket{\psi}{ij}$ \textit{$\delta$-good} for $\delta = \sqrt{2\epsilon + \nu^2}$ if it is returned during a call to \textbf{Test-I}$(i, j, \bk{\psi}{})$. The idea is that if we find an \textit{$\delta$-good} state for $(i, j)$, it would be a constant approximation for the projector on $(i, j)$. Suppose no $\epsilon$-good state is found for $(i, j)$ then we can conclude that there is \emph{no projector} on $(i, j)$ or the interaction graph for $H$ has a \emph{Star-like} configuration.

\noindent \textbf{Step $1$}
\begin{itemize}
  \item Set $\nu^2 \EqDef\frac{1}{16}$, $\epsilon\EqDef \frac{1}{32}$ and $\eta \EqDef \frac{1}{4}$  
  \item Repeat for all pairs $i \ne j$ until a \textit{$\delta$-good} approximation for some $\Pi_{ij}$ is found.
  \begin{itemize}
  	\item For all $\ket{\psi}{}\in \cc{T}_\eta$, perform \textbf{Test-I}$(i,j, \bk{\psi}{})$. If the test is positive, set $\bk{\psi}{}$ as the \textit{$\delta$-good} approximation for $\Pi_{ij}$.
  \end{itemize}
  \item Repeat the above process for all qubit pairs $(k, \ell)$ that are disjoint from $(i, j)$ till a 
  \textit{$\delta$-good} approximation for $\Pi_{k\ell}$ is found.
\end{itemize}
  
\begin{lemma}
\label{lem:Step1}
  If there exist two projectors in $H'$ acting on disjoint pairs of qubits then Step $1$ will always succeed in finding   independent pairs $(i,j), (k, \ell)$ and projectors $\bk{\psi^{(0)}}{ij}$, $\bk{\phi^{(0)}}{k\ell}$ such that $\norm{\bk{\psi^{(0)}}{ij} - \bk{\psi'}{ij}} \leq \frac{1}{\sqrt{8}}$ and $\norm{\bk{\phi^{(0)}}{k\ell} - \bk{\phi'}{k\ell}} \leq \frac{1}{\sqrt{8}}$. Moreover, Step $1$ can be executed in $O(n^4)$ time. 
\end{lemma}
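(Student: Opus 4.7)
The plan is to invoke Claim \ref{clm:Test-I} with the parameter choices made in Step $1$: $\nu^2 = 1/16$, $\epsilon = 1/32$, $\eta = 1/4$. First, I would verify that these are admissible: the precondition $\epsilon + \nu^2/2 = 1/32 + 1/32 = 1/16 < 1/4$ holds, and furthermore $\sqrt{2\epsilon + \nu^2} = \sqrt{1/8} = 1/\sqrt{8} = \delta$ matches the target approximation bound exactly, while $\sqrt{2\epsilon} = 1/4 = \eta$. Consequently, for any rank-$1$ projector $\bk{\psi'}{}$ on a qubit pair, the $\eta$-net $\cc{T}_\eta$ contains some $\bk{\psi}{}$ with $\norm{\bk{\psi}{} - \bk{\psi'}{}} \le \eta$, and this $\bk{\psi}{}$ lies inside the detection radius of Test-I (after an infinitesimal tightening of $\eta$ to make the strict inequality in Claim \ref{clm:Test-I} hold); conversely, if Test-I outputs $\yes$ on some $\bk{\psi}{}$, then a genuine projector within $1/\sqrt{8}$ of $\bk{\psi}{}$ exists on the queried pair.

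For correctness, use the hypothesis to fix two projectors $\Pi_{uv}, \Pi_{xy}$ in $H$ acting on disjoint pairs. When the outer loop of Phase $1$ reaches $(i,j) = (u,v)$, the projector $\Pi_{xy}$ supplies the disjoint witness required by Claim \ref{clm:Test-I}. Then the appropriate net element $\bk{\psi}{} \in \cc{T}_\eta$ is within $\eta$ of $\Pi_{uv}$, so Test-I$(u, v, \bk{\psi}{})$ returns $\yes$ and identifies $\bk{\psi^{(0)}}{ij}$ with $\norm{\bk{\psi^{(0)}}{ij} - \Pi_{uv}} \le 1/\sqrt{8}$. In Phase $2$, the just-found $\Pi_{ij}$ now serves as the disjoint witness for Test-I on any pair $(k,\ell)$ disjoint from $(i,j)$ carrying a projector, yielding $\bk{\phi^{(0)}}{k\ell}$ by the analogous argument.

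The main subtlety to handle in the full proof is that Phase $1$ might accept earlier on some pair $(i',j')$ whose only disjoint pairs carry no projector (the hypothesis guarantees disjoint projectors somewhere, but not necessarily disjoint from any given found pair). I would address this either by continuing through the remaining $O(n^2)$ outer-loop pairs if Phase $2$ fails, or equivalently by noting that under the lemma's hypothesis, the enumeration of $(i,j)$ pairs must eventually hit $(u,v)$ or $(x,y)$ — each of which is guaranteed to have a disjoint companion among the fixed pair $\{(u,v),(x,y)\}$. Either variant keeps the outer iteration within $O(n^2)$ rounds.

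Finally, for the runtime: a single Test-I call issues at most $O(n^2) \cdot |\cc{T}_\nu| = O(n^2)$ oracle queries, since $\nu$ is constant. Phase $1$ runs Test-I for each of the $O(n^2)$ candidate pairs $(i,j)$ and each of the $O(1)$ states in $\cc{T}_\eta$, for $O(n^4)$ total work; Phase $2$ is analogous. Summing gives the claimed $O(n^4)$ running time.
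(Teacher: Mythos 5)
Your proposal is correct and follows essentially the same route as the paper's proof: invoke Claim~\ref{clm:Test-I} with the Step-1 constants (checking $\epsilon+\tfrac{1}{2}\nu^2=\tfrac{1}{16}<\tfrac14$, $\sqrt{2\epsilon+\nu^2}=1/\sqrt{8}$, $\sqrt{2\epsilon}=\eta$), use the $\eta$-net to guarantee a net point inside the detection radius, and count $O(1)$ net states and $O(n^2)$ trials per pair over $O(n^2)$ pairs in each phase, giving $O(n^4)$. You are in fact more careful than the paper, which silently skips the possibility that the first accepted pair has no disjoint projector-carrying companion; your continuation/backtracking fix handles this, and to keep the stated $O(n^4)$ bound one can add that under the hypothesis any such problematic pair must take one endpoint from each of the two fixed disjoint projector edges, so at most four backtracks ever occur.
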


\begin{proof}
  Set the parameters according to Step $1$. Then from Claim~\ref{clm:Test-I}, \textbf{Test-I} succeeds in finding states that are $\sqrt{2\epsilon + \nu^2}$-good $= \frac{1}{\sqrt{8}}$-good approximations. For each \textbf{Test-I}$(i, j, \ket{\psi})$, iterating over all possible disjoint pairs $(c, d)$ gives us $\binom{n-2}{2}$ pairs to check. The space of $2$-qubit states being in $\mathbb{C}^2 \otimes \mathbb{C}^2$ is a space of dimension $4$ and for any $\gamma > 0$, a $\gamma$-net over $\mathbb{C}^2 \otimes \mathbb{C}^2$ will contain $O\left(\frac{1}{\gamma^4}\right)$ points. So, for $\nu, \eta \in O(1)$, we check only $O(1)$ states in $\cc{T}_\nu$ for each qubit pair $(c, d)$ and run \textbf{Test-I} for $O(1)$ states in $\cc{T}_\eta$ giving a total of $O(n^2)$ trials proposed. To find the second projector on $(k, \ell)$ we repeat the above process for at most $O(n^2)$ pairs. Hence, Step $1$ can be executed in $O(n^4)$ time and the result follows directly.
\end{proof}

At the end of Step $1$ we have two projectors $\bk{\psi^{(0)}}{ij}$ and $\bk{\phi^{(0)}}{k\ell}$ which are 
a constant approximation of their hidden counterparts $\Pi'_{ij}$ and $\Pi'_{k\ell}$ i.e. 
at a distance $\delta \leq \sqrt{2\epsilon + \nu^2}$. For this step, we 
construct the procedure \textbf{Test-II} to improve their value to 
$\bk{\psi^{(1)}}{ij}$ (resp. $\bk{\phi^{(1)}}{k\ell}$) such that it is at a distance 
$\leq \frac{\delta}{2}$ from $\Pi'_{ij}$ (resp $\Pi'_{k\ell}$). Then, Step $2$ basically repeats this test 
$c$ times to improve the value to $\Pi^{(c)}_{ij}$ to a distance $\leq \frac{\delta}{2^c}$ 
from $\Pi'_{ij}$ and when $c = O(\log{n})$, this will get us polynomially close to $\Pi'_{ij}$.

From Step $1$, we know that $\Pi'_{ij}$ lies somewhere in a radius of $\delta$ around $\bk{\psi^{(0)}}{ij}$ and similarly for qubits $(k, \ell)$. Let $\cc{B}_{ij}$ be the ball of radius $\delta$ around $\bk{\psi^{(0)}}{ij}$ and correspondingly consider $\cc{B}_{k\ell}$. The states will be enumerated over $\cc{T}_{\nu'}^{\cc{B}}$ which is the $\nu'$-net restricted to some ball $\cc{B}$ in the space of rank-$1$ projectors on $2$ qubits.

Set $\nu' = \frac{\nu}{2}$, $\eta' = \frac{\eta}{2}$ and $\epsilon' = \frac{\epsilon}{4}$. The test is defined for values of $\nu', \eta', \epsilon' > 0$ as\\

\noindent \textbf{Test-II}$(i, j, k, \ell)$
\begin{itemize}
	\item For $(i, j)$, repeat over all $\bk{\psi}{} \in \cc{T}_{\eta'}^{\cc{B}_{ij}}$ and perform 
	\textbf{Test-I}$(i, j, \bk{\psi}{})$ over $\cc{T}_{\nu'}^{\cc{B}_{k\ell}}$ with parameter $\epsilon'$. If the test outputs $\yes$, set $\ket{\psi^{(1)}}{ij} = \ket{\psi}{}$.
	\item For $(k, \ell)$, repeat over all $\bk{\phi}{} \in \cc{T}_{\eta'}^{\cc{B}_{k\ell}}$ and perform 
	\textbf{Test-I}$(k, \ell, \bk{\phi}{})$ over $\cc{T}_{\nu'}^{\cc{B}_{ij}}$ with parameter $\epsilon'$. If the test outputs $\yes$, set $\ket{\phi^{(1)}}{ij} = \ket{\phi}{}$
\end{itemize}

The correctness of \textbf{Test-II} is determined by Claim~\ref{clm:Test-II}.

\begin{clm}
\label{clm:Test-II}
If there exists projectors $\bk{\psi^{(l)}}{ij}$, $\bk{\phi^{(l)}}{k\ell}$ on qubit pairs $(i, j)$, $(k, \ell)$ that have been approximated to a distance $\delta$, then \textbf{Test-II}$(i, j, k, \ell)$ will give us projectors 
$\bk{\psi^{(l+1)}}{ij}$ and  $\bk{\phi^{(l+1)}}{k\ell}$ such that 
\begin{align}
\norm{\bk{\psi^{(l+1)}}{ij} - \bk{\psi'}{ij}} \leq \frac{\delta}{2} \quad \text{ and } \quad
\norm{\bk{\phi^{(l+1)}}{k\ell} - \bk{\phi'}{k\ell}} \leq \frac{\delta}{2}.
\end{align}
\textbf{Test-II}$(i, j, k, l)$ requires $O\left( \left(\frac{\delta}{\eta'}\right)^4 \times \left(\frac{\delta}{\nu'}\right)^4 \right)$ trials where $\delta$ is the radius of $\cc{B}_{ij}$ and $\cc{B}_{k\ell}$.
\end{clm}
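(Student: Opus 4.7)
The plan is to derive Claim~\ref{clm:Test-II} as a direct reinvocation of Claim~\ref{clm:Test-I}, exploiting the fact that both the outer ``center'' loop and the inner ``trial'' loop of Test-II are just Test-I run with tighter parameters on nets restricted to the small balls $\cc{B}_{ij}$, $\cc{B}_{k\ell}$ in which the true projectors $\bk{\psi'}{ij}$, $\bk{\phi'}{k\ell}$ are already known to lie.

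First I would verify that the refined parameters $\nu'=\nu/2$, $\eta'=\eta/2$, $\epsilon'=\epsilon/4$ still satisfy the hypothesis $\epsilon'+\tfrac12 \nu'^2 < 1/4$ required by Claim~\ref{clm:Test-I}; starting from the Step~1 values this is clear and it continues to hold through the recursion because both $\epsilon$ and $\nu^2$ shrink by a factor of $4$ per iteration. Next I would observe that although Test-II's inner loop ranges only over the restricted net $\cc{T}^{\cc{B}_{k\ell}}_{\nu'}$ instead of a full $\nu'$-net on the $4$-dimensional space of rank-$1$ two-qubit projectors, the inductive hypothesis guarantees $\bk{\phi^{(l)}}{k\ell}$ is within $\delta$ of $\bk{\phi'}{k\ell}$, so the true projector still lies inside $\cc{B}_{k\ell}$ and the restricted net still contains a state within $\nu'$ of it. The entire energy calculation in the proof of Claim~\ref{clm:Test-I} thus carries over verbatim, giving the same separation between the maximal $(k,\ell)$-violation ($\geq 1-\epsilon'-\tfrac12 \nu'^2$), the minimal $(k,\ell)$-violation ($\le \epsilon'+\tfrac12 \nu'^2$), and the constant $1/4$ violation of the completely mixed qubits.

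Applying the forward direction of Claim~\ref{clm:Test-I} then shows that whenever Test-I inside Test-II outputs $\yes$ for some center state $\bk{\psi}{}\in\cc{T}^{\cc{B}_{ij}}_{\eta'}$, the state it flags is within $\sqrt{2\epsilon'+\nu'^2}=\tfrac12\sqrt{2\epsilon+\nu^2}=\delta/2$ of $\bk{\psi'}{ij}$, which is the desired halving. Applying the converse direction shows that such a center state is in fact produced: because $\bk{\psi'}{ij}\in\cc{B}_{ij}$, the $\eta'$-net on $\cc{B}_{ij}$ contains some $\bk{\psi}{}$ with $\norm{\bk{\psi}{}-\bk{\psi'}{ij}}\le \eta' \le \sqrt{2\epsilon'}$, triggering a $\yes$. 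The same argument with the roles of $(i,j)$ and $(k,\ell)$ swapped handles the refinement of $\bk{\phi^{(l)}}{k\ell}$.

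For the complexity, I would quote the standard volumetric fact that a $\gamma$-net inside a radius-$\delta$ ball of a $4$-real-dimensional state manifold has $O\bigl((\delta/\gamma)^4\bigr)$ points; multiplying the sizes of $\cc{T}^{\cc{B}_{ij}}_{\eta'}$ and $\cc{T}^{\cc{B}_{k\ell}}_{\nu'}$ gives the stated $O\bigl((\delta/\eta')^4(\delta/\nu')^4\bigr)$ trial count. The main obstacle I anticipate is bookkeeping the interlocking inequalities $\eta'\le \sqrt{2\epsilon'}$ and $\epsilon'+\tfrac12\nu'^2 < 1/4$ across the $O(\log(1/\epsilon))$ halvings so that both directions of Claim~\ref{clm:Test-I} continue to apply; the borderline case $\eta'=\sqrt{2\epsilon'}$ in the very first refinement in particular needs to be checked to confirm strict inequality (for instance by choosing the nets slightly finer than the nominal scale). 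Everything else is a clean repackaging of Claim~\ref{clm:Test-I}.
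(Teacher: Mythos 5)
Your proposal is correct and follows essentially the same route as the paper: restrict Test-I to the balls $\cc{B}_{ij}$, $\cc{B}_{k\ell}$, invoke Claim~\ref{clm:Test-I} with the refined parameters (so that $\sqrt{2\epsilon'+\nu'^2}=\delta/2$), and count trials via the $O\bigl((\delta/\gamma)^4\bigr)$ size of a $\gamma$-net restricted to a radius-$\delta$ ball. Your treatment is in fact somewhat more careful than the paper's, since you explicitly check $\epsilon'+\tfrac12\nu'^2<1/4$ and flag the borderline $\eta'=\sqrt{2\epsilon'}$ needed for the converse direction, which the paper glosses over.
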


\begin{proof}
Clearly, as \textbf{Test-I} works on the complete space of $2$ qubit rank $1$ projectors, it will also work on the restricted ball of size $\delta$. Since the existence of nontrivial projectors $\Pi'_{ij}$ and $\Pi'_{k\ell}$ has already been determined from Step $1$, we are sure to find another state over the new $\eta'$-net that approximates the projectors according to the new values. Then, setting the parameters as mentioned in Step $2$, the bound follows directly from Claim~\ref{clm:Test-I}. For the number of trials, since $(k, \ell)$ is fixed for $(i, j)$ and vica-versa, the trials only  iterate over the number of states in the $\delta$-ball of a $\gamma$-net over the space of $2$-qubit states which contains $O\left(\frac{\delta^4}{\gamma^4}\right)$ states. Substituting for the values of $\gamma$ gives the required number of trials.
\end{proof}\\

\noindent \textbf{Step $2$}\\
To approximate the projectors on qubit pairs $(i, j), (k, \ell)$ to polynomial accuracy, collect parameters $\eta, \nu$ and $\epsilon$ from Step $1$. Set the counter $c = 0$.
\begin{itemize}
	\item Set $\nu' = \frac{\nu}{2}$, $\eta' = \frac{\eta}{2}, \epsilon' = \frac{\epsilon}{4}$ and 
	$\delta = \sqrt{2\epsilon + \nu^2}$.
	\item Let $\cc{B}_{ij}$ be the ball of radius $\delta$ around $\bk{\psi^{(c)}}{ij}$ and correspondingly 
	$\cc{B}_{k\ell}$.
	\item Run \textbf{Test-II}$(i, j, k, \ell)$ which output states $\bk{\psi^{(c+1)}}{ij}$ and $\bk{\phi^{(c+1)}}{k\ell}$
	\item Update the parameters such that $\nu = \nu'$, $\eta = \eta'$ and $\epsilon = \epsilon'$.
	\item Increment the counter and repeat the process till $c = O(\log{n})$.
\end{itemize}

A crucial part of the analysis for Step $2$ is to show that it can be executed in polynomial time. This is ensured by showing that the number of trials for each iteration of \textbf{Test-II} actually remains a constant independent of $n$. 

\begin{lemma}
\label{lem:Step2}
Given projectors $\bk{\psi^{(1)}}{ij}, \bk{\phi^{(1)}}{k\ell}$ on qubit pairs $(i, j), (k, \ell)$ that have been approximated to a distance $\delta$, Step $2$ will successfully find projectors $\bk{\bar{\psi}}{ij}$ and $\bk{\bar{\phi}}{k\ell}$ such that $\norm{\bk{\bar{\psi}}{ij} - \bk{\psi'}{ij}} \leq \frac{1}{\poly(n)}$ and $\norm{\bk{\bar{\phi}}{k\ell} - \bk{\phi'}{k\ell}} \leq \frac{1}{\poly(n)}$. Additionally, this step can be executed in $O(\log n)$ time.
\end{lemma}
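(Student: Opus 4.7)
The plan is to prove Lemma~\ref{lem:Step2} by a straightforward induction on the iteration counter $c$, using Claim~\ref{clm:Test-II} as the one-step guarantee. The base case is furnished by Lemma~\ref{lem:Step1}, which provides $\bk{\psi^{(0)}}{ij}$ and $\bk{\phi^{(0)}}{k\ell}$ at distance at most $\delta_0 = 1/\sqrt{8}$ from the hidden projectors $\bk{\psi'}{ij}$ and $\bk{\phi'}{k\ell}$. For the inductive step, I would apply Claim~\ref{clm:Test-II} to conclude that one invocation of \textbf{Test-II} with the halved parameters $\nu' = \nu/2$, $\eta' = \eta/2$, $\epsilon' = \epsilon/4$ shrinks the distance by a factor of $2$, so after $c$ rounds both approximations lie within $\delta_0 \cdot 2^{-c}$. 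Taking $c = \Theta(\log n)$ then yields the required $1/\poly(n)$ precision.

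Before each invocation I would verify that the hypothesis $\epsilon + \nu^2/2 < 1/4$ of Claim~\ref{clm:Test-I} continues to hold for the updated parameters. Since $\epsilon$ and $\nu$ both shrink monotonically from the Step $1$ values, and the initial quantity is $1/32 + 1/32 = 1/16 < 1/4$, this precondition is preserved throughout all iterations. I would also note that the existence of the independent pair $(k,\ell)$ used by \textbf{Test-II} is inherited from Step $1$ and does not need to be re-established.

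The main obstacle, and the only non-routine part of the argument, is the running-time bound. Naively one might fear that as the net resolutions $\eta', \nu'$ decrease geometrically with $c$, the cost of enumerating the nets blows up. The key observation I would exploit is that the ball radii $\delta$ defining $\cc{B}_{ij}$ and $\cc{B}_{k\ell}$ also halve in lockstep with $\eta$ and $\nu$, so the ratios $\delta/\eta$ and $\delta/\nu$ are invariants of the recursion. Since a $\gamma$-net restricted to a ball of radius $\delta$ in the $4$-dimensional space of rank-$1$ two-qubit projectors contains $O((\delta/\gamma)^4)$ points, the number of trial states examined in a single \textbf{Test-II} call is $O((\delta/\eta)^4 (\delta/\nu)^4) = O(1)$, independent of both $n$ and the iteration index $c$. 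Summing over the $O(\log n)$ rounds gives the claimed $O(\log n)$ total running time, completing the lemma.
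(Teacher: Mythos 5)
Your proposal is correct and follows essentially the same route as the paper: iterate \textbf{Test-II} with parameters halving each round, observe that the per-iteration cost depends only on the ratios $\delta/\eta$ and $\delta/\nu$ (which stay constant), and conclude that $O(\log n)$ rounds give $1/\poly(n)$ precision in $O(\log n)$ time. Your explicit check that $\epsilon + \nu^2/2 < 1/4$ is preserved under the parameter updates is a small extra verification the paper leaves implicit, but it does not change the argument.
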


\begin{proof}
We start the first iteration in Step $2$ with the parameters $\delta, \frac{\nu}{2}, \frac{\eta}{2}$ and proceed in each iteration by halving these parameters. In effect, at the $t^{th}$ iteration, the parameters used are $\frac{\delta}{2^{t-1}}, \frac{\eta}{2^t}$ and $\frac{\nu}{2^t}$. The costliest operation in executing Step $2$ involves \textbf{Test-II} being performed at each iteration. From Claim~\eqref{clm:Test-II} the $t^{th}$ iteration of \textbf{Test-II} can be executed in time expressed via the parameters $\delta_t, \nu_t, \eta_t$ as
\[
O\left( \frac{\delta_t^8}{\eta_t^4 \nu_t^4} \right)
= O\left( \frac{\delta^8}{2^{8(t-1)}} \frac{2^{4t}}{\eta^4} \frac{2^{4t}}{\nu^4} \right)
= O\left( \frac{\delta^8} {\eta_t^4 \nu_t^4} 2^{8} \right) \in O(1)
\]
where the last inclusion holds as $\delta, \eta$ and $\nu$ start as constants. With $O(\log n)$ iterations, this leads to Step $2$ being executed in $O(\log n)$ time. Considering the accuracy of the states output, it is clear that in iteration $t$ the projectors output are $\frac{\delta}{2^t}$ close to the projectors in the hidden instance. For $t = O(\log n)$ this translates to a distance of $\frac{\delta}{2^{O(\log n)}} \leq \frac{\delta}{O(n^c)}$ for some constant $c > 0$ and this in turn is written as $\frac{1}{\poly(n)}$ for $\delta \in O(1)$ and the result follows.
\end{proof}\\

\noindent \textbf{Step $3$}\\
To approximate the remaining projectors, do the following:
\begin{itemize}
	\item Pick a pair of qubits $(u, v)$ that is independent from at least one of the projectors approximated so far. 
	\item Perform Step $1$ to approximate $\Pi_{uv}^{(0)}$ to constant accuracy (if it exists). Otherwise, pick another pair of qubits.
	\item Let $(x, y)$ be independent of $(u, v)$ such that $\Pi_{xy}$ has been approximated to $\frac{1}{\poly(n)}$ accuracy. Use $\Pi_{xy}^{(0)}, \ldots, \Pi_{xy}^{(O(\log n))}$ to approximate $\Pi_{uv}$ to $\frac{1}{\poly(n)}$ accuracy as per Step $2$.
	\item Repeat for all possible independent qubit pairs\footnote{To learn a projector $\Pi_{ik}$ when $\Pi_{ij}$ has already been found, set qubit $j$ to the mixed state and choose a different $\Pi_{mn}$ to use in Steps $1$ and $2$ for improving the accuracy of $\Pi_{ik}$}.
\end{itemize}

\vspace*{-1.5em}
\subparagraph*{Dealing with higher rank projectors.} As mentioned earlier, the tests have been clearly designed assuming the presence of only rank $1$ projectors. To generalize \textbf{Test-I} for projectors of rank $> 2$, we don't stop after finding just one state $\ket{\psi}{}$ that succeeds the test. By continuing to iterate over all $2$-qubit states, we can find a constant number of states that span the forbidden subspace and then use any one of the linear algebra techniques to find a basis for that space whose dimensions would give the rank of the projector. This would also approximate the basis up to constant accuracy at the end of \textbf{Test-I} as the states that will be returned from the test can be shown as having a low distance (or high violation energy) with at least one of the non-zero components of the high rank projector. Then, repeating \textbf{Test-II} for each basis element will successfully approximate each of them to $\frac{1}{\poly(n)}$ accuracy. Note that each of these changes do not significantly affect the runtime of the algorithm or the number of trials proposed. 

Now we proceed to the proof of Theorem~\ref{thm:Q2SATa}.\\

\noindent \textbf{Proof of Theorem~\ref{thm:Q2SATa}}
As outlined previously, putting together the $3$ steps gives us the required algorithm. Consider \textbf{Test-I}$(i, j)$ contains the projector $\ket{\psi}{}$ and all the states of the form $\rho_{k\ell}^{\alpha}$ used for the test. From Claim~\ref{clm:Test-I}, we know that $\norm{\ket{\alpha}{kl} - \Pi'_{kl}} \leq \frac{1}{\sqrt{8}}$ for some $(k, l)$ and some $\alpha$. Then, any state that is output by \textbf{Test-I} should be closer to $\Pi'_{ij}$ to have a larger overlap with it. In case of $(i, j)$ being disjoint from $(k, \ell)$, there is no problem to ensure this. 

However, when no projectors independent of $(i,j)$ exist, consider another projector $(i, k)$. Now, the states used in \textbf{Test-I} would be of the form $\rho_k^\alpha$. Let the reduced density matrix on $i$ with respect to $\bk{\psi}{}$ be $\rho_i$. Then the error threshold used for any state output by \textbf{Test-I} in this case would be related to $\max_{\alpha} \Tr(\rho_i \ket{\alpha}{k} \Pi'_{ik}) \leq \Tr(\rho_i \Tr_k(\Pi'_{ik}) )$. It is possible for the latter value to be almost $0$ in the case that $\rho_i$ almost satisfies $\Pi'_{ik}$ (e.g. the product state on $i$ satisfies $\Pi'_{ik}$). This would lead to an inaccurate error threshold and affect the accuracy of the states output by \textbf{Test-I}. This explains the necessity of $H'$ not having a \emph{Star-like} configuration.

For the running time argument, from Lemmas~\ref{lem:Step1} and~\ref{lem:Step2}, the running time for finding a projector up to constant accuracy is $O(n^2)$ and to improve the accuracy to $\beta << 1$ takes $O(\log \frac{1}{\beta})$ time. Step $3$ essentially repeats the combination of (Step $1$ + Step $2$) for $n^2$ pairs of qubits and results in an overall running time of $O(n^2(n^2 + \log \frac{1}{\beta}))$. Setting $\beta < \frac{1}{n^c}$ for some constant $c$, makes the overall runtime $O(n^4 + n^2 \log n) = O(n^4)$. Similarly, the correctness also follows from combining Lemmas~\ref{lem:Step1} and~\ref{lem:Step2}. $\Box$

Theorem~\ref{thm:Q2SATa} clearly excludes the pathological cases of \emph{Star-like} configurations. These are discussed below.

Learning even one of the projectors in the case of the interaction graph being a Star\footnote{A star graph is one where no two edges of the graph are independent.} is impossible. However, the intermediate case when there is exactly one edge in the graph that is not independent of the other edges, seems to have an intermediate albeit slightly unnatural solution. In fact, to explicitly learn some projectors and then solve the instance requires the oracle to distinguish between exponentially small values. In particular, the following lemma holds.

\begin{lemma}
\label{thm:Q2SATb}
Given a $\HQSAT{2}$ problem $H' = \sum_{(u, v)} \Pi'_{uv}$ on $n$ qubits, $\epsilon \geq \frac{1}{n^\beta}$ for a constant $\beta$ and a function $f(n) \in \exp(n)$. If there is exactly one edge $(i, k)$ that does not have any independent projector, then there is an $O(n^4 + n^2 \log f(n))$ time algorithm that can find an approximation $H = \sum_{(u, v)} \Pi_{uv}$ where 
\[ \forall \; (i, j), \neq (i, k)  \norm{ \Pi'_{ij}-\Pi_{ij}} \le \frac{1}{f(n)}\]
and we can find a $2$ qubit state $\ket{\Phi}{ik}$ in $O(\epsilon^4)$ time, such that $\bra{\Phi}{} \Pi'_{ik} \ket{\Phi}{} \leq \epsilon^2$.
\end{lemma}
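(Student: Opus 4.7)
The proof has two parts, mirroring the two claims of the lemma. For the learning part, I would run the three-step algorithm of Theorem~\ref{thm:Q2SATa} restricted to edges other than $(i,k)$. By hypothesis every such edge has an independent partner; in the star-like sub-case we consider, every projector touches $i$ or $k$, so disjoint pairs of the form $(i,a)$ and $(k,b)$ with $a\neq b$ and $a,b\notin\{i,k\}$ always exist and serve as the bootstrap pair for Step~1. From there, Step~2 iterates Test-II to halve the error in $O(\log f(n))$ rounds (reaching $1/f(n)$), and Step~3 propagates this precision to every remaining non-$(i,k)$ edge by reusing an already-refined projector as the independent partner. The one subtlety is that $\Pi'_{ik}$ is in the background of every query and could in principle dominate the oracle's answer; this is avoided by choosing the test vectors in Test-I so that the reduced states they induce on $i$ and $k$ keep $\Tr(\Pi'_{ik}(\rho_i\otimes\rho_k))$ comfortably below the maximal violation that Test-I relies on. Summing the three step runtimes gives $O(n^4 + n^2\log f(n))$, matching Lemmas~\ref{lem:Step1} and~\ref{lem:Step2}.

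For the state-finding part, I would enumerate an $\epsilon$-net $\cc{T}_\epsilon$ of pure 2-qubit states, of size $O(\epsilon^{-4})$. Since $\Pi'_{ik}$ has rank at most $3$, its kernel is nontrivial and $\cc{T}_\epsilon$ contains some $\ket{\phi^*}{}$ with $\bra{\phi^*}{}\Pi'_{ik}\ket{\phi^*}{}\le\epsilon^2$. For each candidate $\ket{\phi}{ik}\in\cc{T}_\epsilon$, I would build the trial $\rho_\phi = \bk{\phi}{ik}\otimes\rho^{\star}_{\mathrm{rest}}(\phi)$, where $\rho^{\star}_{\mathrm{rest}}(\phi)$ is a product state on the $n-2$ auxiliary qubits chosen by classical optimization — using the $1/f(n)$-accurate knowledge of every non-$(i,k)$ projector obtained in the first part — to minimize $\max_{e\neq(i,k)}\Tr(\Pi'_e\rho_\phi)$. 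Querying the oracle with $\rho_\phi$: if it reports satisfaction, output $\ket{\phi}{}$; if it returns an index $e^*\neq(i,k)$, compute $v^* = \Tr(\Pi'_{e^*}\rho_\phi)$ classically and note $\bra{\phi}{}\Pi'_{ik}\ket{\phi}{}\le v^*$, outputting $\ket{\phi}{}$ whenever $v^*\le\epsilon^2$; if it returns $(i,k)$, discard and move on. The runtime is dominated by the $O(\epsilon^{-4})$ enumeration, since the per-candidate optimization and oracle query add only polynomial overhead.

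The main technical obstacle is correctness of the state-finding step: one must ensure that the minimized max other violation $m^*(\phi^*):=\max_{e\neq(i,k)}\Tr(\Pi'_e\rho^{\star}_{\phi^*})$ is itself $O(\epsilon^2)$ at the kernel-close candidate $\ket{\phi^*}{}$, for otherwise the oracle's upper bound $v^*$ is too loose to conclude $\bra{\phi^*}{}\Pi'_{ik}\ket{\phi^*}{}\le\epsilon^2$. Since the reduced states $\rho^{\phi^*}_i,\rho^{\phi^*}_k$ are determined by $\ket{\phi^*}{}$, this reduces to a per-qubit semidefinite problem: for each spoke qubit $u$, minimize the maximum of $\Tr(\Pi'_{iu}(\rho^{\phi^*}_i\otimes\sigma_u))$ and $\Tr(\Pi'_{ku}(\rho^{\phi^*}_k\otimes\sigma_u))$ over single-qubit density matrices $\sigma_u$, whose optimum equals the smallest eigenvalue of an appropriate convex combination of two partial traces. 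Using that every 2-qubit projector has rank at most $3$ and hence admits a nontrivial kernel, together with a continuity argument in $\ket{\phi^*}{}$, this eigenvalue can be bounded by $O(\epsilon^2)$, closing the correctness proof. This is the ``unnatural'' ingredient the paper flags, and also explains why the accuracy $1/f(n)$ in the first part is allowed to be as small as $\exp(-n)$: the $\epsilon^2$ signal one wants to detect can be much smaller than the ambient precision at which the other projectors are known, so approximation errors must not swamp it.
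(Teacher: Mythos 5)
Your learning phase is essentially the paper's: run Steps 1--2 of Theorem~\ref{thm:Q2SATa} on every edge other than $(i,k)$ with $O(\log f(n))$ rounds of \textbf{Test-II}, giving $O(n^4+n^2\log f(n))$. One caveat: your fix for the background term $\Pi'_{ik}$ is only asserted. Since every edge meets $i$ or $k$, the independent partner of an edge $(i,a)$ necessarily touches $k$, so \emph{every} \textbf{Test-I} trial assigns nontrivial states to both qubits of $\Pi'_{ik}$, and you have not shown that the test states can be chosen so that its violation stays below the $1-\epsilon-\tfrac{1}{2}\nu^2$ threshold used in Claim~\ref{clm:Test-I}; the paper's own sketch is equally silent here, so this is not the decisive issue.

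The genuine gap is in your state-finding step, exactly at the point you flag, and your proposed repair (rank $\le 3$ kernel plus a continuity argument bounding the per-spoke eigenvalue by $O(\epsilon^2)$) is false in general. Take $\Pi'_{iu}=\bk{0}{i}\otimes\mathbb{I}_u$, $\Pi'_{kw}=\bk{0}{k}\otimes\mathbb{I}_w$ (with $u\neq w$ spoke qubits) and $\Pi'_{ik}=\bk{11}{ik}$; this has exactly one edge, $(i,k)$, without an independent partner. For any candidate $\ket{\phi}{ik}$ with $\bra{\phi}{}\Pi'_{ik}\ket{\phi}{}\le\epsilon^2$ one has $\langle 0|\rho_i|0\rangle+\langle 0|\rho_k|0\rangle\ge 1-\epsilon^2$, so one of $\Pi'_{iu},\Pi'_{kw}$ is violated by at least $(1-\epsilon^2)/2$ \emph{regardless} of the spoke states (these violations do not depend on $\sigma_u,\sigma_w$ at all). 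Hence for every good candidate the oracle returns some $e^*\neq(i,k)$ with $v^*\approx 1/2>\epsilon^2$ (and never reports satisfaction for small $\epsilon$), your acceptance rule discards it, and the procedure terminates with no output even though states with zero $\Pi'_{ik}$-violation exist. The paper's sketch takes a different and safer route here: it does not re-optimize the spokes per candidate, but uses the $1/f(n)$-accurate learned projectors and a classical $\QSAT{2}$ solver to fix \emph{once} an assignment outside the free qubit(s) (including one of $i,k$), then scans the $\epsilon$-net only over the remaining qubit(s) and accepts when the oracle reports satisfaction, i.e.\ total energy below threshold --- a certificate that simultaneously controls the $\Pi'_{ik}$ term; moreover the paper explicitly confines this argument to the satisfiable case, whereas your version is claimed unconditionally. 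To salvage your write-up you would need either to adopt that structure or to add a hypothesis (e.g.\ satisfiability) under which a near-kernel candidate with small residual violations is guaranteed to exist in the net.
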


\begin{proofs}
For every qubit pair except $(i, k)$, use Step $1$ to find a constant approximation to the hidden projector and repeating Step $2$ for $\poly(n)$ iterations, find exponentially close approximations to the hidden projectors. Using these approximations, we can find an $n-2$-qubit (resp. $n-1$-qubit) state that almost satisfies all projectors except $\Pi'_{ik}$ following any $O(n+m)$ algorithm to solve $\QSAT{2}$~\cite{ASSZ15,dBG15}. The $n-1$ qubit state includes either $i$ or $k$ but not both in the satisfiable case. Then, iterating over all $2$-qubit (resp. $1$-qubit) states on an $\epsilon$-net, and proposing the complete $n$ qubit state to the oracle, will let us find a state that has low overlap with $\Pi'_{ik}$.
\end{proofs}

For the remaining case of the Star graph, at the present time, we do not have an algorithm to learn the projectors to any level of accuracy. This is due to the interference of the center of the star with all the projectors skewing the error thresholds used in this type of algorithm. Of course, the brute force technique to find the ground state by iterating over an $\epsilon$-net of all $n$-qubit states with at most pairwise entanglement leads to an exponential number of trials to be proposed but the power of the oracle doesn't change. Hence, an obvious trade-off between the power of the oracle and the number of trials proposed exists although both techniques currently lead to unnatural algorithmic techniques for the pathological cases.

\section{Acknowledgements}
Research was supported by the Singapore Ministry of Education and the National Research Foundation by the Tier 3 Grant MOE2012-T3-1-009, by the European Commission IST STREP project Quantum Algorithms (QALGO) 600700, the French ANR Blanc Program Contract ANR-12-BS02-005. A.B. was supported in part by the NSF Graduate Research Fellowship under grant no. 1122374 and by the NSF Alan T. Waterman award under grant no. 1249349. S.Z.'s research was supported in part by RGC of Hong Kong (Project no. CUHK419413).

\newpage
\bibliographystyle{plain}
\bibliography{ref}

\end{document}